\theoremstyle{plain}
\newtheorem{theorem}{Theorem}[section]
\newtheorem{proposition}[theorem]{Proposition}
\newtheorem{lemma}[theorem]{Lemma}
\theoremstyle{definition}
\newtheorem{definition}[theorem]{Definition}
\newtheorem{assumption}[theorem]{Assumption}
\theoremstyle{remark}
\newcommand\blfootnote[1]{%
	\begingroup
	\renewcommand\thefootnote{}\footnote{#1}%
	\addtocounter{footnote}{-1}%
	\endgroup
}
\algrenewcommand\algorithmicrequire{\textbf{Input:}}
\algrenewcommand\algorithmicensure{\textbf{Output:}}
\newtheoremstyle{TheoremNum}
{\topsep}{\topsep}              
{\itshape}                      
{}                              
{\bfseries}                     
{.}                             
{ }                             
{\thmname{#1}\thmnote{ \bfseries #3}}
\theoremstyle{TheoremNum}
\newtheorem{thmn}{Theorem}
\newcommand{\fwd}[2]{\mathcal{A}_{#1}(#2)}
\newcommand{\dfwd}[2]{\dot{\mathcal{A}}_{#1}(#2)}
\newcommand{\gaussian}[2][\vct{0}]{\mathcal{N}(#1, #2 \textbf{I})}
\newcommand\dc{\stackrel{\mathclap{\normalfont\mbox{\tiny{\textit{d.c.}}}}}{\sim}}
\newcommand{\methodname}{\emph{Dirac}}
\newcommand{\norm}[1]{\left\lVert#1\right\rVert}
\DeclareMathOperator*{\argmin}{arg\,min}
\newcommand{\myleq}[1]{\stackrel{\mathclap{\normalfont\mbox{\tiny{(#1)}}}}{\leq}}
\newcommand{\vct}[1]{\bm{#1}}
\newcommand{\mtx}[1]{\bm{#1}}
\newcommand{\diff}{\text{d}}
\newcommand{\yt}{\vct{y}_t}
\newcommand{\xo}{\vct{x}_0}
\def\code#1{\texttt{#1}}
\begin{document}
\title{DiracDiffusion: Denoising and Incremental Reconstruction\\ with Assured Data-Consistency}

\author{Zalan Fabian \quad Berk Tinaz \quad Mahdi Soltanolkotabi \\
	University of Southern California\\
	Department of Electrical and Computer Engineering\\
	{\tt\small \{zfabian,tinaz,soltanol\}@usc.edu}
}

\maketitle




\begin{abstract}
	Diffusion models have established new state of the art in a multitude of computer vision tasks, including image restoration. Diffusion-based inverse problem solvers generate reconstructions of exceptional visual quality from heavily corrupted measurements. However, in what is widely known as the perception-distortion trade-off, the price of perceptually appealing reconstructions is often paid in declined distortion metrics, such as PSNR. Distortion metrics measure faithfulness to the observation, a crucial requirement in inverse problems.  In this work, we propose a novel framework for inverse problem solving, namely we assume that the observation comes from a stochastic degradation process that gradually degrades and noises the original clean image. We learn to reverse the degradation process in order to recover the clean image. Our technique maintains consistency with the original measurement throughout the reverse process, and allows for great flexibility in trading off perceptual quality for improved distortion metrics and sampling speedup via early-stopping. We demonstrate the efficiency of our method on different high-resolution datasets and inverse problems, achieving great improvements over other state-of-the-art diffusion-based methods with respect to both perceptual and distortion metrics\footnote{Code is available at \url{https://github.com/z-fabian/dirac-diffusion}}.
\end{abstract}
\blfootnote{\textit{Published at the 41st International Conference on Machine Learning, Vienna, Austria, 2024}}
\section{Introduction}
Diffusion models (DMs) are powerful generative models capable of synthesizing samples of exceptional quality by reversing a diffusion process that gradually corrupts a clean image by adding Gaussian noise. DMs have been explored from two perspectives: Denoising Diffusion Probabilistic Models (DDPM) \citep{sohl2015deep, ho_denoising_2020} and Score-Based Models \citep{song_generative_2020, song_improved_2020}, which have been unified under a general framework of Stochastic Differential Equations (SDEs) \citep{song2020score}.  DMs have established new state of the art in image generation \citep{dhariwal_diffusion_2021, saharia2022photorealistic, ramesh2022hierarchical, rombach2022high}, audio \citep{kong2020diffwave} and video synthesis \citep{ho2022video}. Recently, there has been a push to broaden the notion of Gaussian diffusion, such as extension to other noise distributions \citep{deasy2021heavy, nachmani2021denoising, okhotin2023star}. In the context of image generation, there has been work to generalize the corruption process, such as blur diffusion \citep{lee2022progressive, hoogeboom2022blurring}, inverse heat dissipation \citep{rissanen2022generative} and arbitrary linear corruptions \citep{daras2022soft} with \citet{bansal2022cold} questioning the necessity of stochasticity in the generative process all together. However, these are general frameworks for unconditional image generation and are not readily applicable for image reconstruction. The key challenge introduced by the inverse problem setting is the strong requirement for producing final images that are consistent with the observation. This adds a significant layer of complexity that requires novel solutions both in theory and algorithm design.

\begin{figure*}[t]
	\centering
	\includegraphics[width=0.75\linewidth]{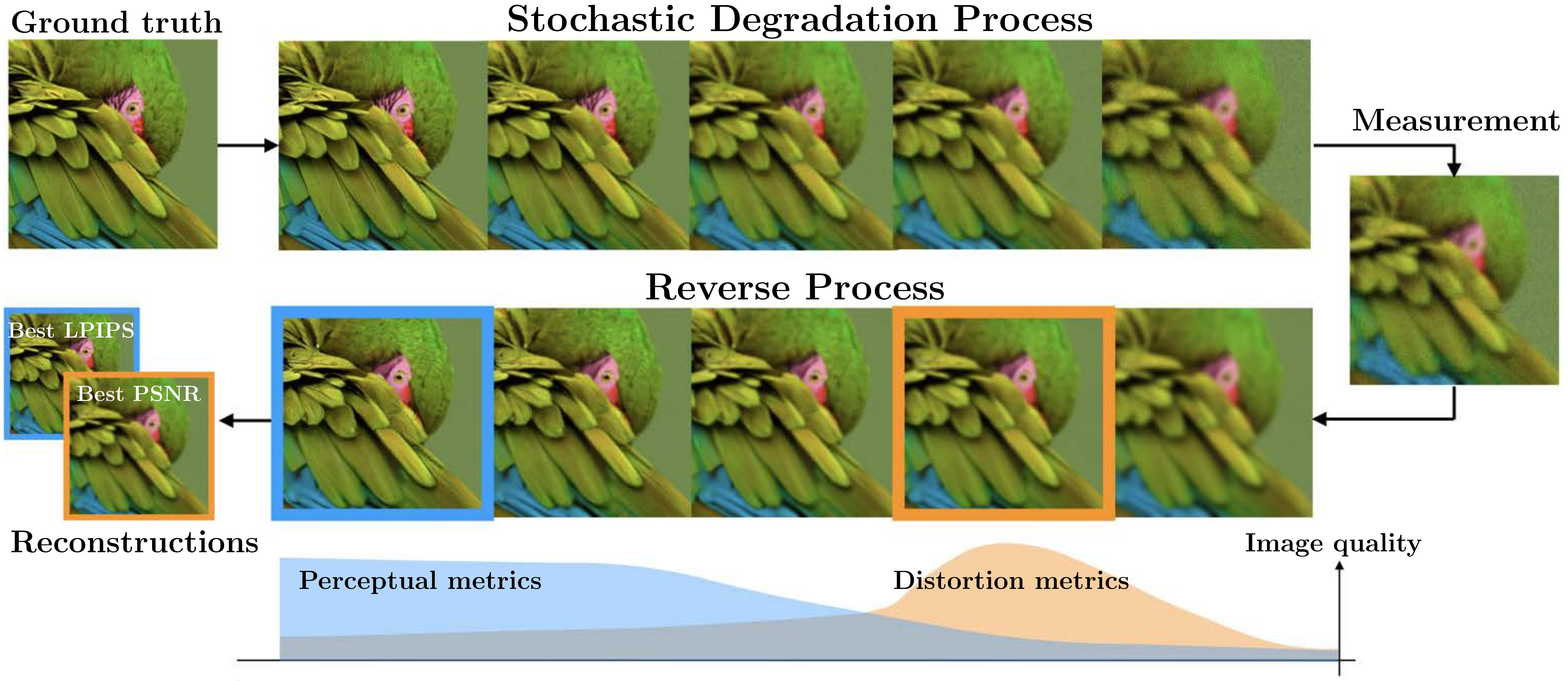}
	\captionof{figure}{ \textbf{Overview of our method:} measurement acquisition is modeled as a gradual degradation and noising of an underlying clean ground truth signal via a Stochastic Degradation Process. We reconstruct the clean image from noisy measurements by learning to reverse the degradation process.  Our technique allows for obtaining a variety of reconstructions with different perceptual quality-distortion trade-offs, all in a single sampling trajectory. }
	\label{fig:front_page}
\end{figure*}

In inverse problems, one wishes to recover a signal $\vct{x}$ from a noisy observation $\vct{y} = \fwd{}{\vct{x}} + \vct{z}$ where $\mathcal{A}$ is typically non-invertible. 
The unconditional score function learned by DMs has been successfully leveraged to solve inverse problems without any task-specific training \citep{kadkhodaie2021stochastic, jalal2021robust, saharia_image_2021} resulting in reconstructions with exceptional perceptual quality. However, these methods underperform in distortion metrics, such as PSNR and SSIM \citep{chung2022diffusion} due to the so called perception-distortion trade-off \citep{blau2018perception}.  Authors in \citet{delbracio2023inversion} observe that in their framework, the total number of restoration steps controls the perception-distortion trade-off, with less steps yielding results closer to the minimum distortion estimate. Similar observation is made in \citet{whang2022deblurring} in the context of blind image deblurring, where authors additionally propose to average multiple reconstructions for improved distortion metrics.  Authors in \citet{kawar2022denoising} report that, the amount of noise injected at each timestep controls the trade-off between reconstruction error and image quality.

Beyond image quality, a key requirement imposed on reconstructions is data consistency, that is faithfulness to the original observation. In the context of diffusion-based solvers, different methods have been proposed to enforce consistency between the generated image and the corresponding observations. These methods include alternating between a step of unconditional update and a step of projection \citep{song2021solving, chung2022score, chung2022come} or other correction techniques \citep{chung2022diffusion, chung2022improving, song2023solving} to guide the diffusion process towards data consistency. Another line of work proposes diffusion in the spectral space of the forward operator, achieving high quality reconstructions, however requires costly singular value decomposition \citep{kawar2021snips, kawar2022denoising, kawar2022jpeg}. \citet{songpseudoinverse} uses pseudo-inverse guidance to incorporate the model into the reconstruction process. All of these methods utilize a pre-trained score function learned for a standard diffusion process that simply adds Gaussian noise to clean images. 
Recently, there has been some work on extending Gaussian diffusion by incorporating the image degradation into the score-model training procedure. A recent example is \citet{welker2022driftrec} proposing adding an additional drift term to the forward SDE that pulls the iterates towards the corrupted measurement and demonstrates high quality reconstructions for JPEG compression artifact removal. A blending parametrization \citep{heitz2023iterative,delbracio2023inversion} has been proposed that defines the forward process as convex combinations between the clean image and corrupted observation. \citet{liu20232} leverages Schrödinger bridges for image restoration, a nonlinear extension of score-based models defined between degraded and clean distributions. \citet{yue2024resshift} defines a Markov chain between the distributions of high and low-resolution images in the forward process by shifting their residual for image super-resolution. Even though these methods utilize degraded-clean image pairs for training, they do not explicitly leverage the forward operator for score-model training.

In this paper, we propose a novel framework for solving inverse problems using a generalized notion of diffusion that mimics the corruption process that produced the observation. We call our method \methodname{}: \underline{D}enoising and \underline{I}ncremental \underline{R}econstruction with \underline{A}ssured data-\underline{C}onsistency.  As the forward model and noising process are directly incorporated into the framework, our method maintains data consistency throughout the reverse diffusion process, without any additional steps such as projections. Furthermore, we make the key observation that details are gradually added to the posterior mean estimates during the sampling process. This property imbues \methodname{} with great flexibility: by leveraging early-stopping we can freely trade off perceptual quality for better distortion metrics and sampling speedup or vice versa. We provide theoretical analysis on the accuracy and limitations of our method that are well-supported by empirical results. Our experiments demonstrate state-of-the-art results in terms of both perceptual and distortion metrics with fast sampling. 
\section{Background}
\textbf{Diffusion models --} DMs are generative models based on a corruption process that gradually transforms a clean image distribution $q_0$ into  a known prior distribution which is tractable, but contains no information of data. The corruption level, or \textit{severity} as we refer to it in this paper, is indexed by time $t$ and increases from $t=0$ (clean images) to $t=1$ (pure noise). The typical corruption process consists of adding Gaussian noise of increasing magnitude to clean images, that is $q_t(\vct{x}_t| \vct{x}_0) \sim \mathcal{N}(\vct{x}_0, \sigma_t^2 \mathbf{I})$, where $\vct{x}_0\sim q_0$ is a clean image, and $\vct{x}_t$ is the corrupted image at time $t$. By learning to reverse the corruption process, one can generate samples from $q_0$ by sampling from a simple noise distribution and running the learned reverse diffusion process from $t=1$ to $t=0$. 

DMs have been explored along two seemingly different trajectories. Score-Based Models \citep{song_generative_2020, song_improved_2020} attempt to learn the gradient of the log likelihood and use Langevin dynamics for sampling, whereas DDPM \citep{sohl2015deep, ho_denoising_2020} adopts a variational inference interpretation. More recently, a unified framework based on SDEs \cite{song2020score} has been proposed. Namely, both Score-Based Models and DDPM can be expressed via a Forward SDE in the form $\diff\vct{x} = f(\vct{x} , t) \diff t + g(t) \diff \vct{w}$
with different choices of $f$ and $g$. Here $\vct{w} $ denotes the standard Wiener process. This SDE is reversible \citep{anderson1982reverse}, and the Reverse SDE can be written as
\begin{equation}\label{eq:reverse_sde}
	\diff \vct{x}  = [f(\vct{x} , t) - g^2(t) \nabla_{\vct{x} } \log q_t(\vct{x} )] \diff t + g(t) \diff \bar{\vct{w} },
\end{equation}
where $ \bar{\vct{w}}$ is the standard Wiener process, where time flows in the reverse direction. The true score $ \nabla_{\vct{x} } \log q_t(\vct{x} )$ is approximated by a neural network $s_{\vct{\theta}}(\vct{x}_t, t)$ from the tractable conditional distribution $q_t(\vct{x}_t| \vct{x}_0)$ by minimizing 
\begin{equation}\label{eq:score_matching}
\mathbb{E}_ {t\sim U[0, 1], (\vct{x}_0, \vct{x}_t)}\left[w(t) \left\|  s_{\vct{\theta}}(\vct{x}_t, t)  - \nabla_{\vct{x}_t} q_t(\vct{x}_t| \vct{x}_0) \right\|^2\right],
\end{equation}
where $(\vct{x}_0, \vct{x}_t) \sim q_0(\vct{x}_0) q_t(\vct{x}_t| \vct{x}_0)$ and $w(t)$ is a weighting function. 

\textbf{Diffusion Models for Inverse problems --}  Our goal is to solve a noisy inverse problem 
\begin{equation} \label{eq:inv_prob}
	\vct{\tilde{y}} = \mathcal{A}(\vct{x}_0) + \vct{z}, ~ \vct{z} \sim \gaussian{\sigma^2},
\end{equation}
with $	\vct{\tilde{y}}, \vct{x}_0~\in \mathbb{R}^n$ and $\mathcal{A}: \mathbb{R}^n \rightarrow  \mathbb{R}^n$. 
That is, we are interested in solving a reconstruction problem, where we observe a measurement $\vct{\tilde{y}}$ that is known to be produced by applying a non-invertible mapping $\mathcal{A}$ to a ground truth signal $\vct{x}_0$ and is corrupted by additive noise $\vct{z}$. We refer to $\mathcal{A}$ as the degradation, and $\mathcal{A}(\vct{x}_0)$ as a degraded signal. Our goal is to recover $\vct{x}_0$ as faithfully as possible, which can be thought of as generating samples from the posterior distribution $q(\vct{x}_0| 	\vct{\tilde{y}})$. Diffusion models have emerged as useful priors enabling sampling from the posterior based on \eqref{eq:reverse_sde}. Using Bayes rule, the score of the posterior can be written as 
$
	\nabla_{\vct{x}} \log q_t(\vct{x}|	\vct{\tilde{y}}) = \nabla_{\vct{x}} \log q_t(\vct{x}) + \nabla_{\vct{x}} \log q_t(	\vct{\tilde{y}} | \vct{x}),
$
where the first term can be approximated using score-matching as in \eqref{eq:score_matching}. On the other hand, the second term cannot be expressed in closed-form in general, and therefore a flurry of activity emerged recently to circumvent computing the likelihood directly. 
\section{Method}
In this work, we propose a novel perspective on solving ill-posed inverse problems. In particular, we assume that our noisy observation $\vct{\tilde{y}}$ results from a process that gradually applies more and more severe degradations to an underlying clean signal. 
\subsection{Degradation severity}
To define severity more rigorously, we appeal to the intuition that given two noiseless, degraded signals $\vct{y}$  and $\vct{y}^+$  of a clean signal $\vct{x}_0$, then  $\vct{y}^+$ is corrupted by a more severe degradation than $\vct{y}$, if $\vct{y}$ contains all the information necessary to find $\vct{y}^+$ without knowing $\vct{x}_0$.
\begin{definition}[Severity of degradations]
	A mapping $ \mathcal{A}_+:\mathbb{R}^n  \rightarrow \mathbb{R}^n $ is a \textit{more severe degradation than}    $ \mathcal{A}:\mathbb{R}^n  \rightarrow \mathbb{R}^n $ if there exists a surjective mapping $ \mathcal{G}_{\mathcal{A} \rightarrow \mathcal{A}_+}:Image( \mathcal{A})  \rightarrow Image( \mathcal{A}_+) $. That is,
	\begin{equation*}
		\mathcal{A}_+(\vct{x}_0) =  \mathcal{G}_{\mathcal{A} \rightarrow \mathcal{A}_+}(\mathcal{A}(\vct{x}_0)) ~~\forall \vct{x}_0\in dom(\mathcal{A}).
	\end{equation*}
	We call $\mathcal{G}_{\mathcal{A} \rightarrow \mathcal{A}_+}$ the \textit{forward degradation transition function} from $\mathcal{A}$ to $\mathcal{A}_+$.
\end{definition}	
Take image inpainting as an example (Fig.  \ref{fig:severity_plot}) and let $\mathcal{A}_t$ denote a masking operator that sets pixels to $0$ within a centered box, where the box side length is $l(t) = t\cdot W$, where $W$ is the image width and $t\in[0, 1]$. Assume that we have an observation $\vct{y}_{t'} = \mathcal{A}_{t'}(\vct{x}_0)$ which is a degradation of a clean image $\vct{x}_0$ where a small center square with side length $l(t')$ is masked out. Given $\vct{y}_{t'}$, without having access to the complete clean image, we can find any other masked version of $\vct{x}_0$ where a box with at least side length $l(t')$ is masked out. Therefore every other masking operator $ \mathcal{A}_{t''} ,~ t' < t''$ is a more severe degradation than $ \mathcal{A}_{t'}$. The forward degradation transition function $\mathcal{G}_{\mathcal{A}_{t'} \rightarrow \mathcal{A}_{t''}}$ in this case is simply $ \mathcal{A}_{t''}$. We also note here, that the \textit{reverse degradation transition function} $\mathcal{H}_{\mathcal{A}_{t''} \rightarrow \mathcal{A}_{t'}}$ that recovers $ \mathcal{A}_{t'}(\vct{x}_0)$ from a more severe degradation $ \mathcal{A}_{t''}(\vct{x}_0)$ for any $\vct{x}_0$ does not exist in general.

\begin{figure}[t]
	\centering
	\includegraphics[width=0.8\linewidth]{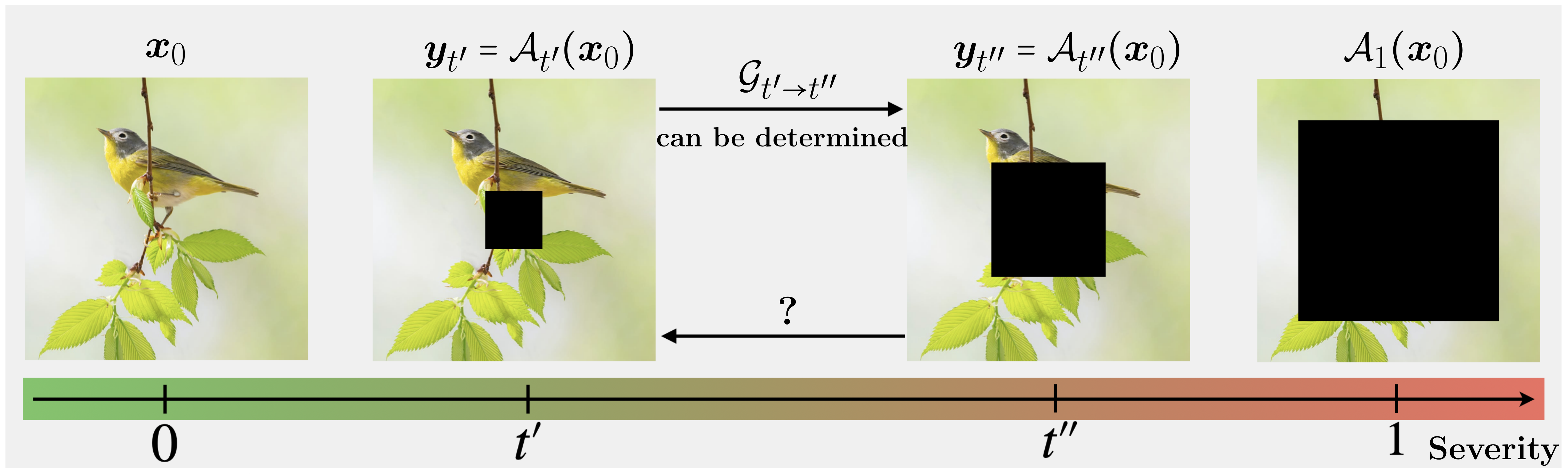}
	\caption{ Severity of degradations: We can always find a more degraded image $\vct{y}_{t''}$ from a less degraded version of the same clean image $\vct{y}_{t'}$ via the forward degradation transition function $\mathcal{G}_{t'\rightarrow t''}$, but not vice versa. }
	\label{fig:severity_plot}
\end{figure}

\subsection{Deterministic and stochastic degradation processes}
Using this novel notion of degradation severity, we can define a deterministic degradation process that gradually removes information from the clean signal via more and more severe degradations.
\begin{definition}[Deterministic degradation process]
	A \textit{deterministic degradation process} is a differentiable mapping $ \mathcal{A}:[0, 1] \times \mathbb{R}^n  \rightarrow \mathbb{R}^n$ that has the following properties:
	\begin{enumerate}
		\item  \textit{Diminishing severity:} $ \mathcal{A}(0, \vct{x}) = \vct{x}$
		\item \textit{Monotonically degrading:} $\forall t'\in[0,1)$ and $t'' \in (t', 1]$ $\mathcal{A}(t'', \cdot)$ is a more severe degradation than $\mathcal{A}(t', \cdot)$.
	\end{enumerate}
\end{definition}
	We use the shorthand  $\mathcal{A}(t, \cdot) = \fwd{t}{\cdot}$ and $\mathcal{G}_{  \mathcal{A}_{t'} \rightarrow \mathcal{A}_{t''} } =\mathcal{G}_{t' \rightarrow t''}$ for the underlying forward degradation transition functions for all $t' < t''$.  Our deterministic degradation process starts from a clean signal $\vct{x}_0$ at time $t=0$ and applies degradations with increasing severity over time. If we choose  $\mathcal{A}(1, \cdot) = \mathbf{0}$, then all information in the original signal is destroyed over the degradation process. One can sample easily from the \textit{forward process}, that is the process that evolves forward in time, starting from a clean image $\vct{x}_0$ at $t=0$. A sample from time $t$ can be computed directly as $\vct{y}_t = \fwd{t}{\vct{x}_0}$.


In order to account for measurement noise, one can combine the deterministic degradation process with a stochastic noising process that gradually adds Gaussian noise to the degraded measurements.
\begin{definition}[Stochastic degradation process (SDP)]\label{def:sdp}
	$\vct{y}_t = \fwd{t}{\vct{x}_0} + \vct{z}_t, ~ \vct{z}_t \sim \gaussian{\sigma_t^2}$ is a \textit{stochastic degradation process} if $\mathcal{A}_t$ is a deterministic degradation process, $t \in [0, 1]$, and $\vct{x}_0\sim q_0(\vct{x}_0)$ is a sample from the clean data distribution. We denote the distribution of $\vct{y}_t$ as $q_t(\vct{y}_t) \sim \mathcal{N}(\fwd{t}{\vct{x}_0}, \sigma_t^2 \textbf{I})$.
\end{definition}
A key contribution of our work is looking at a noisy, degraded signal as a sample from the forward process of an underlying SDP, and considering the reconstruction problem as running the reverse process of the SDP backwards in time in order to recover the clean sample. Recent works on generative frameworks that redefine the standard Gaussian diffusion process fit into our formulation naturally. In particular, Soft Diffusion \citep{daras2022soft} uses a stochastic degradation process with linear forward model, that is $\mathcal{A}_t(\cdot) = \mtx{A}_t$,  without an assumption on the monotonicity of the degradation. The requirement for monotonicity does not necessarily arise in image generation, as there is no notion of data consistency. Cold Diffusion \citep{bansal2022cold} on the other hand uses a deterministic degradation process (without the requirement on monotonicity) between arbitrary distributions to achieve image generation.

Our formulation interpolates between degraded and clean image distributions through a severity parametrization that requires an analytical form of $\mathcal{A}(\cdot)$. An alternative approach \citep{delbracio2023inversion, heitz2023iterative} is to parametrize intermediate distributions as convex combinations of corresponding pairs of noisy and clean samples as $\vct{y}_t = t \tilde{\vct{y}} + (1-t) \vct{x}_0, ~~ t\in[0,1]$, also referred to as \textit{blending} \citep{heitz2023iterative}. In our framework, this formulation can be thought of as a deterministic degradation process $\mathcal{A}_t(\vct{x}_0;  \tilde{\vct{y}} ) = t \tilde{\vct{y}} + (1-t) \vct{x}_0$ conditioned on $\tilde{\vct{y}}$. However, as the underlying degradation operator is not leveraged in this formulation, we cannot develop theoretical guarantees on data consistency of the reconstruction. Moreover, we observe improved noise robustness using the proposed SDP formulation. For a more detailed comparison we refer the reader to Appendix \ref{apx:blending}.


\subsection{SDP as a stochastic differential equation}
We can formulate the evolution of our degraded and noisy measurements $\vct{y}_t$ as an SDE:
\begin{equation*}
	\text{d}\vct{y}_t = \dfwd{t}{\vct{x}_0} \text{d}t + \sqrt{\frac{\text{d}}{\text{d}t}\sigma_t^2}\text{d}w,
\end{equation*}
where we use the notation $\dfwd{t}{\cdot}$ to indicate derivative with respect to time $t$.
This is an example of an Itô-SDE, and for a fixed $\vct{x}_0$ the above process is reversible, where the reverse diffusion process is given by
\begin{equation*}
	\text{d}\vct{y}_t = \left(\dfwd{t}{\vct{x}_0} \text{d}t -\left(\frac{\text{d}}{\text{d}t}\sigma_t^2\right) \nabla_{\vct{y}_t} \log q_t(\vct{y}_t) \right) \text{d}t
	 +  \sqrt{\frac{\text{d}}{\text{d}t}\sigma_t^2} \text{d}\bar{\vct{w}}.
\end{equation*}
One would solve the above SDE by discretizing it (for example Euler-Maruyama), approximating differentials with finite differences: 
\begin{equation} \label{eq:update}
	\vct{y}_{t-\Delta t} =  \vct{y}_t + \underbrace{\fwd{t-\Delta t}{\vct{x}_0} -  \fwd{t}{\vct{x}_0}}_\text{incremental reconstruction} 
	- \underbrace{(\sigma_{t-\Delta_t}^2 - \sigma_{t}^2) \nabla_{\vct{y}_t}\log q_t(\vct{y}_t)}_\text{denoising} + \sqrt{\sigma_{t}^2 - \sigma_{t-\Delta_t}^2} \vct{z},
\end{equation} 
where $\vct{z} \sim \gaussian{}$. The update in \eqref{eq:update} lends itself to an interesting interpretation. One can look at it as the combination of a small, incremental reconstruction and denoising steps. In particular, assume that $\small{\vct{y}_t = \fwd{t}{\vct{x}_0} + \vct{z}_t}$ and let 
\begin{equation}\label{eq:R}
	\mathcal{R}(t, \Delta t; \vct{x}_0) := \fwd{t-\Delta t}{\vct{x}_0} -  \fwd{t}{\vct{x}_0}.
\end{equation}

Then,  the first term $\vct{y}_t + \mathcal{R}(t, \Delta t; \vct{x}_0) =  \fwd{t-\Delta t}{\vct{x}_0} + \vct{z}_t$ will reverse a $\Delta t$ step of the deterministic degradation process, equivalent in effect to the reverse degradation transition function $\mathcal{H}_{t\rightarrow t-\Delta t}$. The second term is analogous to a denoising step in standard diffusion, where a slightly less noisy version of the image is predicted. However, before we can simulate the reverse SDE in \eqref{eq:update} to recover $\vct{x}_0$, we face two obstacles.
First, we do not know the score of $q_t(\vct{y}_t)$. This is commonly tackled by learning a noise-conditioned score network that matches $\log  q_t(\vct{y}_t|\vct{x}_0)$ which we can easily compute. We are also going to follow this path. Second, we do not know $ \fwd{t-\Delta t}{\vct{x}_0}$ and $\fwd{t}{\vct{x}_0}$ for the incremental reconstruction step, since $\vct{x}_0$ is unknown to us when reversing the degradation process.

\subsection{Denoising - learning a score network}
To run the reverse SDE, we need the score of the noisy, degraded distribution $\nabla_{\vct{y}_t} \log  q_t(\vct{y}_t)$, which is intractable. However, we can use the denoising score matching framework to approximate the score. In particular, instead of the true score, we can easily compute the score for the conditional distribution, when the clean image $\vct{x}_0$ is given as
$
	\nabla_{\vct{y}_t}\log q_t(\vct{y}_t | \vct{x}_0) = \frac{ \fwd{t}{\vct{x}_0} - \vct{y}_t}{\sigma_t^2}.
$
During training, we have access to clean images $\vct{x}_0$ and can generate any degraded, noisy image $\vct{y}_t$ using our SDP formulation $\vct{y}_t = \fwd{t}{\vct{x}_0} + \vct{z}_t$. Thus, we learn an estimator of the conditional score function $s_{\vct{\theta}}(\vct{y}_t, t)$ by minimizing
\begin{equation} \label{eq:loss}
		\mathcal{L}_t({\vct{\theta}}) = \mathbb{E}_{(\vct{x}_0, \vct{y}_t)} \left[ \left\| s_{\vct{\theta}}(\vct{y}_t, t) -  \frac{ \fwd{t}{\vct{x}_0} - \vct{y}_t}{\sigma_t^2}\right\|^2\right],
\end{equation}
where $(\vct{x}_0, \vct{y}_t)\sim q_0(\vct{x}_0) q_t(\vct{y}_t | \vct{x}_0)$. One can show that the well-known result of \citet{vincent2011connection} applies to our SDP formulation, and thus by minimizing the objective in \eqref{eq:loss}, we can learn the score  $\nabla_{\vct{y}_t} \log  q_t(\vct{y}_t)$ (see details in Appendix \ref{apx:dsm}).

We parameterize the score network as
\begin{equation}\label{eq:parametrization}
	s_{\vct{\theta}}(\vct{y}_t, t) =  \frac{ \fwd{t}{\Phi_{\vct{\theta}}(\vct{y}_t, t)} - \vct{y}_t}{\sigma_t^2},
\end{equation}
that is given a noisy and degraded image as input,the model predicts the underlying clean image $\vct{x}_0$.  Other parametrizations are also possible, such as predicting $\vct{z}_t$ or (equivalently) predicting $ \fwd{t}{\vct{x}_0}$. However, as pointed out in \citet{daras2022soft}, this might lead to learning the image distribution only locally, around degraded images. Furthermore, in order to estimate the incremental reconstruction $\mathcal{R}(t, \Delta t; \vct{x}_0)$, we not only need to estimate $ \fwd{t}{\vct{x}_0}$, but other functions of $\vct{x}_0$, and thus estimating $\vct{x}_0$ directly gives us more flexibility.
Rewriting \eqref{eq:loss} with the new parametrization leads to
\begin{equation} \label{eq:loss_final}
		\mathcal{L}({\vct{\theta}}) = \mathbb{E}_ {t, (\vct{x}_0, \vct{y}_t)}\left[w(t) \left\|  \fwd{t}{\Phi_{\vct{\theta}}(\vct{y}_t, t)}  -  \fwd{t}{\vct{x}_0} \right\|^2\right],
\end{equation}
where ${t\sim U[0,1], (\vct{x}_0, \vct{y}_t)\sim q_0(\vct{x}_0) q_t(\vct{y}_t | \vct{x}_0)}$ and typical choices in the diffusion literature for the weights $w(t)$ are $1$ or $1/\sigma_t^2$. Intuitively, the neural network receives a noisy, degraded image, along with the degradation severity, and outputs a prediction $\hat{\vct{x}}_0(\vct{y}_t) = \Phi_{\vct{\theta}}(\vct{y}_t, t)$ such that the \textit{degraded} ground truth $\fwd{t}{\vct{x}_0}$ and the \textit{degraded} prediction $\fwd{t}{\hat{\vct{x}}_0(\vct{y}_t)}$ are consistent.

\subsection{Incremental reconstructions \label{sec:inc_rec}}
Given an estimator of the score, we still need to approximate $\mathcal{R}(t, \Delta t; \vct{x}_0)$ from \eqref{eq:R} in order to run the reverse SDE in \eqref{eq:update}. That is we have to estimate \textit{how the degraded image changes if we slightly decrease the degradation severity}. As we parameterize our score network in \eqref{eq:parametrization} to learn a representation of the clean image manifold directly, we can estimate the incremental reconstruction term as
\begin{equation}\label{eq:R_hat}
	\hat{\mathcal{R}}(t, \Delta t; \vct{y}_t) = \fwd{t-\Delta t}{\Phi_{\vct{\theta}}(\vct{y}_t, t)} -  \fwd{t}{\Phi_{\vct{\theta}}(\vct{y}_t, t)}.
\end{equation} 
One may consider this a \textit{look-ahead method} (see alternative formulations in Appendix \ref{apx:inc_rec_approx}), since we use  $\vct{y}_t$ with degradation severity $t$ to predict a less severe degradation of the clean image "ahead" in the reverse process. This becomes more obvious when we note, that our score network already learns to predict $\fwd{t}{\vct{x}_0}$ given $\vct{y}_t$ due to the training loss in \eqref{eq:loss_final}. However, even if we learn the true score perfectly via  \eqref{eq:loss_final}, there is no guarantee that $\fwd{t-\Delta t}{\vct{x}_0} \approx \fwd{t-\Delta t}{\Phi_{\vct{\theta}}(\vct{y}_t, t)}$. The following result provides an upper bound on the approximation error.
\begin{theorem}\label{thm:basic_error}
	Let  $\hat{\mathcal{R}}(t, \Delta t; \vct{y}_t)$ from \eqref{eq:R_hat} denote our estimate of the incremental reconstruction, where $\Phi_{\vct{\theta}}(\vct{y}_t, t)$ is trained on the loss in \eqref{eq:loss_final}. Let  $\mathcal{R}^*(t, \Delta t; \vct{y}_t) = \mathbb{E}[	\mathcal{R}(t, \Delta t; \vct{x}_0) | \vct{y}_t]$ denote the MMSE estimator of $\mathcal{R}(t, \Delta t; \vct{x}_0)$. Assume, that the degradation process is smooth such that $\| \fwd{t}{\vct{x}} - \fwd{t}{\vct{x}'} \| \leq L_x^{(t)} \|\vct{x}-\vct{x}'\|, ~\forall \vct{x}, \vct{x}' \in \mathbb{R}^n$ and  $\| \fwd{t}{\vct{x}} - \fwd{t'}{\vct{x}} \| \leq L_t |t -t'|, ~\forall t, t' \in [0, 1],~ \forall \vct{x}\in \mathbb{R}^n$. Further assume that the clean images have bounded entries $\vct{x}_0[i] \leq B, ~\forall i \in (1, 2, ..., n)$ and that the error in our score network is bounded by $\| s_{{\vct{\theta}}}(\vct{y}_t, t) -  \nabla_{\vct{y}_t}\log q_t(\vct{y}_t) \| \leq \frac{\epsilon_t}{\sigma_t^2}, ~\forall t \in [0,1]$. Then, 
	\begin{equation*}
		\|\hat{\mathcal{R}}(t, \Delta t; \vct{y}_t) - \mathcal{R}^*(t, \Delta t; \vct{y}_t)\| \leq 
		\underbrace{(L_x^{(t)} + L_x^{(t - \Delta t)})}_\text{degr. smoothness} \underbrace{\sqrt{n} B }_\text{data} + \underbrace{2 L_t}_\text{scheduling} \underbrace{\Delta t}_\text{algorithm} +  \underbrace{2\epsilon_t}_\text{optimization}.
	\end{equation*}
\end{theorem}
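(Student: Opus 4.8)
The plan is to reduce the whole estimate to a comparison, at a single severity, between the network's degraded prediction $\fwd{t}{\hat{\vct{x}}_0}$ (with $\hat{\vct{x}}_0 = \Phi_{\vct{\theta}}(\vct{y}_t,t)$) and the posterior-mean degraded signal $\mathbb{E}[\fwd{t}{\vct{x}_0}\mid \vct{y}_t]$, and then to transport that control across the $\Delta t$ gap to the look-ahead severity $t-\Delta t$ using the two smoothness hypotheses. The enabling lemma is a Tweedie/denoising identity. From the parametrization \eqref{eq:parametrization} we have $\fwd{t}{\hat{\vct{x}}_0} = \vct{y}_t + \sigma_t^2 s_{\vct{\theta}}(\vct{y}_t,t)$, while the same conditional-score relation that underlies \eqref{eq:loss} and the result of \citet{vincent2011connection} gives $\mathbb{E}[\fwd{t}{\vct{x}_0}\mid \vct{y}_t] = \vct{y}_t + \sigma_t^2 \nabla_{\vct{y}_t}\log q_t(\vct{y}_t)$. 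Subtracting these and invoking the score-error hypothesis yields $\|\fwd{t}{\hat{\vct{x}}_0} - \mathbb{E}[\fwd{t}{\vct{x}_0}\mid \vct{y}_t]\| \le \sigma_t^2 \cdot \epsilon_t/\sigma_t^2 = \epsilon_t$. This is the sole route by which the optimization error $\epsilon_t$ enters, and it only holds at the severity $t$ at which the network is queried.

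Next I would write, using that $\hat{\mathcal{R}}$ is a deterministic function of $\vct{y}_t$ (so $\hat{\mathcal{R}} = \mathbb{E}[\hat{\mathcal{R}}\mid \vct{y}_t]$) together with the definition $\mathcal{R}^* = \mathbb{E}[\mathcal{R}(t,\Delta t;\vct{x}_0)\mid \vct{y}_t]$, the decomposition
\begin{equation*}
\hat{\mathcal{R}} - \mathcal{R}^* = \big(\fwd{t-\Delta t}{\hat{\vct{x}}_0} - \mathbb{E}[\fwd{t-\Delta t}{\vct{x}_0}\mid \vct{y}_t]\big) - \big(\fwd{t}{\hat{\vct{x}}_0} - \mathbb{E}[\fwd{t}{\vct{x}_0}\mid \vct{y}_t]\big).
\end{equation*}
The second group is bounded by $\epsilon_t$ directly from the lemma above. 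For the first group, sitting at the look-ahead severity $t-\Delta t$, I would apply the triangle inequality through severity-$t$ pivots and through the oracle reconstruction evaluated at the true clean image. Three ingredients then surface additively: bounding the input mismatch via the $x$-smoothness constants $L_x^{(t)}$ and $L_x^{(t-\Delta t)}$ together with $\|\hat{\vct{x}}_0 - \vct{x}_0\| \le \sqrt{n}B$ (the bounded-entry assumption applied to the clean image and the prediction, legitimately pulled inside $\mathbb{E}[\cdot\mid\vct{y}_t]$ by conditional Jensen since $\hat{\vct{x}}_0$ is $\vct{y}_t$-measurable) gives the $(L_x^{(t)}+L_x^{(t-\Delta t)})\sqrt{n}B$ term; invoking the $t$-smoothness constant $L_t$ over the step $\Delta t$ twice, once for $\fwd{\cdot}{\hat{\vct{x}}_0}$ and once inside the posterior expectation, gives $2L_t\Delta t$; and re-using the severity-$t$ Tweedie bound gives the remaining $\epsilon_t$. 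Collecting these and adding the $\epsilon_t$ from the second group yields the stated right-hand side.

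The main obstacle I expect is precisely this look-ahead severity. No Tweedie-type identity is available at $t-\Delta t$ for an input sampled at severity $t$: the network is supervised only to match $\fwd{t}{\vct{x}_0}$ through \eqref{eq:loss}, so the extrapolated prediction $\fwd{t-\Delta t}{\hat{\vct{x}}_0}$ cannot be tied to its posterior mean $\mathbb{E}[\fwd{t-\Delta t}{\vct{x}_0}\mid \vct{y}_t]$ by the score relation. The entire difficulty is therefore transporting the severity-$t$ control to severity $t-\Delta t$, which is exactly why both smoothness assumptions (the spatial $L_x^{(\cdot)}$ and the temporal $L_t$) and the data magnitude bound $\sqrt{n}B$ are forced to appear. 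The two bookkeeping points that require care are verifying that pulling the deterministic $\hat{\vct{x}}_0$ inside the conditional expectation and applying Jensen is valid, and choosing the pivots so that each assumption contributes cleanly and the three error sources add rather than interact.
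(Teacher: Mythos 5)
Your opening moves coincide with the paper's own proof: the parametrization-plus-Tweedie identity giving $\|\fwd{t}{\hat{\vct{x}}_0} - \mathbb{E}[\fwd{t}{\vct{x}_0}\mid\vct{y}_t]\|\le\epsilon_t$ (with $\hat{\vct{x}}_0=\Phi_{\vct{\theta}}(\vct{y}_t,t)$), and the same two-group decomposition of $\hat{\mathcal{R}}-\mathcal{R}^*$, with the second group absorbing one $\epsilon_t$. The gap is in how you generate the $(L_x^{(t)}+L_x^{(t-\Delta t)})\sqrt{n}B$ term. You pivot through the oracle $\fwd{\cdot}{\vct{x}_0}$ and invoke $\|\hat{\vct{x}}_0-\vct{x}_0\|\le\sqrt{n}B$, explicitly "applying the bounded-entry assumption to the clean image \emph{and the prediction}." But the hypothesis bounds only the entries of clean images $\vct{x}_0\sim q_0$; nothing in the theorem bounds the entries of the network output $\hat{\vct{x}}_0$. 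The only control the hypotheses give over the network is the score-error bound, and that constrains the \emph{degraded} prediction $\fwd{t}{\hat{\vct{x}}_0}$ at severity $t$; since $\mathcal{A}_t$ is non-invertible, no bound on $\hat{\vct{x}}_0$ itself --- hence none on $\|\hat{\vct{x}}_0-\vct{x}_0\|$ --- follows from it. The step that produces your data term is therefore unjustified as stated.

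The paper's proof avoids ever comparing anything in clean-image space against $\vct{x}_0$ or $\hat{\vct{x}}_0$. Its pivot is the posterior mean: it inserts $\fwd{t'}{\mathbb{E}[\vct{x}_0\mid\vct{y}_t]}$ (at $t'=t-\Delta t$ and $t'=t$) and controls the resulting Jensen gaps via a lemma, $\|\mathbb{E}[\fwd{t'}{\vct{x}_0}\mid\vct{y}_t]-\fwd{t'}{\mathbb{E}[\vct{x}_0\mid\vct{y}_t]}\|\le L_x^{(t')}\sqrt{n}B$, whose proof uses Jensen twice, the spatial Lipschitz bound, and then the trick you are missing: the MMSE property of the conditional mean, $\int\|\vct{x}_0-\mathbb{E}[\vct{x}_0\mid\vct{y}_t]\|^2\,p(\vct{x}_0\mid\vct{y}_t)\,d\vct{x}_0\le\int\|\vct{x}_0\|^2\,p(\vct{x}_0\mid\vct{y}_t)\,d\vct{x}_0\le nB^2$, so only the clean-image entry bound is ever needed. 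The full chain then inserts $\fwd{t}{\hat{\vct{x}}_0}$ and $\fwd{t}{\mathbb{E}[\vct{x}_0\mid\vct{y}_t]}$ as the severity-$t$ pivots (two temporal hops, giving $2L_t\Delta t$) and closes with the score-error bound plus the Jensen gap at $t$, yielding $\epsilon_t + L_x^{(t)}\sqrt{n}B$. A further sign that your mechanism is not the right one: if your inequality $\|\hat{\vct{x}}_0-\vct{x}_0\|\le\sqrt{n}B$ were available, conditional Jensen plus spatial Lipschitzness would bound your first group by $L_x^{(t-\Delta t)}\sqrt{n}B$ outright, so your total would be $L_x^{(t-\Delta t)}\sqrt{n}B+\epsilon_t$ and the $2L_t\Delta t$ and second $\epsilon_t$ you claim will "surface additively" would never enter. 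The additive three-term structure of the theorem comes from the posterior-mean pivot and the Jensen-gap lemma, not from the oracle pivot you chose.
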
 
The first term in the upper bound suggests that smoother degradations are easier to reconstruct accurately. The second term indicates two crucial points: (1) sharp variations in the degradation with respect to time leads to potentially large estimation error and (2) the error can be controlled by choosing a small enough step size in the reverse process. Scheduling of the degradation over time is a design parameter, and Theorem \ref{thm:basic_error} suggests that sharp changes with respect to $t$ should be avoided. Finally, the error grows with less accurate score estimation, however with large enough network capacity, this term can be driven close to $0$.

The main contributor to the error in Theorem \ref{thm:basic_error} stems from the fact that consistency under less severe degradations, that is $ \fwd{t-\Delta t}{\Phi_{\vct{\theta}}(\vct{y}_t, t)} \approx  \fwd{t-\Delta t}{\vct{x}_0}$, is not enforced by the loss in \eqref{eq:loss_final}. To this end, we propose a novel loss function, the \textit{incremental reconstruction loss}, that combines learning to denoise and reconstruct simultaneously:
 \begin{equation} \label{eq:loss_irn}
 	\mathcal{L}_{IR}(\Delta t, {\vct{\theta}}) = 
 	 \mathbb{E}_ {t, (\vct{x}_0, \vct{y}_t)}\left[w(t) \left\|  \fwd{\tau}{\Phi_{\vct{\theta}}(\vct{y}_t, t)}  -  \fwd{\tau}{\vct{x}_0} \right\|^2\right],
 \end{equation}
where $\tau = \text{max}(t-\Delta t, 0), ~ t\sim U[0,1],$ $(\vct{x}_0, \vct{y}_t)\sim q_0(\vct{x}_0) q_t(\vct{y}_t | \vct{x}_0)$.
It is clear, that minimizing this loss directly improves our estimate of the incremental reconstruction in \eqref{eq:R_hat}. We find that if $\Phi_{\vct{\theta}}$ has large enough capacity, minimizing the incremental reconstruction loss in \eqref{eq:loss_irn} also implies minimizing \eqref{eq:loss_final}, and thus the true score is learned (denoising is achieved). Furthermore, we show that \eqref{eq:loss_irn} is an upper bound to \eqref{eq:loss_final} (Appendix \ref{apx:irn_loss}). By minimizing \eqref{eq:loss_irn}, the model learns not only to denoise, but also to perform small, incremental reconstructions of the degraded image such that $\fwd{t-\Delta t}{\Phi_{\vct{\theta}}(\vct{y}_t, t)} \approx  \fwd{t-\Delta t}{\vct{x}_0}$. There is however a trade-off between incremental reconstruction performance and learning the score: we are optimizing an upper bound to \eqref{eq:loss_final} and thus it is possible that the score estimation is less accurate. We expect incremental reconstruction loss to work best in scenarios where the degradation may change rapidly with respect to $t$ and hence a network trained to estimate $\fwd{t}{\vct{x}_0}$ from $\vct{y}_t$ may become inaccurate when predicting $\fwd{t-\Delta t}{\vct{x}_0}$ from $\vct{y}_t$.

\subsection{Data consistency}
Data consistency is a crucial requirement on generated images when solving inverse problems. That is, we want to obtain reconstructions that are consistent with our original measurement under the degradation model. More formally, we define data consistency as follows in our framework.
\begin{definition}[Data consistency]
Given a deterministic degradation process $ \fwd{t}{\cdot}$,  two degradation severities $\tau \in [0, 1]$ and $\tau^+ \in [\tau, 1]$ and corresponding degraded images $\vct{y}_\tau \in \mathbb{R}^n$ and $\vct{y}_{\tau^+}\in\mathbb{R}^n$, $\vct{y}_{\tau^+}$ is \textit{data consistent} with $\vct{y}_\tau$ under $ \fwd{t}{\cdot}$ if $\exists \vct{x}_0\in\mathcal{X}_0$ such that $\fwd{\tau}{\vct{x}_0} = \vct{y}_\tau$ and $\fwd{\tau^+}{\vct{x}_0} = \vct{y}_{\tau^+}$, where $\mathcal{X}_0$ denotes the clean image manifold. We use the notation $\vct{y}_{\tau^+} \dc \vct{y}_\tau$.
\end{definition}
Simply put, two degraded images are data consistent, if there is a clean image which may explain both under the deterministic degradation process. As our proposed technique is directly trained to reverse a degradation process, enforcement of data consistency is built-in without applying additional steps, such as projection. The following theorem guarantees that in the ideal case, data consistency is maintained in \textit{each iteration} of the reconstruction algorithm. Proof is provided in Appendix \ref{apx:dc_thm}.
\begin{theorem}[Data consistency over iterations] \label{thm:dc}
	Assume that we run the updates in \eqref{eq:update} with $s_{\vct{\theta}}(\vct{y}_t, t) =  \nabla_{\vct{y}_t} \log q_t(\vct{y}_t), ~\forall t \in [0, 1]$ and $	\hat{\mathcal{R}}(t, \Delta t; \vct{y}_t) = \mathcal{R}(t, \Delta t; \vct{x}_0), ~ \vct{x}_0\in \mathcal{X}_0$. If we start from a noisy degraded observation $\vct{\tilde{y}} = \fwd{1}{\vct{x}_0} + \vct{z}_{1}, ~ \vct{x}_0\in \mathcal{X}_0, ~\vct{z}_1 \sim \gaussian{\sigma_{1}^2} $ and run the updates in \eqref{eq:update} for $\tau = 1,1-\Delta t, ..., \Delta t, 0$, then
\begin{equation*}
 	\mathbb{E}[ \vct{\tilde{y}} ] \dc  \mathbb{E}[ \vct{y}_\tau] , ~	\forall \tau \in [1,1-\Delta t, ..., \Delta t, 0].
\end{equation*}
\end{theorem}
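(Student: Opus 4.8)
The plan is to reduce the distributional data-consistency claim to a single statement about first moments, and then prove that statement by a short backward induction along the sampling schedule.

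First I would unpack what must be shown. By the definition of data consistency, $\mathbb{E}[\vct{\tilde{y}}] \dc \mathbb{E}[\vct{y}_\tau]$ means there is a witness $\vct{x}_0' \in \mathcal{X}_0$ with $\fwd{1}{\vct{x}_0'} = \mathbb{E}[\vct{\tilde{y}}]$ and $\fwd{\tau}{\vct{x}_0'} = \mathbb{E}[\vct{y}_\tau]$ (note $\tau \le 1$, so the severity ordering required by the definition holds). Since the initial observation is $\vct{\tilde{y}} = \fwd{1}{\vct{x}_0} + \vct{z}_1$ with zero-mean noise, $\mathbb{E}[\vct{\tilde{y}}] = \fwd{1}{\vct{x}_0}$, and the starting $\vct{x}_0$ already lies in $\mathcal{X}_0$. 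Hence it suffices to exhibit the witness $\vct{x}_0' = \vct{x}_0$, for which $\fwd{1}{\vct{x}_0'} = \mathbb{E}[\vct{\tilde{y}}]$ holds automatically, and to prove the single identity $\mathbb{E}[\vct{y}_\tau] = \fwd{\tau}{\vct{x}_0}$ for every $\tau$ on the schedule $1, 1-\Delta t, \ldots, \Delta t, 0$. The entire theorem collapses to this moment identity.

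Next I would establish the identity by backward induction on $\tau$. The base case $\tau = 1$ is exactly $\mathbb{E}[\vct{\tilde{y}}] = \fwd{1}{\vct{x}_0}$. For the inductive step I take the expectation of the update \eqref{eq:update} and use that, under the theorem's ideal assumptions, the incremental reconstruction term is the exact deterministic quantity $\mathcal{R}(t,\Delta t;\vct{x}_0) = \fwd{t-\Delta t}{\vct{x}_0} - \fwd{t}{\vct{x}_0}$. Writing $\mathbb{E}[\vct{y}_t] = \fwd{t}{\vct{x}_0}$ (the inductive hypothesis), the two $\fwd{t}{\vct{x}_0}$ contributions telescope, the injected Gaussian term $\sqrt{\sigma_t^2 - \sigma_{t-\Delta t}^2}\,\vct{z}$ vanishes in expectation, and what remains is $\mathbb{E}[\vct{y}_{t-\Delta t}] = \fwd{t-\Delta t}{\vct{x}_0} - (\sigma_{t-\Delta t}^2 - \sigma_t^2)\,\mathbb{E}\!\left[\nabla_{\vct{y}_t}\log q_t(\vct{y}_t)\right]$. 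So the step closes precisely when the expected score term is zero.

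The hard part will be showing that this expected score contribution vanishes; everything else is bookkeeping. Here I would exploit the Gaussian structure of the SDP: the conditional law $q_t(\vct{y}_t\mid\vct{x}_0) = \gaussian[\fwd{t}{\vct{x}_0}]{\sigma_t^2}$ has score $(\fwd{t}{\vct{x}_0} - \vct{y}_t)/\sigma_t^2$, matching the ideal parametrization \eqref{eq:parametrization} once $\Phi_{\vct{\theta}}$ recovers the true $\vct{x}_0$. Crucially this score is \emph{affine} in $\vct{y}_t$, so the expectation passes through it and depends only on the first moment: $\mathbb{E}[(\fwd{t}{\vct{x}_0} - \vct{y}_t)/\sigma_t^2] = (\fwd{t}{\vct{x}_0} - \mathbb{E}[\vct{y}_t])/\sigma_t^2 = 0$ by the inductive hypothesis. (Equivalently, one may invoke the general identity $\mathbb{E}_{q_t}[\nabla_{\vct{y}_t}\log q_t] = \int \nabla_{\vct{y}_t} q_t\, \diff\vct{y}_t = 0$.) The subtlety to handle carefully is exactly which distribution the expectation is taken over, and aligning the ideal score with the conditional mean so the cancellation is exact rather than first-order; the affine form is what makes only the first moment matter and lets the induction proceed without tracking the full law of $\vct{y}_t$. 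Substituting back gives $\mathbb{E}[\vct{y}_{t-\Delta t}] = \fwd{t-\Delta t}{\vct{x}_0}$, completing the induction and hence the proof.
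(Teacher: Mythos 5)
Your proof is correct, but it is organized differently from the paper's. The paper's proof tracks the \emph{full law} of the iterates: plugging the ideal assumptions into \eqref{eq:update}, it shows by direct computation that each iterate retains the exact form $\vct{y}_{t-\Delta t} = \fwd{t-\Delta t}{\vct{x}_0} + \vct{z}_{t-\Delta t}$ with an explicitly computed Gaussian noise variance $\left(\sigma_{t-\Delta t}^2/\sigma_t\right)^2 + \sigma_t^2 - \sigma_{t-\Delta t}^2$, deduces consistency between \emph{consecutive} iterates in expectation, and then chains these pairwise statements back to the measurement via a separate transitivity lemma (Lemma~\ref{lem:dc}, proved using the forward degradation transition functions). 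You instead collapse the claim to the single moment identity $\mathbb{E}[\vct{y}_\tau] = \fwd{\tau}{\vct{x}_0}$, exhibit the true $\vct{x}_0$ as one global witness for every $\tau$ simultaneously, and close the induction using only first moments, with the affineness of the conditional score $(\fwd{t}{\vct{x}_0}-\vct{y}_t)/\sigma_t^2$ doing the work of passing expectations through the update. This makes the transitivity lemma unnecessary and is leaner; what it gives up is the stronger intermediate fact the paper establishes, namely that the iterates remain \emph{exact} samples of the form ``degraded image plus Gaussian noise,'' which is what justifies the paper's later claim that intermediate iterates are themselves interpretable as degradations of the same underlying image (and underlies early stopping). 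Note also that your interpretation of the score assumption as the conditional score is the right one --- it matches both the paper's Definition~\ref{def:sdp} (where $q_t$ is defined as $\mathcal{N}(\fwd{t}{\vct{x}_0},\sigma_t^2\mathbf{I})$) and the appendix restatement of the theorem --- and your explicit handling of this subtlety is sound.

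One small caveat: your parenthetical alternative, invoking $\mathbb{E}_{q_t}[\nabla_{\vct{y}_t}\log q_t] = \int \nabla_{\vct{y}_t} q_t \,\diff\vct{y}_t = 0$, is not actually equivalent to your main argument. That identity requires the expectation to be taken under $q_t$ itself, but the law of the reverse-process iterate is \emph{not} $q_t(\cdot|\vct{x}_0)$: as the paper's computation shows, its variance drifts away from $\sigma_t^2$ after the first step. Your primary argument survives precisely because the affine score makes the cancellation depend only on the first moment, which your inductive hypothesis controls; the integral identity would not.
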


\begin{figure*}[t]
	\centering
	\begin{subfigure}{.4\textwidth}
		\centering
		\includegraphics[width=0.99\linewidth]{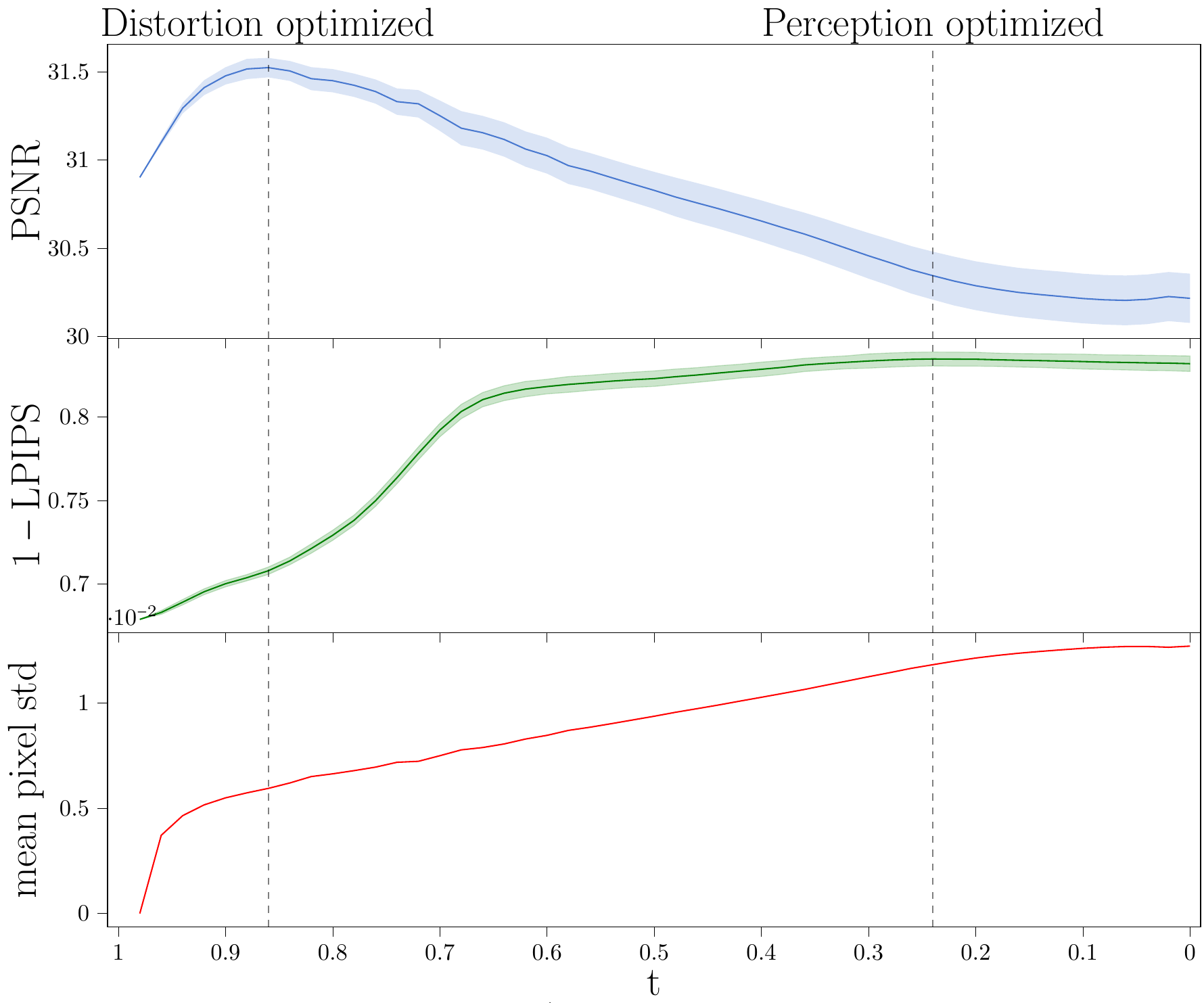}
	\end{subfigure}%
	\begin{subfigure}{.4\textwidth}
		\centering
		\includegraphics[width=0.99\linewidth]{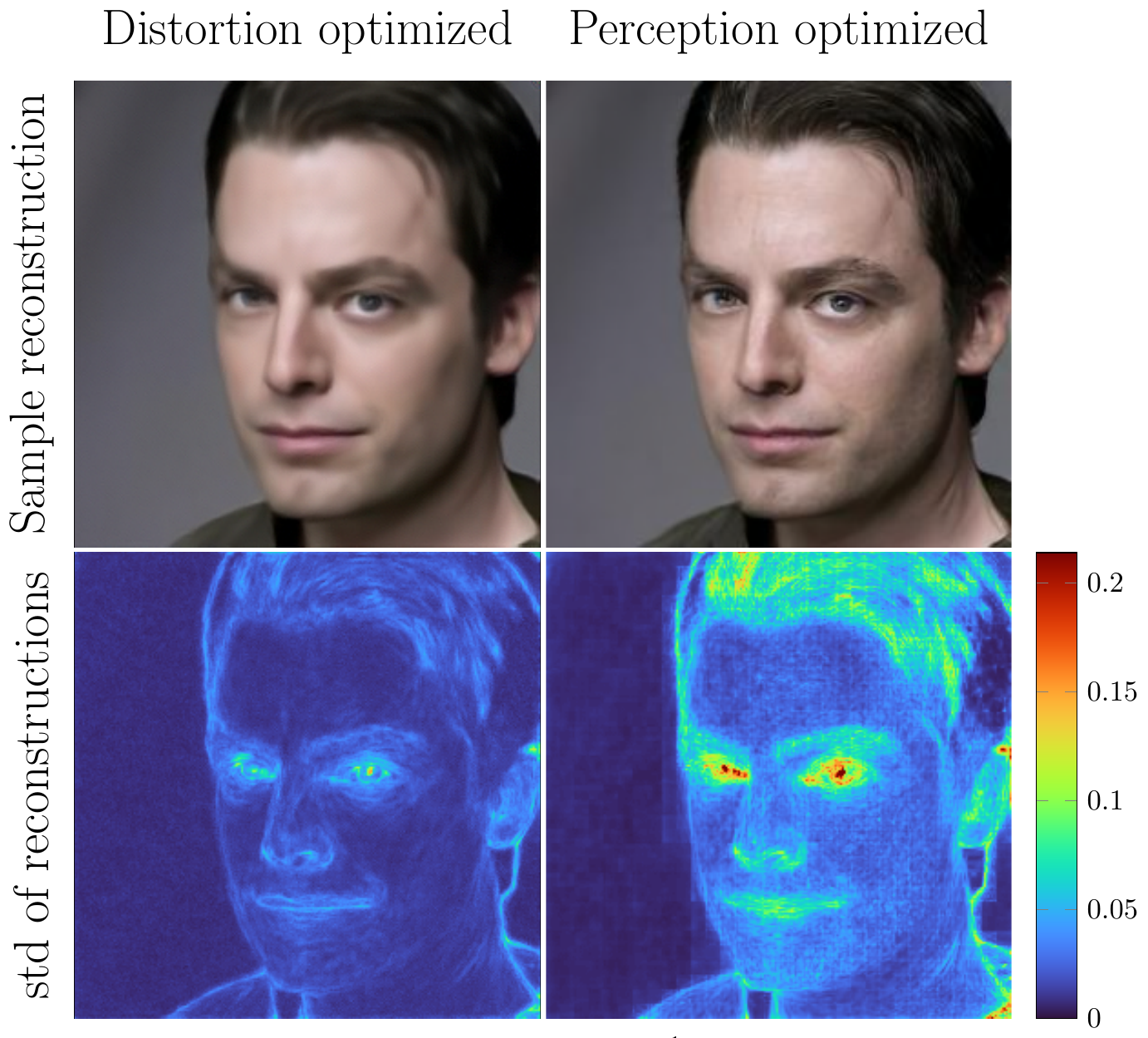}
	\end{subfigure}
	\caption{ Perception-distortion trade-off on CelebA-HQ deblurring: distortion metrics initially improve, peak fairly early in the reverse process, then gradually deteriorate, while perceptual metrics improve. We plot the mean of $30$ trajectories ($\pm std$ shaded) starting from the same measurement.}
	\label{fig:perception_distortion}
\end{figure*}

\begin{figure}[t]
 \begin{algorithm}[H]
	\caption{\methodname}\label{alg:irnd_main}
	\begin{algorithmic}
		\Require $\vct{\tilde{y}}$: noisy observation,  $\Phi_{\vct{\theta}}$: score network, $\fwd{t}{\cdot}$: degradation function, $\Delta t$: step size, $\sigma_t$: noise std at time $t$, $\eta_t$: guidance step size, $\forall t\in[0, 1]$, $t_{stop}$: early-stopping parameter
		\State $N \gets \lfloor1/\Delta t\rfloor$
		\State $\vct{y} \gets \vct{\tilde{y}}$
		\For{$i = 1 \text{ to }N$}
		\State $t \gets 1 - \Delta t \cdot i$
		\If{$ t \leq t_{stop}$}
		\State $\textbf{break}$  \hfill\Comment{Early-stopping}
		\EndIf
		\State $\vct{z} \sim \mathcal{N}(0, \sigma_t^2 \mathbf{I})$
		\State $\hat{\vct{x}}_0 \gets  \Phi_{\vct{\theta}}(\vct{y}, t)$ \hfill\Comment{Predict posterior mean}
		\State $\vct{y}_r \gets \fwd{t-\Delta t}{\hat{\vct{x}}_0} - \fwd{t}{\hat{\vct{x}}_0}$ \hfill\Comment{Incremental reconstruction}
		\State $\vct{y}_{d} \gets -\frac{\sigma_{t-\Delta t}^2 - \sigma_t^2}{\sigma_t^2}(\fwd{t}{\hat{\vct{x}}_0} - \vct{y})$ \hfill\Comment{Denoising}
		\State $\vct{y}_g \gets  (\sigma_{t-\Delta t}^2 - \sigma_t^2) \nabla_{y} \|\vct{\tilde{y}} - \fwd{1}{\hat{\vct{x}}_0}\|^2$ \hfill\Comment{Guidance}
		\State $\vct{y} \gets \vct{y} +\vct{y}_r + \vct{y}_d + \eta_t \vct{y}_g + \sqrt{\sigma_{t}^2 - \sigma_{t-\Delta t}^2} \vct{z}$
		\EndFor
		\Ensure $\vct{y}$ \hfill\Comment{Alternatively, output $\hat{\vct{x}}_0$ (see Appendix \ref{apx:output_note})}
	\end{algorithmic} 
\end{algorithm}
\end{figure}

\subsection{Perception-distortion trade-off}
Diffusion models generate synthetic images of exceptional quality, almost indistinguishable from real images to the human eye. This perceptual image quality is typically evaluated on features extracted by a pre-trained neural network, resulting in metrics such as Learned Perceptual Image Patch Similarity (LPIPS)\citep{zhang2018unreasonable} or Fréchet Inception Distance (FID)\citep{heusel2017gans}. In image restoration however, we are often interested in image distortion metrics that reflect faithfulness to the original image, such as Peak Signal to Noise Ratio (PSNR) or Structural Similarity Index Measure (SSIM) when evaluating the quality of reconstructions. Interestingly, distortion and perceptual  quality are fundamentally at  odds with each  other,  as shown in the seminal work of \citet{blau2018perception}.  As diffusion models tend to favor high perceptual quality, it is often at the detriment of distortion metrics \citep{chung2022diffusion}.

As shown in Figure \ref{fig:perception_distortion}, we empirically observe that in the reverse process of \methodname{}, the quality of reconstructions with respect to distortion metrics initially improves, peaks fairly early in the reverse process, then gradually deteriorates. Simultaneously, perceptual metrics such as LPIPS demonstrate stable improvement for most of the reverse process. More intuitively, the algorithm first finds a rough reconstruction that is consistent with the measurement, but lacks fine details. This reconstruction is optimal with respect to distortion metrics, but visually overly smooth and blurry. Consecutively, image details progressively emerge during the rest of the reverse process, resulting in improving perceptual quality at the cost of deteriorating distortion metrics. Therefore, our method provides an additional layer of flexibility: by \textit{early-stopping} the reverse process, we can trade-off perceptual quality for better distortion metrics. Adjusting the early-stopping parameter $t_{stop}$ allows us to obtain distortion- and perception-optimized reconstructions depending on our requirements.

\subsection{Degradation scheduling}
 In order to deploy our method, we need to define how the degradation changes with respect to severity $t$ following the properties specified in Definition \ref{def:sdp}. That is, we have to determine how to interpolate between the identity mapping $\fwd{0}{\vct{x}} = \vct{x}$ for $t=0$ and the most severe degradation $\fwd{1}{\cdot}$ for $t=1$.  Theorem \ref{thm:basic_error} suggests that sharp changes in the degradation function with respect to $t$ should be avoided. Here, we leverage a principled method of scheduling using a greedy algorithm to select a set of degraded distributions, such that the maximum distance between consecutive distributions is minimized. Details can be found in Appendix \ref{apx:scheduling}. 
 
 Finally, we find that adding a guidance term similar to that used in DPS \citep{chung2022diffusion} slightly improves reconstructions, however it is not necessary for maintaining data consistency. For more details, we refer the reader to Appendix \ref{apx:guidance}. A summary of \methodname{} is shown in Algorithm \ref{alg:irnd_main}.

\begin{figure}[t]
	\centering
	\begin{subfigure}{.5\textwidth}
		\centering
		\includegraphics[width=0.7\linewidth, left]{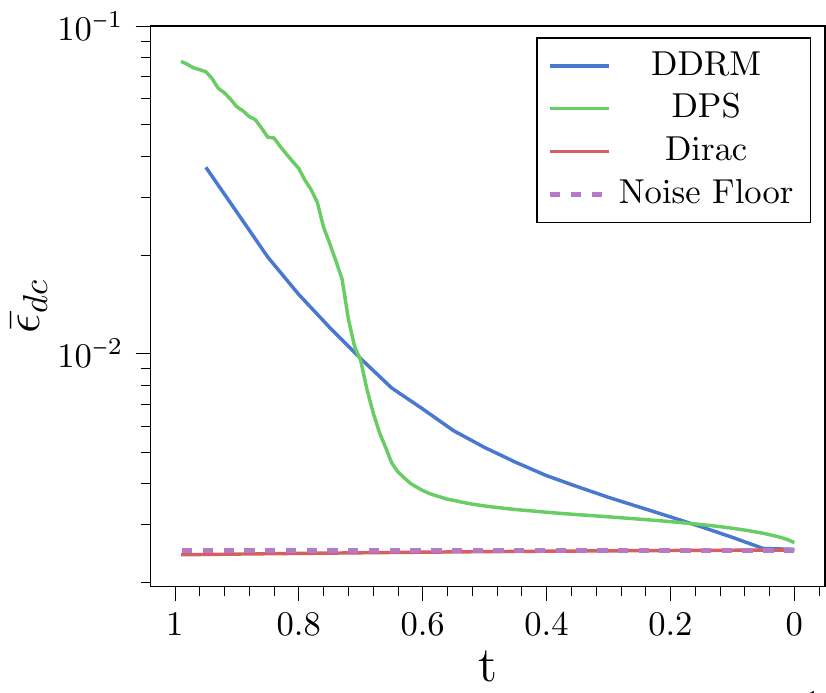}
	\end{subfigure}%
	\begin{subfigure}{.5\textwidth}
		\centering
	\includegraphics[width=0.6\linewidth, center]{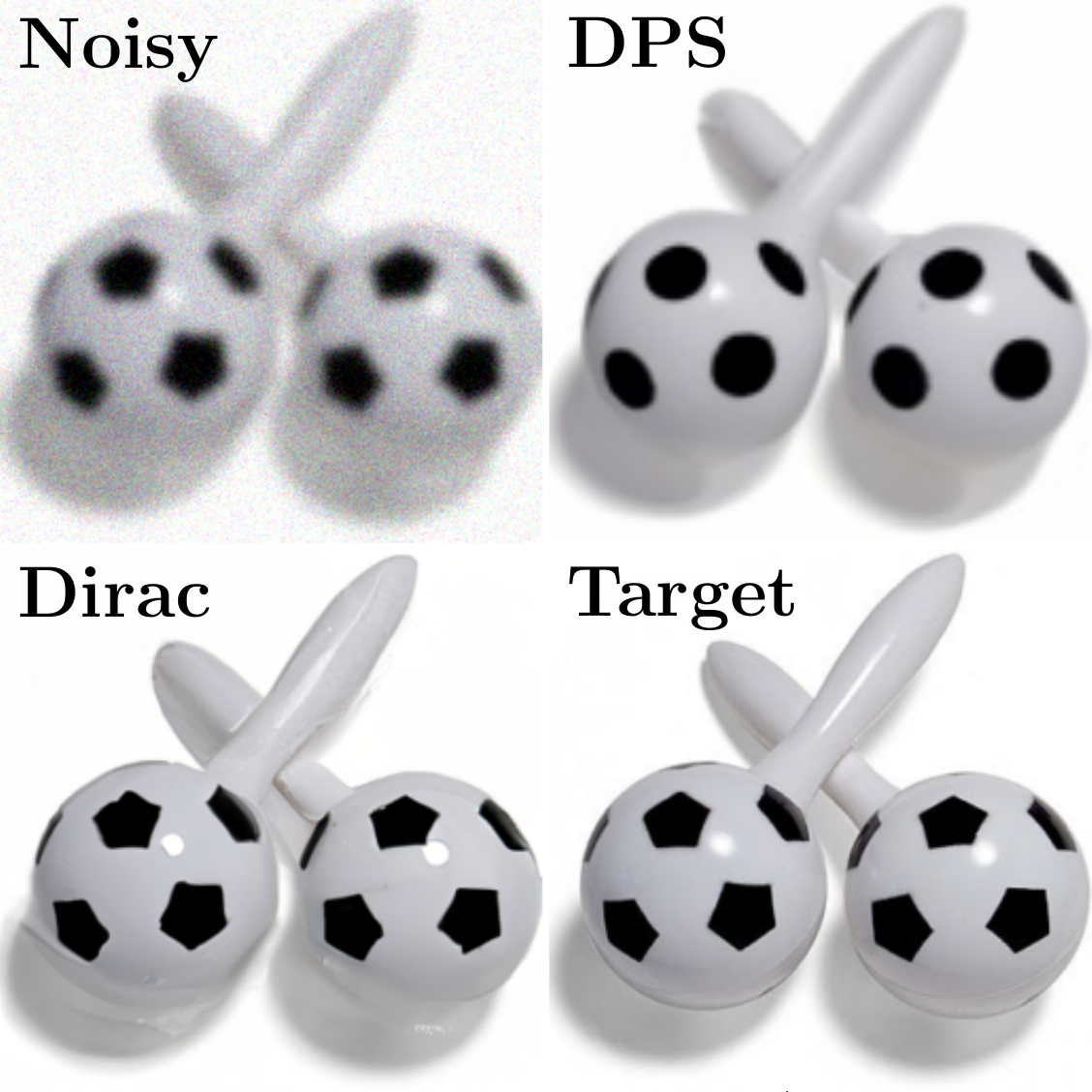}
\end{subfigure}%
	\caption{ \underline{Left}: Data consistency in FFHQ inpainting. $\epsilon_{dc} := \norm{\vct{\tilde{y}} - \fwd{1}{\hat{\vct{x}}_0(\vct{y}_t)}}^2$ measures how consistent is the clean image estimate with the measurement. We expect $\epsilon_{dc}$ to approach the noise floor $\sigma_1^2 = 0.0025$ in case of perfect data consistency. We plot $\bar{\epsilon}_{dc}$ the mean over the validation set. \methodname{} maintains data consistency throughout the reverse process. \underline{Right}: Data consistency is not always achieved with DPS.}
	\label{fig:curves}
\end{figure}

\begin{figure}[t]
	\centering
	\includegraphics[width=0.4\linewidth]{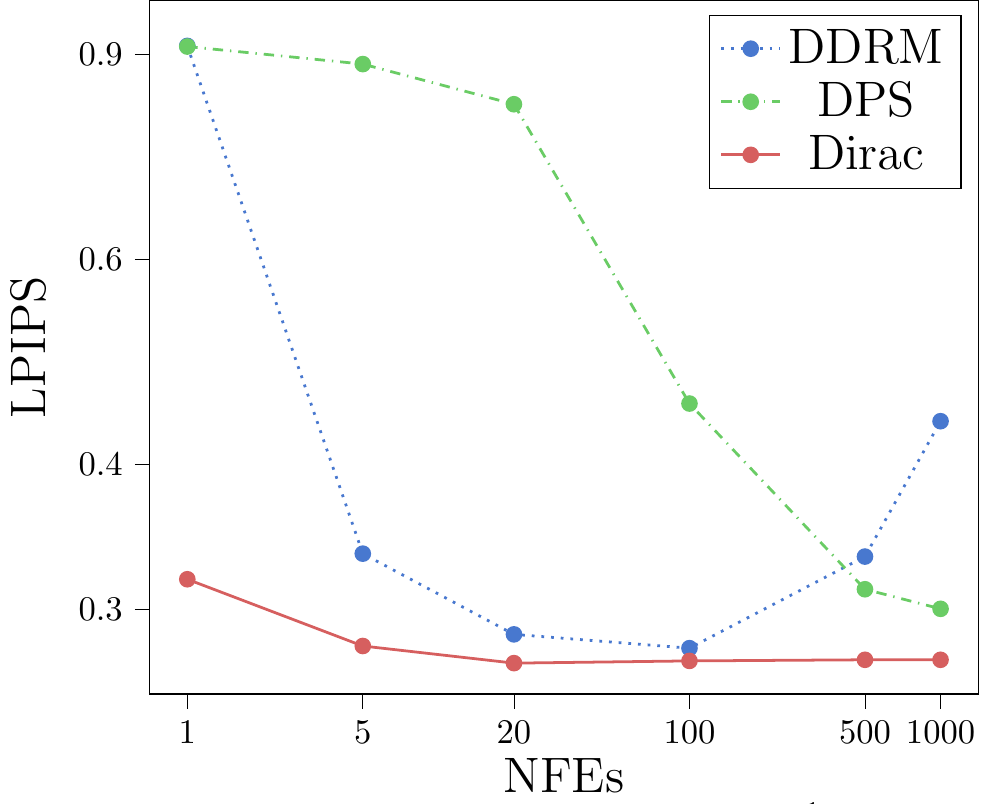}
	\caption{ Number of reverse diffusion steps vs. perceptual  quality. \methodname{} produces reconstructions of high quality even with a low number of neural function evaluations (NFEs).}
	\label{fig:nfe_compare}
\end{figure}

\section{Experiments}\label{sec:exp}
\begin{figure*}[t]
	\centering
	\begin{subfigure}{.5\textwidth}
		\includegraphics[width=1.0\linewidth]{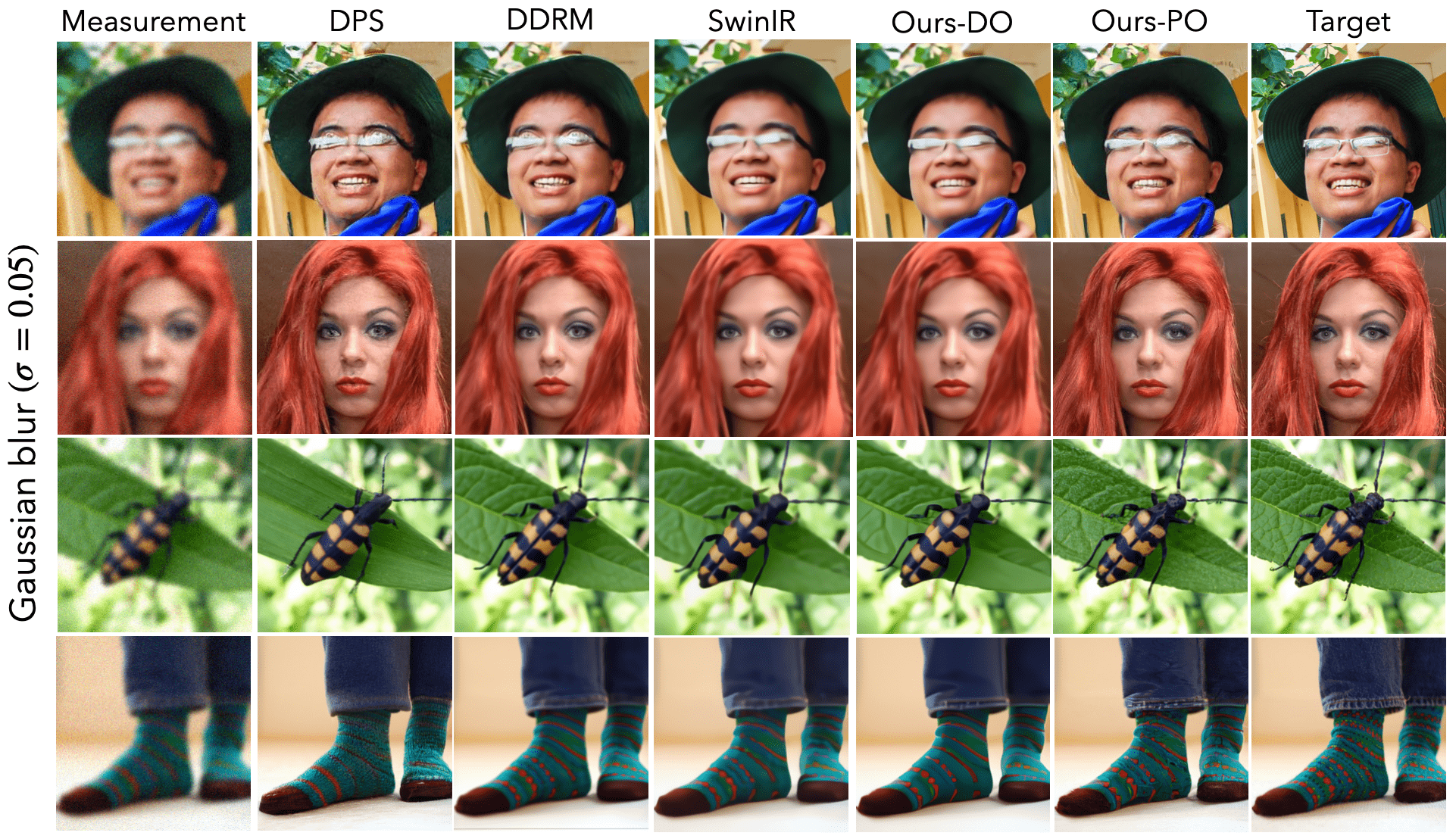}
	\end{subfigure}%
	\begin{subfigure}{.5\textwidth}
		\includegraphics[width=1.0\linewidth]{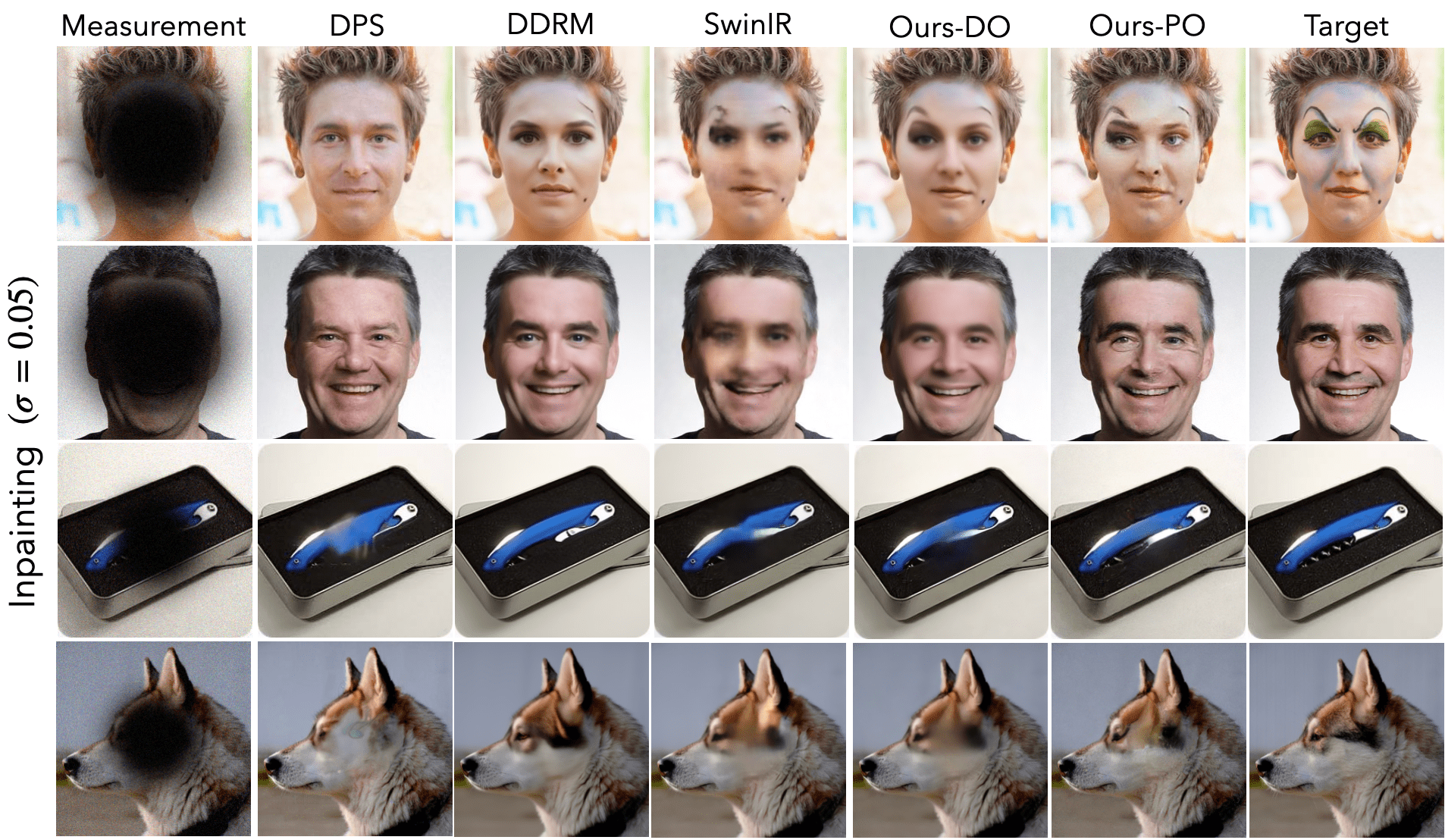}
	\end{subfigure}
	\caption{Visual comparison of reconstructions: Gaussian blur (left) and inpainting (right).	More samples in Appendix \ref{apx:visual}.}
	\label{fig:comparison}
\end{figure*}
\textbf{Experimental setup --} We evaluate our method on CelebA-HQ ($256 \times 256$) \citep{karras_progressive_2018} and ImageNet ($256 \times 256$) \citep{deng2009imagenet}. For competing methods that require a score model, we use pre-trained SDE-VP models. For \methodname{}, we train models from scratch using the NCSN++\citep{song2020score}  architecture.  As the pre-trained score-models for competing methods have been trained on the full CelebA-HQ dataset, we test all methods for fair comparison on the first $1k$ images of the FFHQ \citep{karras2019style} dataset. For ImageNet experiments, we sample $1$ image from each class from the official validation split to create disjoint validation and test sets of $1k$ images each. We only train our model on the train split of ImageNet.

We investigate two degradation processes of very different properties: Gaussian blur and inpainting. In all cases, Gaussian noise with $\sigma_1 = 0.05$ is added to the measurements in the $[0, 1]$ range. We use standard geometric noise scheduling with $\sigma_{max} = 0.05$ and $\sigma_{min} = 0.01$ in the SDP. For Gaussian blur, we use a kernel size of $61$, with standard deviation of $w_{max} =3$. We vary the standard deviation of the kernel between $w_{max}$(strongest) and $0.3$ (weakest) to parameterize the severity of Gaussian blur in the degradation process, and use the scheduling method described in Appendix \ref{apx:scheduling} to specify $\mathcal{A}_t$. For inpainting, we generate a smooth mask in the form $\left(1 - \frac{f(\vct{x}; w_t)}{\max_{\vct{x}} f(\vct{x}; w_t)}\right)^k$, where $f(\vct{x}; w_t)$ denotes the density of a zero-mean isotropic Gaussian with standard deviation $w_t$ that controls the size of the mask and $k=4$ for sharper transition. We set $w_1 = 50$ for CelebA-HQ/FFHQ inpainting and $30$ for ImageNet inpainting. More details on the experimental setting and operators can be found in Appendix \ref{apx:exp_details}.

We compare our method against DDRM \citep{kawar2022denoising}, a well-established diffusion-based linear inverse problem solver; DPS \cite{chung2022diffusion}, a recent, state-of-the-art diffusion technique for noisy inverse problems; SwinIR \citep{liang_swinir_2021}, a state-of-the-art transformer-based supervised image restoration model; PnP-ADMM \citep{chan2016plug}, a reliable traditional solver with learned denoiser; and ADMM-TV, a classical optimization technique. For more details see Appendix \ref{apx:comparison_methods}. To evaluate performance, we use PSNR and SSIM as distortion metrics and LPIPS and FID as perceptual quality metrics.

\textbf{Deblurring --} We train our model on $	\mathcal{L}_{IR}(\Delta t=0, {\vct{\theta}}) $, as we observed no significant difference in using other incremental reconstruction losses, due to the smoothness of the degradation (see ablation in Appendix \ref{apx:irn_ablations}). We show results on our perception-optimized (PO) reconstructions, tuned for best LPIPS and our distortion-optimized (DO) reconstructions, tuned for best PSNR on a separate validation set via early-stopping (see Fig. \ref{fig:perception_distortion}). Our results, summarized in Table \ref{tab:table_results} (left side), demonstrate superior performance compared with other diffusion methods in terms of both distortion and perceptual metrics. Our DO model closely matches the distortion quality of SwinIR,  a strong non-diffusion baseline known to outperform other diffusion solvers in terms of distortion metrics \citep{chung2022diffusion}. Visual comparison in Figure \ref{fig:comparison} reveals that DDRM produces reliable reconstructions, similar to our DO images, but they often lack detail. In contrast, DPS produces detailed images, similar to our PO reconstructions, but often with hallucinated details inconsistent with the measurement. Finally, we demonstrate the robustness of \methodname{} to test-time perturbations in the forward operator and noise level in Appendix \ref{apx:robustness}.

\textbf{Inpainting --} We train our model on $	\mathcal{L}_{IR}(\Delta t=1, {\vct{\theta}}) $, as we see improvement in reconstruction quality as $\Delta t$ is increased. We hypothesize that this is due to sharp changes in the inpainting operator with respect to $t$, which can be mitigated by the incremental reconstruction loss according to Theorem \ref{thm:basic_error}. Ablations on the effect of $\Delta t$ in the incremental reconstruction loss can be found in Appendix \ref{apx:irn_ablations}. We tuned models to optimize FID, as it is more suitable than pairwise image metrics to evaluate generated image content. Our results in Table \ref{tab:table_results} shows best performance in most metrics, followed by DDRM. Fig. \ref{fig:comparison} (right) shows, that our method generates high quality images even when limited context is available. 

\begin{table*}[t]
	\centering
	\resizebox{15.5cm}{!}{
		\begin{tabular}{ lccccccccc }
			\toprule
			&\multicolumn{4}{c}{\textbf{Deblurring}} & &\multicolumn{4}{c}{\textbf{Inpainting}} \\
			\cmidrule{2-5}\cmidrule{7-10}
			\textbf{Method} &PSNR$(\uparrow)$&SSIM$(\uparrow)$&LPIPS$(\downarrow)$&FID$(\downarrow)$& &PSNR$(\uparrow)$&SSIM$(\uparrow)$&LPIPS$(\downarrow)$&FID$(\downarrow)$ \\
			\midrule
			\methodname-PO (ours) &26.67&0.7418&\textbf{0.2716}&\textbf{53.36}& &25.41&0.7595&0.2611&\textbf{39.43} \\
			\methodname-DO (ours) &\underline{28.47}&\underline{0.8054}&0.2972&69.15& &\textbf{26.98}&\textbf{0.8435}&\textbf{0.2234}&\underline{51.87} \\
			\hline
			DPS \citep{chung2022diffusion} &25.56&0.6878&0.3008&\underline{65.68}& &21.06&0.7238&0.2899&57.92 \\
			DDRM \citep{kawar2022denoising} &27.21&0.7671&\underline{0.2849}&65.84& &\underline{25.62}&0.8132&\underline{0.2313}&54.37 \\
			SwinIR \citep{liang_swinir_2021}  &\textbf{28.53}&\textbf{0.8070}&0.3048&72.93& &24.46&\underline{0.8134}&0.2660&59.94\\
			PnP-ADMM \citep{chan2016plug}&27.02
			&0.7596&0.3973&74.17& &12.27&0.6205&0.4471&192.36 \\
			ADMM-TV &26.03&0.7323&0.4126&89.93& &11.73&0.5618&0.5042&264.62 \\
			\bottomrule
		\end{tabular}
	}
	\resizebox{15.5cm}{!}{
		\begin{tabular}{ lccccccccc }
			\toprule
			&\multicolumn{4}{c}{\textbf{Deblurring}} & &\multicolumn{4}{c}{\textbf{Inpainting}} \\
			\cmidrule{2-5}\cmidrule{7-10}
			\textbf{Method} &PSNR$(\uparrow)$&SSIM$(\uparrow)$&LPIPS$(\downarrow)$&FID$(\downarrow)$& &PSNR$(\uparrow)$&SSIM$(\uparrow)$&LPIPS$(\downarrow)$&FID$(\downarrow)$ \\
			\midrule
			\methodname-PO (ours) &24.68&0.6582&\textbf{0.3302}&\underline{53.91}& &26.36&0.8087&0.2079&\underline{34.33} \\
			\methodname-DO (ours) &\textbf{25.76}&\textbf{0.7085}&0.3705&83.23& &\textbf{28.92}&\textbf{0.8958}&\textbf{0.1676}&38.25 \\
			\hline
			DPS \citep{chung2022diffusion} &21.51&0.5163&0.4235&\textbf{52.60}& &22.71&0.8026&\underline{0.1986}&34.55 \\
			DDRM \citep{kawar2022denoising} &24.53&0.6676&\underline{0.3917}&61.06& &25.92&0.8347&0.2138&\textbf{33.71} \\
			SwinIR \citep{liang_swinir_2021}  &\underline{25.07}&\underline{0.6801}&0.4159&84.80& &\underline{26.87}&\underline{0.8490}&0.2161&45.69\\
			PnP-ADMM \citep{chan2016plug}&25.02&0.6722&0.4565&98.72& &18.14&0.7901&0.2709&101.25 \\
			ADMM-TV &24.31&0.6441&0.4578&88.26& &17.60&0.7229&0.3157&120.22 \\
			\bottomrule
		\end{tabular} 
	}
	\caption{\label{tab:table_results} Experimental results on the FFHQ (top) and ImageNet (bottom) test splits.}
\end{table*}

\textbf{Data consistency --} Consistency between reconstructions and the measurement is crucial in inverse problem solving. Our proposed method has the additional benefit of maintaining data consistency throughout the reverse process, as shown in Theorem \ref{thm:dc} in the ideal case, however we empirically validate this claim. Figure \ref{fig:curves} (left) shows the evolution of $\epsilon_{dc} := \|\vct{\tilde{y}} - \fwd{1}{\hat{\vct{x}}_0(\vct{y}_t)}\|^2$, where $\hat{\vct{x}}_0(\vct{y}_t)$ is the clean image estimate at time $t$ ($\Phi_{\vct{\theta}}(\vct{y}_t, t)$ for our method). Since $\vct{\tilde{y}} = \fwd{1}{\vct{x}_0} + \sigma_1^2$, we expect $\epsilon_{dc}$ to approach $\sigma_1^2$ in case of perfect data consistency. We observe that our method, without applying guidance,  stays close to the noise floor throughout the reverse process, while other techniques approach data consistency only close to $t=1$.  In case of DPS, we observe that data consistency is not always satisfied (see Figure \ref{fig:curves}, right), as DPS only guides the iterates towards data consistency, but does not directly enforce it. As our technique reverses an SDP, our intermediate reconstructions are always interpretable as degradations of varying severity of the same underlying image. This property allows us to early-stop the reconstruction and still obtain consistent reconstructions. 
 
\textbf{Sampling speed --} \methodname{} requires low number of reverse diffusion steps for high quality reconstructions leading to fast sampling. Figure \ref{fig:nfe_compare} compares the perceptual quality at different number of reverse diffusion steps for diffusion-based solvers. Our method typically requires $20-100$ steps for optimal perceptual quality, and shows the most favorable scaling in the low-NFE (Neural Function Evaluations) regime. Due to early-stopping we can trade-off perceptual quality for better distortion metrics and even further sampling speed-up. We obtain acceptable results even with as low as a single step of reconstruction. 

\section{Conclusions and Limitations}
We propose a novel framework for solving inverse problems by reversing a stochastic degradation process. Our solver can flexibly trade off perceptual image quality for more traditional distortion metrics and sampling speedup. Moreover, we show both theoretically and empirically that our method maintains consistency with the measurement throughout the reverse process. \methodname{} produces reconstructions of exceptional quality in terms of both perceptual and distortion-based metrics, surpassing comparable state-of-the-art methods on multiple high-resolution datasets and image restoration tasks. The main limitation of our method is that a model needs to be trained from scratch for each inverse problem, whereas other diffusion-based solvers leverage pretrained score networks.
\section*{Impact Statement}
This paper presents work that leverages diffusion models. Diffusion models are powerful generative models capable of synthesizing highly realistic images that can be difficult to distinguish from real photographs. Such techniques can be abused by bad actors to fabricate misinformation or otherwise misleading content. Moreover, deep learning-based image reconstruction algorithms are known to hallucinate, that is output features that appear realistic but are inconsistent with the ground truth. Therefore, it is imperative to practice an abundance of caution when deploying such techniques in safety-critical applications.  
\section*{Acknowledgments}
M. Soltanolkotabi is supported by the Packard Fellowship in Science and Engineering, a Sloan
Research Fellowship in Mathematics, an NSF-CAREER under award \#1846369, DARPA FastNICS program, and NSF-CIF awards \#1813877 and \#2008443. and NIH DP2LM014564-01. This research is also in part supported by AWS credits through an Amazon Faculty research award.

\newpage
\bibliography{references}
\bibliographystyle{icml2024}

\newpage
\appendix
\onecolumn
\section*{Appendix}
\section{Proofs}
\subsection{Denoising score-matching guarantee}\label{apx:dsm}
Just as in standard diffusion, we approximate the score of the noisy, degraded data distribution $\nabla_{\yt} q_t(\yt)$  by matching the score of the tractable conditional distribution $\nabla_{\yt} q_t(\yt|\xo)$ via minimizing the loss in \eqref{eq:loss_final}. For standard Score-Based Models with $\mathcal{A}_t = \mathbf{I}$, the seminal work of \citet{vincent2011connection} guarantees that the true score is learned by denoising score-matching. More recently, \citet{daras2022soft} points out that this result holds for a wide range of corruption processes, with the technical condition that the SDP assigns non-zero probability to all $\yt$ for any given clean image $\xo$. This condition is satisfied by adding Gaussian noise. For the sake of completeness, we include the theorem from \citet{daras2022soft} updated with the notation from this paper. 
\begin{theorem}\label{thm:sm}
	Let $q_0$ and $q_t$ be two distributions in $\mathbb{R}^n$. Assume that all conditional distributions, $q_t(\yt | \xo)$, are supported and differentiable in $\mathbb{R}^n$. Let:
	\begin{equation}\label{eq:sm2}
		J_1(\theta) = \frac{1}{2}\mathbb{E}_{\yt \sim q_t}\left[\norm{\vct{s}_\theta(\yt, t) - \nabla_{\yt} \log q_t(\yt)}^2\right],
	\end{equation} 
	\begin{equation}\label{eq:sm1}
	J_2(\theta) = \frac{1}{2}\mathbb{E}_{(\xo, \yt) \sim q_0(\xo) q_t(\yt | \xo)}\left[\norm{\vct{s}_\theta(\yt, t) - \nabla_{\yt} \log q_t(\yt| \xo)}^2\right].
\end{equation} 
Then, there is a universal constant $C$ (that does not depend on $\theta$) such that: $J_1(\theta) = J_2(\theta) + C$.
\end{theorem}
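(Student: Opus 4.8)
The plan is to follow the classic denoising score-matching argument of \citet{vincent2011connection}, expanding both quadratic objectives and showing that the only $\theta$-dependent contributions coincide. First I would expand the square in each objective into three pieces. For $J_1$ this gives a pure-square term $\frac{1}{2}\mathbb{E}_{\yt}[\norm{\vct{s}_\theta(\yt,t)}^2]$, a cross term $-\mathbb{E}_{\yt}[\langle \vct{s}_\theta(\yt,t), \nabla_{\yt}\log q_t(\yt)\rangle]$, and a $\theta$-independent term $\frac{1}{2}\mathbb{E}_{\yt}[\norm{\nabla_{\yt}\log q_t(\yt)}^2]$; the analogous expansion of $J_2$ replaces the marginal score by the conditional score and averages over the joint $q_0(\xo)q_t(\yt|\xo)$. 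The two pure-square terms agree immediately, since marginalizing out $\xo$ from the joint returns the marginal $q_t$, so that $\mathbb{E}_{(\xo,\yt)}[\norm{\vct{s}_\theta(\yt,t)}^2]=\mathbb{E}_{\yt}[\norm{\vct{s}_\theta(\yt,t)}^2]$.

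The heart of the argument is matching the two cross terms. Here I would rewrite $\nabla_{\yt}\log q_t(\yt) = \nabla_{\yt}q_t(\yt)/q_t(\yt)$ so that the weighting density cancels, turning the marginal cross term into $\int \langle \vct{s}_\theta(\yt,t), \nabla_{\yt}q_t(\yt)\rangle\,\diff\yt$. Using $q_t(\yt)=\int q_0(\xo)q_t(\yt|\xo)\,\diff\xo$ and pushing the gradient inside the integral yields $\nabla_{\yt}q_t(\yt)=\int q_0(\xo)\nabla_{\yt}q_t(\yt|\xo)\,\diff\xo$. Substituting this, applying Fubini to swap the order of integration, and reversing the log-derivative identity $\nabla_{\yt}q_t(\yt|\xo)=q_t(\yt|\xo)\nabla_{\yt}\log q_t(\yt|\xo)$ recovers exactly $\mathbb{E}_{(\xo,\yt)}[\langle \vct{s}_\theta(\yt,t), \nabla_{\yt}\log q_t(\yt|\xo)\rangle]$, the cross term appearing in $J_2$.

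Having matched both the pure-square and the cross terms, the difference $J_1(\theta)-J_2(\theta)$ collapses to the difference of the two $\theta$-independent squared-score terms, which I would collect into the single constant $C=\frac{1}{2}\mathbb{E}_{\yt}[\norm{\nabla_{\yt}\log q_t(\yt)}^2]-\frac{1}{2}\mathbb{E}_{(\xo,\yt)}[\norm{\nabla_{\yt}\log q_t(\yt|\xo)}^2]$, completing the proof. The main obstacle is purely technical: justifying the interchange of the gradient with the marginalizing integral and the application of Fubini's theorem in the cross-term computation. This is precisely where the hypotheses enter — the assumed support and differentiability of every conditional $q_t(\yt|\xo)$ on $\mathbb{R}^n$ guarantee $q_t(\yt)>0$ everywhere (so the division by $q_t(\yt)$ is legitimate) and supply the regularity needed to differentiate under the integral sign without boundary terms. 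Since the SDP of Definition \ref{def:sdp} adds Gaussian noise with $\sigma_t>0$, these conditions are met, and the constant $C$ is manifestly independent of $\theta$.
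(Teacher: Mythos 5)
Your proposal is correct and is essentially the same argument as the paper's: the paper does not spell out the calculation but defers to Appendix A.1 of \citet{daras2022soft}, which itself follows the calculations of \citet{vincent2011connection} --- precisely the expand-the-squares, match-the-cross-terms-via-the-log-derivative-identity argument you give, with the same constant $C$ collecting the two $\theta$-independent squared-score terms. Your remark that the Gaussian noise in the SDP ensures full support of $q_t(\yt|\xo)$ (hence $q_t(\yt)>0$ and the validity of the interchange of gradient and integral) is exactly the technical condition the paper highlights when invoking this theorem.
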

The proof, that follows the calculations of \citet{vincent2011connection},  can be found in Appendix A.1. of \citet{daras2022soft}. This result implies that by minimizing the denoising score-matching objective in \eqref{eq:sm1}, the objective in \eqref{eq:sm2} is also minimized, thus the true score is learned via matching the tractable conditional distribution $q_t(\yt| \xo)$ governing SDPs.

\subsection{Theorem 3.4.}
\begin{assumption}[Lipschitzness of degradation]\label{as:lipschitz} Assume that $\|\fwd{t}{\vct{x}} - \fwd{t}{\vct{y}}\| \leq L_x^{(t)} \|\vct{x} - \vct{y}\|, ~ \forall \vct{x}, \vct{y} \in \mathbb{R}^n,~ \forall t \in [0, 1]$ and $\|\fwd{t'}{\vct{x}} - \fwd{t''}{\vct{x}}\| \leq L_t |t'- t''|, ~ \forall \vct{x} \in \mathbb{R}^n,~ \forall t', t'' \in [0, 1]$.
\end{assumption}
\begin{assumption}[Bounded signals]\label{as:energy} Assume that each entry of clean signals $\vct{x}_0$ are bounded as $\vct{x}_0[i] \leq B, ~\forall i \in (1, 2, ..., n)$.
\end{assumption}
\begin{lemma}\label{lem:posterior_mean_error}
	Assume $\vct{y}_t = \fwd{t}{\vct{x}_0} + \vct{z}_t$ with $\vct{x}_0\sim q_0(\vct{x}_0)$ and $\vct{z}_t \sim \mathcal{N}(0, \sigma_t^2 \mathbf{I})$ and that Assumption \ref{as:lipschitz} holds. Then, the Jensen gap is upper bounded as $\norm{\mathbb{E} [\fwd{t'}{\vct{x}_0} | \vct{y}_t] - \fwd{t'}{\mathbb{E} [\vct{x}_0| \vct{y}_t]}} \leq L_x^{(t')} \sqrt{n} B, ~\forall t, t' \in [0, 1]$. 
\end{lemma}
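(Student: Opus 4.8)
The plan is to reduce the lemma to a bound on the conditional variance of the clean signal $\vct{x}_0$ given $\vct{y}_t$. Writing $\bar{\vct{x}}_0 := \mathbb{E}[\vct{x}_0 \mid \vct{y}_t]$, the term $\fwd{t'}{\bar{\vct{x}}_0}$ is a deterministic function of $\vct{y}_t$, so by linearity of conditional expectation the Jensen gap rewrites as a single conditional expectation,
\[
\mathbb{E}[\fwd{t'}{\vct{x}_0} \mid \vct{y}_t] - \fwd{t'}{\bar{\vct{x}}_0} = \mathbb{E}\big[ \fwd{t'}{\vct{x}_0} - \fwd{t'}{\bar{\vct{x}}_0} \mid \vct{y}_t \big].
\]
First I would push the norm inside the expectation by Jensen's inequality (convexity of $\norm{\cdot}$), and then apply the spatial Lipschitz bound of Assumption \ref{as:lipschitz} pointwise inside the expectation, which gives
\[
\norm{\mathbb{E}[\fwd{t'}{\vct{x}_0} \mid \vct{y}_t] - \fwd{t'}{\bar{\vct{x}}_0}} \leq L_x^{(t')} \, \mathbb{E}\big[\norm{\vct{x}_0 - \bar{\vct{x}}_0} \mid \vct{y}_t\big].
\]

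Next I would bound the expected conditional deviation by the square root of the conditional second moment using Jensen a second time (concavity of $\sqrt{\cdot}$, i.e. $\mathbb{E}[Z] \le \sqrt{\mathbb{E}[Z^2]}$ with $Z = \norm{\vct{x}_0 - \bar{\vct{x}}_0}$). The resulting quantity $\mathbb{E}[\norm{\vct{x}_0 - \bar{\vct{x}}_0}^2 \mid \vct{y}_t]$ is exactly the trace of the posterior covariance, namely $\sum_{i=1}^n \mathrm{Var}(\vct{x}_0[i] \mid \vct{y}_t)$, since $\bar{\vct{x}}_0[i]$ is the conditional mean of each coordinate. Each term satisfies $\mathrm{Var}(\vct{x}_0[i] \mid \vct{y}_t) \leq \mathbb{E}[\vct{x}_0[i]^2 \mid \vct{y}_t]$, which Assumption \ref{as:energy} controls by $B^2$; summing the $n$ coordinates yields $\mathbb{E}[\norm{\vct{x}_0 - \bar{\vct{x}}_0}^2 \mid \vct{y}_t] \le n B^2$. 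Substituting back through the two displays gives the claimed bound $L_x^{(t')}\sqrt{n}B$.

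The only place where care is needed is the variance step. As written, Assumption \ref{as:energy} provides a one-sided bound $\vct{x}_0[i] \le B$, whereas the clean estimate $\mathrm{Var}(\vct{x}_0[i]\mid\vct{y}_t) \le \mathbb{E}[\vct{x}_0[i]^2\mid\vct{y}_t] \le B^2$ requires $|\vct{x}_0[i]| \le B$; I would therefore read the assumption as a bound on the magnitude $|\vct{x}_0[i]|$, which is natural for normalized image data. The remainder is a mechanical chain of two Jensen inequalities and one Lipschitz application, and it is worth noting that the Gaussian structure of $\vct{z}_t$ is never used beyond ensuring the posterior $q(\vct{x}_0 \mid \vct{y}_t)$ is well-defined—the argument relies only on boundedness of $\vct{x}_0$ and Lipschitzness of $\fwd{t'}{\cdot}$, so the bound holds for any observation model with bounded clean signals.
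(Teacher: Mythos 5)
Your proof is correct and follows essentially the same route as the paper's: Jensen's inequality to pull the norm inside the conditional expectation, the spatial Lipschitz bound, a second Jensen step ($\mathbb{E}[Z]\leq\sqrt{\mathbb{E}[Z^2]}$), and then bounding the posterior spread by the second moment (the paper phrases this as replacing the MMSE estimator $\mathbb{E}[\vct{x}_0|\vct{y}_t]$ with $0$, which is the same fact as your variance-$\leq$-second-moment step), finishing with boundedness of $\vct{x}_0$. Your observation that Assumption \ref{as:energy} must be read as a two-sided bound $|\vct{x}_0[i]|\leq B$ is accurate and applies equally to the paper's own final step.
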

\begin{proof}
\begin{align*}
	\norm{\mathbb{E} [\fwd{t'}{\vct{x}_0} | \vct{y}_t] - \fwd{t'}{\mathbb{E} [\vct{x}_0| \vct{y}_t]}} &\myleq{1} 
	\int \norm{\fwd{t'}{\vct{x}_0} - \fwd{t'}{\mathbb{E} [\vct{x}_0| \vct{y}_t]}} p(\vct{x}_0|\vct{y}_t) d\vct{x}_0  \\
	&\myleq{2}\sqrt{\int \norm{\fwd{t'}{\vct{x}_0} - \fwd{t'}{\mathbb{E} [\vct{x}_0| \vct{y}_t]}}^2 p(\vct{x}_0|\vct{y}_t) d\vct{x}_0}  \\
	&\leq L_x^{(t')}\sqrt{\int \norm{\vct{x}_0- \mathbb{E} [\vct{x}_0| \vct{y}_t]}^2 p(\vct{x}_0|\vct{y}_t) d\vct{x}_0} \\&\myleq{3}   L_x^{(t')}\sqrt{\int \norm{\vct{x}_0}^2 p(\vct{x}_0|\vct{y}_t) d\vct{x}_0} \\
	&\leq   L_x^{(t')}\sqrt{\int n B^2 p(\vct{x}_0|\vct{y}_t) d\vct{x}_0}  = L_x^{(t')} \sqrt{n} B
\end{align*}	
Here $(1)$ and $(2)$ hold due to Jensen's inequality, and in $(3)$ we use the fact that $ \mathbb{E} [\vct{x}_0| \vct{y}_t]$ is the minimum mean-squared error (MMSE) estimator of $\vct{x}_0$, thus we can replace it with $0$ to get an upper bound.
\end{proof}
\begin{thmn}{\textbf{3.4}}
	Let  $\hat{\mathcal{R}}(t, \Delta t; \vct{y}_t)  = \fwd{t-\Delta t}{\Phi_{\vct{\theta}}(\vct{y}_t, t)} -  \fwd{t}{\Phi_{\vct{\theta}}(\vct{y}_t, t)}$ denote our estimate of the incremental reconstruction, where $\Phi_{\vct{\theta}}(\vct{y}_t, t)$ is trained on the loss in $(13)$. Let  $\mathcal{R}^*(t, \Delta t; \vct{y}_t) = \mathbb{E}[	\mathcal{R}(t, \Delta t; \vct{x}_0) | \vct{y}_t]$ denote the MMSE estimator of $\mathcal{R}(t, \Delta t; \vct{x}_0)$. If Assumptions \ref{as:energy} and \ref{as:lipschitz} hold and the error in our score network is bounded by $\| s_{{\vct{\theta}}}(\vct{y}_t, t) -  \nabla_{\vct{y}_t}\log q_t(\vct{y}_t) \| \leq \frac{\epsilon_t}{\sigma_t^2}, ~\forall t \in [0,1]$, then 
	\begin{equation*}
		\|\hat{\mathcal{R}}(t, \Delta t; \vct{y}_t) - \mathcal{R}^*(t, \Delta t; \vct{y}_t)\| \leq \\ (L_x^{(t)} + L_x^{(t - \Delta t)}) \sqrt{n} B  + 2 L_t \Delta t +  2\epsilon_t.
	\end{equation*}
\end{thmn}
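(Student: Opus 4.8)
The plan is to reduce everything to two accuracy statements about the \emph{degraded} prediction and then finish by the triangle inequality. First I would convert the score-error hypothesis into a bound in measurement space. Applying Tweedie's formula to the Gaussian noising of the SDP gives $\nabla_{\yt}\log q_t(\yt) = \sigma_t^{-2}\left(\mathbb{E}[\fwd{t}{\xo}|\yt] - \yt\right)$; combining this with the parametrization $s_{\vct{\theta}}(\yt,t) = \sigma_t^{-2}(\fwd{t}{\Phi_{\vct{\theta}}(\yt,t)} - \yt)$ and the assumed bound $\norm{s_{\vct{\theta}} - \nabla_{\yt}\log q_t} \le \epsilon_t/\sigma_t^2$, the $\yt$ terms cancel and $\sigma_t^{-2}$ factors out, yielding the key estimate $\norm{\fwd{t}{\hat{\vct{x}}_0} - \mathbb{E}[\fwd{t}{\xo}|\yt]} \le \epsilon_t$, where $\hat{\vct{x}}_0 := \Phi_{\vct{\theta}}(\yt,t)$. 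This says the network's degraded prediction at the \emph{queried} severity $t$ matches the posterior mean of the degraded signal up to $\epsilon_t$.

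Next I would expand the target. Since $\mathcal{R}^*(t,\Delta t;\yt) = \mathbb{E}[\fwd{t-\Delta t}{\xo}|\yt] - \mathbb{E}[\fwd{t}{\xo}|\yt]$ by linearity of the conditional expectation, writing $P_{t'} := \fwd{t'}{\hat{\vct{x}}_0} - \mathbb{E}[\fwd{t'}{\xo}|\yt]$ gives $\hat{\mathcal{R}} - \mathcal{R}^* = P_{t-\Delta t} - P_t$, hence $\norm{\hat{\mathcal{R}} - \mathcal{R}^*} \le \norm{P_{t-\Delta t}} + \norm{P_t}$. The second piece is already controlled: $\norm{P_t}\le \epsilon_t$ from the step above.

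The main work is bounding $P_{t-\Delta t}$, and this is where I expect the real obstacle: the network was evaluated at severity $t$, so the score-error hypothesis only certifies accuracy at severity $t$, not at $t-\Delta t$. My plan is to bridge through severity $t$ and through the MMSE clean estimate $\bar{\vct{x}}_0 := \mathbb{E}[\xo|\yt]$. Introducing $\fwd{t-\Delta t}{\bar{\vct{x}}_0}$ splits $P_{t-\Delta t}$ into a Jensen-gap piece $\fwd{t-\Delta t}{\bar{\vct{x}}_0} - \mathbb{E}[\fwd{t-\Delta t}{\xo}|\yt]$, bounded by $L_x^{(t-\Delta t)}\sqrt{n}B$ directly from Lemma \ref{lem:posterior_mean_error}, and a remaining term $\fwd{t-\Delta t}{\hat{\vct{x}}_0} - \fwd{t-\Delta t}{\bar{\vct{x}}_0}$. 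For the latter I insert the severity-$t$ versions and chain three estimates: the two time-shifts, $\fwd{t-\Delta t}{\hat{\vct{x}}_0} - \fwd{t}{\hat{\vct{x}}_0}$ and $\fwd{t}{\bar{\vct{x}}_0} - \fwd{t-\Delta t}{\bar{\vct{x}}_0}$, each contribute $L_t\Delta t$ by time-Lipschitzness, while the middle term $\fwd{t}{\hat{\vct{x}}_0} - \fwd{t}{\bar{\vct{x}}_0}$ is bounded by $\epsilon_t + L_x^{(t)}\sqrt{n}B$ using the step-one estimate together with a second application of Lemma \ref{lem:posterior_mean_error} at severity $t$.

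Collecting the pieces, $\norm{P_{t-\Delta t}} \le (L_x^{(t)}+L_x^{(t-\Delta t)})\sqrt{n}B + 2L_t\Delta t + \epsilon_t$, and adding $\norm{P_t}\le\epsilon_t$ reproduces the stated bound. I would emphasize that the only nontrivial structural choice is the severity-$t$ bridge: since $\fwd{t}{\cdot}$ is non-invertible there is no way to convert the degraded-space accuracy $\norm{\fwd{t}{\hat{\vct{x}}_0}-\fwd{t}{\bar{\vct{x}}_0}}$ back into a bound on $\norm{\hat{\vct{x}}_0 - \bar{\vct{x}}_0}$, so one must keep every comparison in measurement space and route the $t-\Delta t$ quantities through the common severity $t$ where the network's guarantee actually lives. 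The Jensen-gap lemma together with the bounded-energy assumption are precisely what keep the two degraded-space comparisons at $t$ and $t-\Delta t$ finite.
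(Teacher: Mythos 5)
Your proposal is correct and follows essentially the same route as the paper's proof: the same Tweedie-based translation of the score error into $\norm{\fwd{t}{\Phi_{\vct{\theta}}(\yt,t)} - \mathbb{E}[\fwd{t}{\xo}|\yt]} \le \epsilon_t$, the same split into the two severity terms, the same Jensen-gap applications of Lemma \ref{lem:posterior_mean_error} at $t$ and $t-\Delta t$, and the same bridging of the $t-\Delta t$ term through severity $t$ via the time-Lipschitz bound, yielding identical constants. No gaps; the argument matches the paper's term for term.
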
 
\begin{proof}
	First, we note that due to Tweedie's formula, 
	\begin{equation*}
		\mathbb{E}[\fwd{t}{\vct{x}_0}| \vct{y}_t] = \vct{y}_t + \sigma_t^2 \nabla_{\vct{y}_t} \log q_t(\vct{y}_t).
	\end{equation*}
Since we parameterized our score model as 
\begin{equation*}
	s_{\vct{\theta}}(\vct{y}_t, t) =  \frac{ \fwd{t}{\Phi_{\vct{\theta}}(\vct{y}_t, t)} - \vct{y}_t}{\sigma_t^2},
	\end{equation*}
the assumption that $\| s_{{\vct{\theta}}}(\vct{y}_t, t) -  \nabla_{\vct{y}_t}\log q_t(\vct{y}_t) \| \leq \frac{\epsilon_t}{\sigma_t^2}$, is equivalent to 
\begin{equation}\label{eq:score_error}
	\norm{\fwd{t}{\Phi_{\vct{\theta}}(\vct{y}_t, t)}  - 	\mathbb{E}[\fwd{t}{\vct{x}_0}| \vct{y}_t] } \leq \epsilon_t.
		\end{equation}
	By applying the triangle inequality repeatedly, and applying Lemma \ref{lem:posterior_mean_error} and \eqref{eq:score_error}
	\begin{align*}
			\norm{\hat{\mathcal{R}}(t, \Delta t; \vct{y}_t) - \mathcal{R}^*(t, \Delta t; \vct{y}_t)} \\ & \hspace{-3.0cm}= \norm{\left(\fwd{t-\Delta t}{\Phi_{\vct{\theta}}(\vct{y}_t, t)} -  \fwd{t}{\Phi_{\vct{\theta}}(\vct{y}_t, t)}\right) - \left(\mathbb{E}[\fwd{t-\Delta t}{\vct{x}_0}|\vct{y}_t] -  \mathbb{E}[\fwd{t}{\vct{x}_0} | \vct{y}_t]\right)} \\
			& \hspace{-3.0cm}\leq \norm{\fwd{t-\Delta t}{\Phi_{\vct{\theta}}(\vct{y}_t, t)} -  \mathbb{E}[\fwd{t-\Delta t}{\vct{x}_0}|\vct{y}_t]} + \norm{ \fwd{t}{\Phi_{\vct{\theta}}(\vct{y}_t, t)}-  \mathbb{E}[\fwd{t}{\vct{x}_0} | \vct{y}_t]} \\
			& \hspace{-3.0cm}\leq  \norm{\fwd{t-\Delta t}{\Phi_{\vct{\theta}}(\vct{y}_t, t)} -\fwd{t-\Delta t}{\mathbb{E} [\vct{x}_0| \vct{y}_t]} + \fwd{t-\Delta t}{\mathbb{E} [\vct{x}_0| \vct{y}_t]}-  \mathbb{E}[\fwd{t-\Delta t}{\vct{x}_0}|\vct{y}_t]} + \epsilon_t \\ 
			& \hspace{-3.0cm}\leq   \norm{\fwd{t-\Delta t}{\Phi_{\vct{\theta}}(\vct{y}_t, t)} -\fwd{t-\Delta t}{\mathbb{E} [\vct{x}_0| \vct{y}_t]}} + L_x^{(t-\Delta t)} \sqrt{n} B + \epsilon_t \\
			& \hspace{-3.0cm} \leq  \norm{\fwd{t-\Delta t}{\Phi_{\vct{\theta}}(\vct{y}_t, t)} - \fwd{t}{\Phi_{\vct{\theta}}(\vct{y}_t, t)}} + \norm{\fwd{t}{\Phi_{\vct{\theta}}(\vct{y}_t, t)} - \fwd{t}{\mathbb{E}[\vct{x}_0| \vct{y}_t]}} \\
			& \hspace{-3.0cm}\quad+  \norm{\fwd{t}{\mathbb{E}[\vct{x}_0| \vct{y}_t]} - \fwd{t-\Delta t}{\mathbb{E} [\vct{x}_0| \vct{y}_t]}} + L_x^{(t-\Delta t)} \sqrt{n} B + \epsilon_t \\
			& \hspace{-3.0cm} \leq \norm{\fwd{t}{\Phi_{\vct{\theta}}(\vct{y}_t, t)} - \fwd{t}{\mathbb{E}[\vct{x}_0| \vct{y}_t]}}  + 2  L_t \Delta t + L_x^{(t-\Delta t)} \sqrt{n} B + \epsilon_t  \\
			& \hspace{-3.0cm}\leq \norm{\fwd{t}{\Phi_{\vct{\theta}}(\vct{y}_t, t)} - \mathbb{E}[\fwd{t}{\vct{x}_0}|\vct{y}_t]} + \norm{ \mathbb{E}[\fwd{t}{\vct{x}_0}|\vct{y}_t] - \fwd{t}{\mathbb{E}[\vct{x}_0| \vct{y}_t]}}   \\
			& \hspace{-3.0cm}\quad+ 2  L_t \Delta t + L_x^{(t-\Delta t)} \sqrt{n} B + \epsilon_t  \\
			& \hspace{-3.0cm}\leq 2  L_t \Delta t + (L_x^{(t-\Delta t)} + L_x^{(t)}) \sqrt{n} B + 2 \epsilon_t .
	\end{align*}

\end{proof}
We note that the appearance of $L_t$ in the upper bound provides a possible explanation why masking diffusion models are significantly worse in image generation than models relying on blurring, as observed in \citet{daras2022soft}. Masking leads to sharp jumps in pixel values at the border of the inpainting mask, thus $L_t$ can be arbitrarily large. This can be compensated to a certain degree by choosing a very small $\Delta t$ (very large number of sampling steps), which has also been observed in \citet{daras2022soft}. 
\subsection{Incremental reconstruction loss guarantee \label{apx:irn_loss}}
\begin{assumption} \label{as:G_lipschitz}
	The forward degradation transition function  $\mathcal{G}_{t' \rightarrow t''}$ for any $t', t'' \in [0, 1],~ t' < t''$ is Lipschitz continuous: $\| \mathcal{G}_{t' \rightarrow t''}(\vct{x}) - \mathcal{G}_{t' \rightarrow t''}(\vct{y})\| \leq L_G(t', t'') \|\vct{x} - \vct{y}\|, ~\forall t', t'' \in [0,1],~ t' < t'',~ \forall \vct{x},\vct{y} \in \mathbb{R}^n$.
\end{assumption}
This is a very natural assumption, as we don't expect the distance between two images after applying a degradation to grow arbitrarily large.
\begin{proposition} \label{thm:irn_loss}
	If the model $\Phi_{\vct{\theta}}(\vct{y}_t, t)$ has large enough capacity, such that $	\mathcal{L}_{IR}(\Delta t, {\vct{\theta}}) = 0$ is achieved, then $s_{\vct{\theta}}(\vct{y}_t, t) =  \nabla_{\vct{y}_t} \log q_t(\vct{y}_t), ~\forall t \in [0, 1]$. Otherwise, if Assumption \ref{as:G_lipschitz} holds, then we have 
	\begin{equation}
		\mathcal{L}({\vct{\theta}}) \leq \max_{t\in[0, 1]}( L_G(\tau, t)) \mathcal{L}_{IR}(\Delta t, {\vct{\theta}}) .
	\end{equation}
\end{proposition}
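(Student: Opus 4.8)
The plan is to prove the two assertions of the proposition separately, both resting on the composition identity $\fwd{t}{\cdot} = \mathcal{G}_{\tau \rightarrow t}\circ\fwd{\tau}{\cdot}$ supplied by the severity definition, together with the parametrization in \eqref{eq:parametrization}. For the equality case, I would first convert $\mathcal{L}_{IR}(\Delta t, \vct{\theta}) = 0$ into a pointwise statement: since $\mathcal{L}_{IR}$ is a nonnegative weighted expectation of $\norm{\fwd{\tau}{\Phi_{\vct{\theta}}(\vct{y}_t, t)} - \fwd{\tau}{\vct{x}_0}}^2$, its vanishing forces $\fwd{\tau}{\Phi_{\vct{\theta}}(\vct{y}_t, t)} = \fwd{\tau}{\vct{x}_0}$ for almost every $(\vct{x}_0, \vct{y}_t)$ and a.e.\ $t$, where $\tau = \max(t - \Delta t, 0)$. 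The key step is to promote this equality at the milder severity $\tau$ to one at severity $t$: because $\fwd{t}{\cdot}$ is more severe than $\fwd{\tau}{\cdot}$, the severity definition yields a map $\mathcal{G}_{\tau \rightarrow t}$ with $\fwd{t}{\vct{u}} = \mathcal{G}_{\tau \rightarrow t}(\fwd{\tau}{\vct{u}})$ for every $\vct{u}\in\mathbb{R}^n$; applying this to $\vct{u}=\vct{x}_0$ and to $\vct{u}=\Phi_{\vct{\theta}}(\vct{y}_t, t)$ (both legitimate since both lie in the domain $\mathbb{R}^n$) and using the equality at $\tau$ gives $\fwd{t}{\Phi_{\vct{\theta}}(\vct{y}_t, t)} = \fwd{t}{\vct{x}_0}$ a.e., i.e.\ $\mathcal{L}(\vct{\theta}) = 0$ in \eqref{eq:loss_final}; the edge case $\tau=0$ is identical using $\fwd{0}{\cdot}=\mathrm{Id}$ and $\mathcal{G}_{0 \rightarrow t}=\fwd{t}{\cdot}$. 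Finally, $\mathcal{L}(\vct{\theta}) = 0$ makes the parametrized score match the tractable conditional score for a.e.\ sample, which is precisely the denoising score-matching objective $J_2$ of \cref{thm:sm} attaining its global minimum $0$; since $J_1 = J_2 + C$ forces $J_1$ to its minimum $0$ as well, we conclude $s_{\vct{\theta}}(\vct{y}_t, t) = \nabla_{\vct{y}_t}\log q_t(\vct{y}_t)$.

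For the inequality case, I would argue pointwise and then integrate. Fixing $t$ and a sample $(\vct{x}_0, \vct{y}_t)$, the same composition identity applied to $\vct{x}_0$ and to $\Phi_{\vct{\theta}}(\vct{y}_t, t)$, combined with the Lipschitz continuity of $\mathcal{G}_{\tau \rightarrow t}$ from \cref{as:G_lipschitz}, gives
\begin{equation*}
\norm{\fwd{t}{\Phi_{\vct{\theta}}(\vct{y}_t, t)} - \fwd{t}{\vct{x}_0}} = \norm{\mathcal{G}_{\tau \rightarrow t}\!\left(\fwd{\tau}{\Phi_{\vct{\theta}}(\vct{y}_t, t)}\right) - \mathcal{G}_{\tau \rightarrow t}\!\left(\fwd{\tau}{\vct{x}_0}\right)} \leq L_G(\tau, t)\, \norm{\fwd{\tau}{\Phi_{\vct{\theta}}(\vct{y}_t, t)} - \fwd{\tau}{\vct{x}_0}}.
\end{equation*}
Squaring, multiplying by $w(t)$, bounding $L_G(\tau, t) \leq \max_{t\in[0,1]} L_G(\tau, t)$, and taking the expectation over $t$ and $(\vct{x}_0, \vct{y}_t)$ turns the left side into $\mathcal{L}(\vct{\theta})$ and the right side into $\big(\max_{t} L_G(\tau, t)\big)^2\,\mathcal{L}_{IR}(\Delta t, \vct{\theta})$, which is the stated bound (the squaring of the Lipschitz factor being the only subtlety, so the constant in \eqref{eq:loss_irn}'s comparison should be read as $\max_t L_G(\tau, t)^2$).

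I expect the main obstacle to lie in the equality case rather than the inequality, which is an essentially mechanical consequence of Lipschitzness once the factorization $\fwd{t}{\cdot}=\mathcal{G}_{\tau\rightarrow t}\circ\fwd{\tau}{\cdot}$ is in hand. The delicate point is the final implication: matching the \emph{conditional} score for every realization must be upgraded to recovering the true \emph{marginal} score $\nabla_{\vct{y}_t}\log q_t(\vct{y}_t)$, and this is exactly what the denoising score-matching guarantee (\cref{thm:sm}) delivers, so the argument must be routed through that result rather than asserted directly; equivalently, one may note that $\fwd{t}{\Phi_{\vct{\theta}}(\vct{y}_t,t)}$ is a deterministic function of $\vct{y}_t$ yet equals $\fwd{t}{\vct{x}_0}$ a.e., hence equals $\mathbb{E}[\fwd{t}{\vct{x}_0}\mid\vct{y}_t]$, and Tweedie's formula then identifies $s_{\vct{\theta}}$ with the true score. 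A secondary point of care is that the transition-function identity must be applied to the network output $\Phi_{\vct{\theta}}(\vct{y}_t, t)$, which is permitted precisely because the severity definition quantifies over all of $\mathbb{R}^n$ and not merely over the clean manifold $\mathcal{X}_0$.
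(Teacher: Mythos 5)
Your proof follows essentially the same route as the paper's: for the equality case, pointwise vanishing of the incremental reconstruction loss promoted to severity $t$ via the transition map $\mathcal{G}_{\tau\rightarrow t}$ and then concluded through the denoising score-matching guarantee (\cref{thm:sm}), and for the inequality, the Lipschitz bound on $\mathcal{G}_{\tau\rightarrow t}$ applied inside the expectation and pulled out as a maximum. Your flagged subtlety is also correct: the paper's own display keeps $L_G(\tau,t)$ unsquared after passing the Lipschitz inequality through the squared norm, so the rigorous constant is $\max_{t\in[0,1]} L_G(\tau,t)^2$ as you state, a cosmetic slip in the paper rather than a gap in your argument.
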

\begin{proof}
	We denote $\tau=\max(0, t - \Delta t)$.  First, if $	\mathcal{L}_{IR}(\Delta t, {\vct{\theta}}) = 0$, then $$ \fwd{\tau}{\Phi_{\vct{\theta}}(\vct{y}_t, t)}  =  \fwd{\tau}{\vct{x}_0}$$  for all $(\vct{x}_0, \vct{y}_t)$ such that $q_t(\vct{x}_0, \vct{y}_t) > 0$.  Applying the forward degradation transition function to both sides yields $$\mathcal{G}_{\tau\rightarrow t}( \fwd{\tau}{\Phi_{\vct{\theta}}(\vct{y}_t, t)}) = \mathcal{G}_{\tau\rightarrow t}(  \fwd{\tau}{\vct{x}_0}),$$ which is equivalent to $$ \fwd{t}{\Phi_{\vct{\theta}}(\vct{y}_t, t)}  =  \fwd{t}{\vct{x}_0}.$$ This in turn means that $\mathcal{L}({\vct{\theta}}) = 0$ and thus due to Theorem \ref{thm:sm} the score is learned.
	
	In the more general case, 
	\begin{align*}
			\mathcal{L}({\vct{\theta}}) &= \mathbb{E}_ {t, (\vct{x}_0, \vct{y}_t)}\left[w_t \left\|  \fwd{t}{\Phi_{\vct{\theta}}(\vct{y}_t, t)}  -  \fwd{t}{\vct{x}_0} \right\|^2\right] \\
			&=  \mathbb{E}_ {t, (\vct{x}_0, \vct{y}_t)}\left[w_t \left\|  \mathcal{G}_{\tau \rightarrow t}(\fwd{\tau}{\Phi_{\vct{\theta}}(\vct{y}_t, t)})  - \mathcal{G}_{\tau \rightarrow t}(\fwd{\tau}{\vct{x}_0}) \right\|^2\right]  \\
			&\leq \mathbb{E}_ {t, (\vct{x}_0, \vct{y}_t)}\left[w_t L_G(\tau, t) \left\|  \fwd{\tau}{\Phi_{\vct{\theta}}(\vct{y}_t, t)}  -\fwd{\tau}{\vct{x}_0} \right\|^2\right]  \\
			&\leq \max_{t\in[0, 1]}( L_G(\tau, t))\mathbb{E}_ {t, (\vct{x}_0, \vct{y}_t)}\left[w_t \left\|  \fwd{\tau}{\Phi_{\vct{\theta}}(\vct{y}_t, t)}  -\fwd{\tau}{\vct{x}_0} \right\|^2\right]  \\
			&=  \max_{t\in[0, 1]}( L_G(\tau, t)) \mathcal{L}_{IR}(\Delta t, {\vct{\theta}}) 
	\end{align*}
\end{proof}
This means that if the model has large enough capacity, minimizing the incremental reconstruction loss in \eqref{eq:loss_irn} also implies minimizing \eqref{eq:loss_final}, and thus the true score is learned (denoising is achieved). Otherwise, the incremental reconstruction loss is an upper bound on the loss in \eqref{eq:loss_final}. Training a model on \eqref{eq:loss_irn}, the model learns not only to denoise, but also to perform small, incremental reconstructions of the degraded image such that $\fwd{t-\Delta t}{\Phi_{\vct{\theta}}(\vct{y}_t, t)} \approx  \fwd{t-\Delta t}{\vct{x}_0}$. There is however a trade-off between incremental reconstruction performance and learning the score: as Proposition \ref{thm:irn_loss} indicates, we are optimizing an upper bound to \eqref{eq:loss_final} and thus it is possible that the score estimation is less accurate. We expect our proposed incremental reconstruction loss to work best in scenarios where the degradation may change rapidly with respect to $t$ and hence a network trained to accurately estimate $\fwd{t}{\vct{x}_0}$ from $\vct{y}_t$ may become inaccurate when predicting $\fwd{t-\Delta t}{\vct{x}_0}$ from $\vct{y}_t$. This hypothesis is further supported by our experiments in Section \ref{sec:exp}. Finally, we mention that in the extreme case where we choose $\Delta t = 1$, we obtain a loss function purely in clean image domain.

\subsection{Theorem 3.6 \label{apx:dc_thm}}
\begin{lemma}[Transitivity of data consistency]\label{lem:dc}
	If $\vct{y}_{t^+}\dc \vct{y}_t$ and $\vct{y}_{t^{++}}\dc \vct{y}_{t+}$ with $t < t^+ < t^{++}$, then $\vct{y}_{t^{++}} \dc \vct{y}_t$.
\end{lemma}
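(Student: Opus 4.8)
The plan is to produce an explicit witness for the conclusion, reusing the witness supplied by the first hypothesis. Unpacking the definition of data consistency, $\vct{y}_{t^+} \dc \vct{y}_t$ gives a clean image $\vct{x}_0 \in \mathcal{X}_0$ with $\fwd{t}{\vct{x}_0} = \vct{y}_t$ and $\fwd{t^+}{\vct{x}_0} = \vct{y}_{t^+}$, while $\vct{y}_{t^{++}} \dc \vct{y}_{t^+}$ gives a (possibly different) clean image $\vct{x}_0'\in\mathcal{X}_0$ with $\fwd{t^+}{\vct{x}_0'} = \vct{y}_{t^+}$ and $\fwd{t^{++}}{\vct{x}_0'} = \vct{y}_{t^{++}}$. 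To establish $\vct{y}_{t^{++}} \dc \vct{y}_t$ I would take $\vct{x}_0$ as the candidate witness: the identity $\fwd{t}{\vct{x}_0} = \vct{y}_t$ is already in hand, so the only thing left to verify is $\fwd{t^{++}}{\vct{x}_0} = \vct{y}_{t^{++}}$.

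The crux is that the more severe degradation at severity $t^{++}$ is completely determined by the degradation at severity $t^+$ through the forward degradation transition function. Since $t^+ < t^{++}$ and the process is monotonically degrading, the defining property of $\mathcal{G}_{t^+ \rightarrow t^{++}}$ yields $\fwd{t^{++}}{\vct{x}} = \mathcal{G}_{t^+ \rightarrow t^{++}}(\fwd{t^+}{\vct{x}})$ for every clean image $\vct{x}$. Applying this to $\vct{x}_0$ and to $\vct{x}_0'$ and using that both witnesses agree at severity $t^+$, namely $\fwd{t^+}{\vct{x}_0} = \vct{y}_{t^+} = \fwd{t^+}{\vct{x}_0'}$, I obtain
\begin{equation*}
\fwd{t^{++}}{\vct{x}_0} = \mathcal{G}_{t^+ \rightarrow t^{++}}(\fwd{t^+}{\vct{x}_0}) = \mathcal{G}_{t^+ \rightarrow t^{++}}(\fwd{t^+}{\vct{x}_0'}) = \fwd{t^{++}}{\vct{x}_0'} = \vct{y}_{t^{++}}.
\end{equation*}
Together with $\fwd{t}{\vct{x}_0} = \vct{y}_t$, this exhibits $\vct{x}_0$ as a single clean image explaining both $\vct{y}_t$ and $\vct{y}_{t^{++}}$, which is exactly $\vct{y}_{t^{++}} \dc \vct{y}_t$.

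The argument is short and the only point requiring care is the applicability of the transition function: I must check that $\vct{y}_{t^+} = \fwd{t^+}{\vct{x}_0} \in Image(\fwd{t^+}{\cdot})$, which is immediate since it is the image of a clean signal, so $\mathcal{G}_{t^+ \rightarrow t^{++}}$ is well-defined on it. I expect the main conceptual obstacle to be simply recognizing that transitivity is a statement about agreement \emph{propagating forward} along the degradation process: two witnesses that coincide at an intermediate severity are forced to coincide at every higher severity because $\fwd{t^{++}}{\cdot}$ factors through $\fwd{t^+}{\cdot}$. Note that this one-directional determinism is essential and cannot be reversed, consistent with the earlier observation that the reverse transition function $\mathcal{H}$ need not exist.
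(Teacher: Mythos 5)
Your proposal is correct and matches the paper's own proof essentially verbatim: both extract the two witnesses, use the factorization $\fwd{t^{++}}{\cdot} = \mathcal{G}_{t^+ \rightarrow t^{++}} \circ \fwd{t^+}{\cdot}$ guaranteed by monotonic degradation, and conclude via the same chain of equalities that the witness explaining $\vct{y}_t$ and $\vct{y}_{t^+}$ also explains $\vct{y}_{t^{++}}$. The only difference is cosmetic (your witness labels are swapped relative to the paper's), and your added remark about well-definedness of $\mathcal{G}_{t^+ \rightarrow t^{++}}$ on $Image(\fwd{t^+}{\cdot})$ is a harmless extra check.
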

\begin{proof}
	By the definition of data consistency $\vct{y}_{t^{++}}\dc \vct{y}_{t+} \Rightarrow \exists \vct{x}_0: \fwd{t^{++}}{\vct{x}_0} = \vct{y}_{t^{++}}$ and $\fwd{t^{+}}{\vct{x}_0} = y_{t^{+}}$. On the other hand,  $y_{t^{+}}\dc \vct{y}_{t} \Rightarrow \exists \vct{x}_0': \fwd{t^{+}}{\vct{x}_0'} = y_{t^{+}}$ and $\fwd{t}{\vct{x}_0'} = \vct{y}_{t}$. Therefore,
	$$\vct{y}_{t^{++}} = \fwd{t^{++}}{\vct{x}_0} = \mathcal{G}_{t^+ \rightarrow t^{++}}(\fwd{t^+}{\vct{x}_0}) = \mathcal{G}_{t^+ \rightarrow t^{++}}(\vct{y}_{t^+}) =   \mathcal{G}_{t^+ \rightarrow t^{++}}(\fwd{t^+}{\vct{x}_0'}) =  \fwd{t^{++}}{\vct{x}_0'}.$$ By the definition of data consistency, this implies $\vct{y}_{t^{++}} \dc \vct{y}_t$.
\end{proof}

\begin{thmn}{\textbf{3.6.}}
	Assume that we run the updates in \eqref{eq:update} with $s_{\vct{\theta}}(\vct{y}_t, t) =  \nabla_{\vct{y}_t} \log q_t(\vct{y}_t | \vct{x}_0), ~\forall t \in [0, 1]$ and $	\hat{\mathcal{R}}(t, \Delta t; \vct{y}_t) = \mathcal{R}(t, \Delta t; \vct{x}_0), ~ \vct{x}_0\in \mathcal{X}_0$. If we start from a noisy degraded observation $\vct{\tilde{y}} = \fwd{1}{\vct{x}_0} + \vct{z}_{1}, ~ \vct{x}_0\in \mathcal{X}_0, ~\vct{z}_{1} \sim \gaussian{\sigma_{1}^2} $ and run the updates in \eqref{eq:update} for $\tau = 1,1-\Delta t, ..., \Delta t, 0$, then we have
	\begin{equation}
		\mathbb{E}[ \vct{\tilde{y}} ] \dc  \mathbb{E}[ \vct{y}_\tau] , ~	\forall \tau \in [1,1-\Delta t, ..., \Delta t, 0].
	\end{equation}
\end{thmn}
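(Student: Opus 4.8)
The plan is to track the expectation $\mathbb{E}[\vct{y}_\tau]$ across the iterations and show that, under the idealized assumptions, it coincides exactly with the degraded clean signal $\fwd{\tau}{\vct{x}_0}$ at every severity level $\tau$ visited by the algorithm. Once this is established, data consistency is immediate from the definition: a single clean image $\vct{x}_0 \in \mathcal{X}_0$ simultaneously explains $\mathbb{E}[\vct{\tilde{y}}] = \fwd{1}{\vct{x}_0}$ and $\mathbb{E}[\vct{y}_\tau] = \fwd{\tau}{\vct{x}_0}$, so it serves as the required witness.

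First I would substitute the idealized ingredients into the update \eqref{eq:update}. Since the conditional score is $\nabla_{\vct{y}_t}\log q_t(\vct{y}_t\mid\vct{x}_0) = (\fwd{t}{\vct{x}_0} - \vct{y}_t)/\sigma_t^2$ and the incremental reconstruction is taken exactly as $\fwd{t-\Delta t}{\vct{x}_0} - \fwd{t}{\vct{x}_0}$, collecting the terms proportional to $\vct{y}_t$ collapses the one-step update to
\[
\vct{y}_{t-\Delta t} = \frac{\sigma_{t-\Delta t}^2}{\sigma_t^2}\bigl(\vct{y}_t - \fwd{t}{\vct{x}_0}\bigr) + \fwd{t-\Delta t}{\vct{x}_0} + \sqrt{\sigma_t^2 - \sigma_{t-\Delta t}^2}\,\vct{z}.
\]
The crucial algebraic point is that the coefficient multiplying the denoising residual combines with the leading identity term to produce exactly $\sigma_{t-\Delta t}^2/\sigma_t^2$, cleanly isolating the deviation $\vct{y}_t - \fwd{t}{\vct{x}_0}$.

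Next, taking expectation over all injected noise (which is independent and zero-mean, while $\fwd{t}{\vct{x}_0}$ and $\fwd{t-\Delta t}{\vct{x}_0}$ are deterministic given $\vct{x}_0$) yields the linear recursion
\[
\mathbb{E}[\vct{y}_{t-\Delta t}] = \frac{\sigma_{t-\Delta t}^2}{\sigma_t^2}\bigl(\mathbb{E}[\vct{y}_t] - \fwd{t}{\vct{x}_0}\bigr) + \fwd{t-\Delta t}{\vct{x}_0}.
\]
I would then prove $\mathbb{E}[\vct{y}_\tau] = \fwd{\tau}{\vct{x}_0}$ by induction over the severity grid. The base case holds because $\mathbb{E}[\vct{\tilde{y}}] = \fwd{1}{\vct{x}_0}$, as the measurement noise $\vct{z}_1$ is zero-mean. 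For the inductive step, assuming $\mathbb{E}[\vct{y}_t] = \fwd{t}{\vct{x}_0}$ makes the fluctuation term vanish, leaving $\mathbb{E}[\vct{y}_{t-\Delta t}] = \fwd{t-\Delta t}{\vct{x}_0}$.

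Finally, with $\mathbb{E}[\vct{\tilde{y}}] = \fwd{1}{\vct{x}_0}$ and $\mathbb{E}[\vct{y}_\tau] = \fwd{\tau}{\vct{x}_0}$ for the same $\vct{x}_0 \in \mathcal{X}_0$ and $\tau \leq 1$, the definition of data consistency is satisfied with this $\vct{x}_0$ as witness, giving $\mathbb{E}[\vct{\tilde{y}}] \dc \mathbb{E}[\vct{y}_\tau]$; alternatively one could establish consistency between consecutive iterates and chain them through the transitivity Lemma~\ref{lem:dc}, but the explicit recursion makes the direct argument shorter. I expect the main obstacle to be the bookkeeping in the first step: simplifying the update so that the noise-residual coefficient becomes precisely $\sigma_{t-\Delta t}^2/\sigma_t^2$, since this exact cancellation is what forces the deviation term to be annihilated once the inductive hypothesis is applied.
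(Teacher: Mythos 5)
Your proof is correct, and its core computation is the same as the paper's: you substitute the idealized score and incremental reconstruction into the update, and the key algebraic point you flag (the $\vct{y}_t$-coefficient collapsing to $\sigma_{t-\Delta t}^2/\sigma_t^2$) is exactly the cancellation the paper exploits. Where you diverge is in how the conclusion is reached. The paper maintains a stronger invariant: it substitutes $\vct{y}_t = \fwd{t}{\vct{x}_0} + \vct{z}_t$ and shows each iterate is again an exact SDP sample, $\vct{y}_{t-\Delta t} = \fwd{t-\Delta t}{\vct{x}_0} + \vct{z}_{t-\Delta t}$ with $\vct{z}_{t-\Delta t}$ a combined zero-mean Gaussian, then establishes consistency between \emph{consecutive} iterates and chains these via the transitivity lemma (Lemma \ref{lem:dc}). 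You instead track only first moments through the affine recursion $\mathbb{E}[\vct{y}_{t-\Delta t}] = \frac{\sigma_{t-\Delta t}^2}{\sigma_t^2}\left(\mathbb{E}[\vct{y}_t] - \fwd{t}{\vct{x}_0}\right) + \fwd{t-\Delta t}{\vct{x}_0}$, prove $\mathbb{E}[\vct{y}_\tau] = \fwd{\tau}{\vct{x}_0}$ by induction, and exhibit $\vct{x}_0$ itself as the common witness required by the definition of data consistency. This makes the transitivity lemma and the Gaussian-combination bookkeeping unnecessary, so your argument is leaner for the stated claim about expectations; the paper's route buys more, namely the distributional statement that every intermediate iterate is itself a sample of the SDP at lower severity, which underpins the paper's broader claims that intermediate iterates are interpretable as degradations of the same image (e.g., for early stopping).
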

\begin{proof}
	Assume that we start from a known measurement $\vct{\tilde{y}} := \vct{y}_t = \fwd{t}{\vct{x}_0} + \vct{z}_t$ at arbitrary time $t$ and run reverse diffusion from $t$ with time step $\Delta t$. Starting from $t=1$ that we have looked at in the paper is a subset of this problem. Starting from arbitrary $\vct{y}_t$, the first update takes the form
	\begin{align*}
		\vct{y}_{t-\Delta t} &= \vct{y}_t + \fwd{t - \Delta t}{\Phi_{\vct{\theta}}(\vct{y}_t, t)} - \fwd{t}{\Phi_{\vct{\theta}}(\vct{y}_t, t)} \\&- (\sigma_{t - \Delta t}^2 - \sigma_t^2) \frac{\fwd{t}{\Phi_{\vct{\theta}}(\vct{y}_t, t)} - \vct{y}_t}{\sigma_t^2} + \sqrt{\sigma_t^2 - \sigma_{t - \Delta t}^2} \vct{z} \\
		&=  \fwd{t}{\vct{x}_0} + \vct{z}_t + \fwd{t - \Delta t}{\Phi_{\vct{\theta}}(\vct{y}_t, t)} - \fwd{t}{\Phi_{\vct{\theta}}(\vct{y}_t, t)}  \\
		&- (\sigma_{t - \Delta t}^2 - \sigma_t^2) \frac{\fwd{t}{\Phi_{\vct{\theta}}(\vct{y}_t, t)} - \fwd{t}{\vct{x}_0} - \vct{z}_t }{\sigma_t^2} +\sqrt{\sigma_t^2 - \sigma_{t - \Delta t}^2} \vct{z} 
	\end{align*}
Due to our assumption on learning the score function, we have $\fwd{t}{\Phi_{\vct{\theta}}(\vct{y}_t, t)} = \fwd{t}{\vct{x}_0}$ and due to the perfect incremental reconstruction assumption $\fwd{t - \Delta t}{\Phi_{\vct{\theta}}(\vct{y}_t, t)} = \fwd{t - \Delta t}{\vct{x}_0}$. Thus, we have 
$$
\vct{y}_{t-\Delta t} = \fwd{t - \Delta t}{\vct{x}_0} + \frac{\sigma_{t - \Delta t}^2}{\sigma_t^2} \vct{z}_t + \sqrt{\sigma_t^2 - \sigma_{t - \Delta t}^2} \vct{z}. 
$$
Since $z$ and $\vct{z}_t$ are independent Gaussian, we can combine the noise terms to yield
\begin{equation}\label{eq:update_form}
\vct{y}_{t-\Delta t} = \fwd{t - \Delta t}{\vct{x}_0} + \vct{z}_{t-\Delta t}, 
\end{equation}
with $\vct{z}_{t - \Delta_t} \sim \mathcal{N}(\vct{0}, \left[\left(\frac{\sigma_{t-\Delta t}^2}{\sigma_t}\right)^2 + \sigma^2_{t} - \sigma^2_{t-\Delta t}\right] \mathbf{I})$. This form is identical to the expression on our original measurement $\vct{\tilde{y}} = \vct{y}_t = \fwd{t}{\vct{x}_0} + \vct{z}_t$, but with slightly lower degradation severity and noise variance. It is also important to point out that $\mathbb{E}[\vct{y}_t] \dc \mathbb{E}[\vct{y}_{t-\Delta t}]$. If we repeat the update to find $\vct{y}_{t - 2\Delta t}$, we will have the same form as in \eqref{eq:update_form} and $\mathbb{E}[\vct{y}_{t-\Delta t}] \dc \mathbb{E}[\vct{y}_{t-2\Delta t}]$. Due to the transitive property of data consistency (Lemma \ref{lem:dc}), we also have $\mathbb{E}[\vct{y}_{t}] \dc \mathbb{E}[\vct{y}_{t-2\Delta t}]$, that is data consistency is preserved with the original measurement. This reasoning can be then extended for every further update using the transitivity property, therefore we have data consistency in each iteration.
\end{proof}

\section{Guidance}\label{apx:guidance}
So far, we have only used our noisy observation $\vct{\tilde{y}} = \fwd{1}{\vct{x}_0} + \vct{z}_1$ as a starting point for the reverse diffusion process, however the measurement is not used directly in the update in \eqref{eq:update}. We learned the score of the prior distribution $\nabla_{\vct{y}_t} \log q_t(\vct{y}_t)$, which we can leverage to sample from the posterior distribution $ q_t(\vct{y}_t | \vct{\tilde{y}})$. In fact, using Bayes rule the score of the posterior distribution can be written as 
\begin{equation}
	\nabla_{\vct{y}_t} \log q_t(\vct{y}_t | \vct{\tilde{y}}) = \nabla_{\vct{y}_t} \log q_t(\vct{y}_t) + \nabla_{\vct{y}_t} \log q_t(\vct{\tilde{y}} | \vct{y}_t),
\end{equation}
where we already approximate  $\nabla_{\vct{y}_t} \log q_t(\vct{y}_t)$ via $s_{\vct{\theta}}(\vct{y}_t, t)$. Finding the posterior distribution analytically is not possible, and therefore we use the approximation
$
q_t(\vct{\tilde{y}} | \vct{y}_t) \approx q_t(\vct{\tilde{y}} | \Phi_{\vct{\theta}}(\vct{y}_t, t)),
$
from which distribution we can easily sample from. Since $q_t(\vct{\tilde{y}} | \Phi_{\vct{\theta}}(\vct{y}_t, t)) \sim \mathcal{N}(\fwd{1}{ \Phi_{\vct{\theta}}(\vct{y}_t, t)}, \sigma_1^2 \textbf{I})$, our estimate of the posterior score takes the form
\begin{equation}
	s'_{\vct{\theta}}(\vct{y}_t, t) = s_{\vct{\theta}}(\vct{y}_t, t)  - \eta_t \nabla_{\vct{y}_t}\frac{\|\vct{\tilde{y}} - \fwd{1}{ \Phi_{\vct{\theta}}(\vct{y}_t, t)} \|^2}{2 \sigma_1^2},
\end{equation}
where $\eta_t$ is a hyperparameter that tunes how much we rely on the original noisy measurement. Even though we do not need to rely on $\vct{\tilde{y}}$ after the initial update for our method to work, we observe small improvements by adding the above guidance scheme to our algorithm.

For the sake of simplicity, in this discussion we merge the scaling of the gradient into the step size parameter as follows: 
\begin{equation}\label{eq:supp_guidance}
	\small{
		s'_{\vct{\theta}}(\vct{y}_t, t) = s_{\vct{\theta}}(\vct{y}_t, t)  - \eta_t' \nabla_{\vct{y}_t}\|\vct{\tilde{y}} - \fwd{1}{ \Phi_{\vct{\theta}}(\vct{y}_t, t)} \|^2}
\end{equation}

We experiment with two choices of step size scheduling for the guidance term $\eta_t'$:
\begin{itemize}
	\item \textit{Standard deviation scaled (constant)}: $\eta_t = \eta \frac{1}{2\sigma_1^2}$, where $\eta$ is a constant hyperparameter and $\sigma_1^2$ is the noise level on the measurements. This scaling is justified by our derivation of the posterior score approximation, and matches \eqref{eq:supp_guidance}.
	\item \textit{Error scaled}: $\eta_t = \eta \frac{1}{\|\vct{\tilde{y}} - \fwd{1}{ \Phi_{\vct{\theta}}(\vct{y}_t, t)} \|}$, which has been proposed in \citet{chung2022diffusion}. This method attempts to normalize the gradient of the data consistency term. 
\end{itemize}
In general, we find that constant step size works better for deblurring, whereas error scaling performed slightly better for inpainting experiments, however the difference is minor. Figure \ref{fig:supp_abl_guidance} shows the results of our ablation study on the effect of  $\eta_t$. We perform deblurring experiments on the CelebA-HQ validation set and plot the mean LPIPS (lower the better) with different step size scheduling methods and varying step size. We see some improvement in LPIPS when adding guidance to our method, however it is not a crucial component in obtaining high quality reconstructions, or for maintaining data-consistency.
\begin{figure}[t]
	\centering
	\begin{subfigure}{.39\textwidth}
		\centering
		\includegraphics[width=0.9\linewidth]{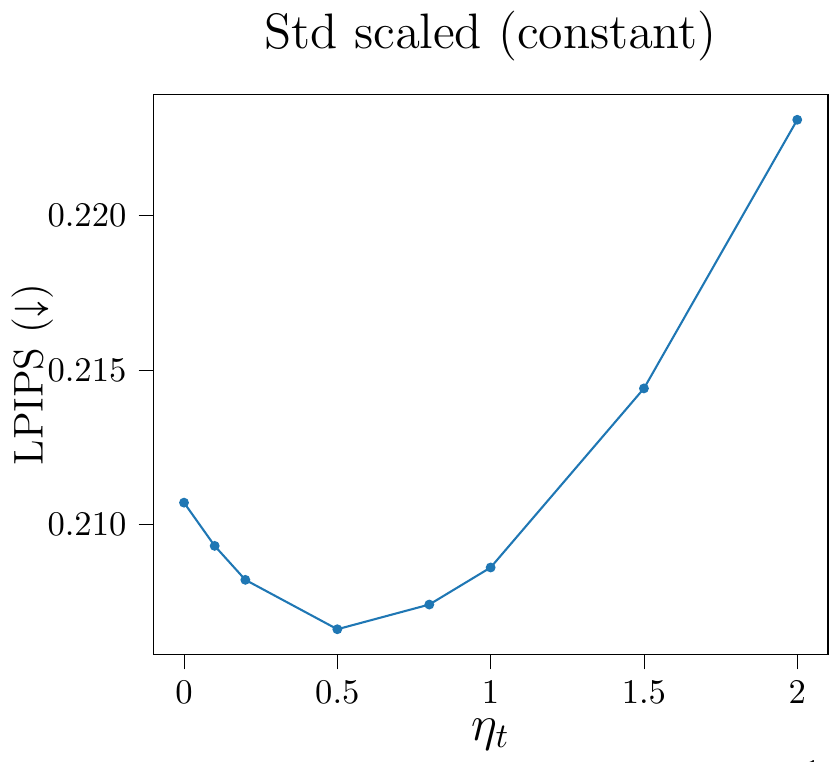}
	\end{subfigure}%
	\begin{subfigure}{.39\textwidth}
		\centering
		\includegraphics[width=0.9\linewidth]{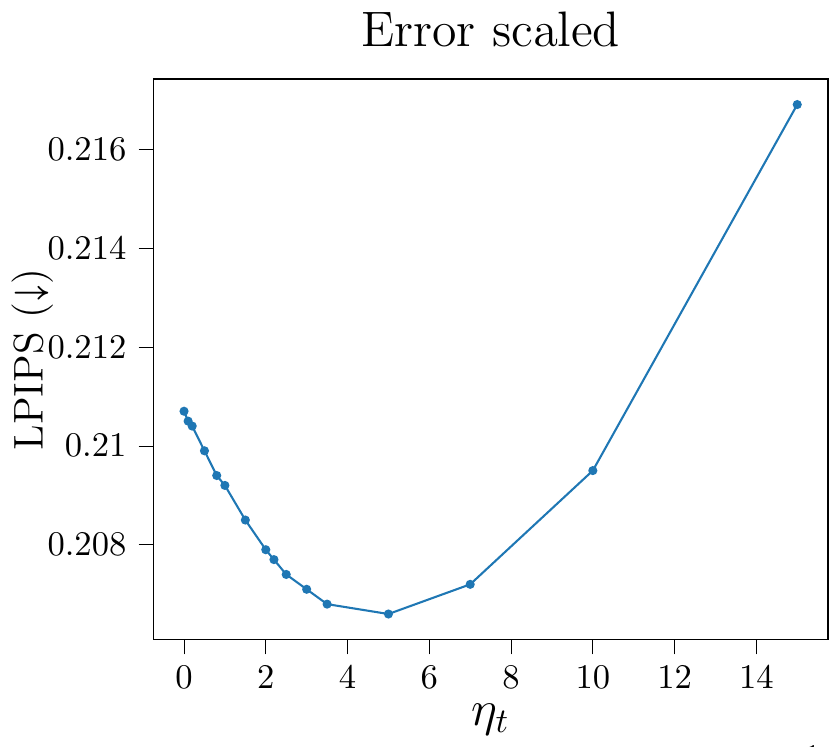}
	\end{subfigure}
	\caption{ Effect of guidance step size on best reconstruction in terms of LPIPS. We perform experiments on the CelebA-HQ validation set on the deblurring task.}
	\label{fig:supp_abl_guidance}
\end{figure}


\section{Degradation Scheduling}\label{apx:scheduling}
\begin{figure}[t]
	\centering
	\begin{subfigure}{.29\textwidth}
		\centering
		\includegraphics[width=0.9\linewidth]{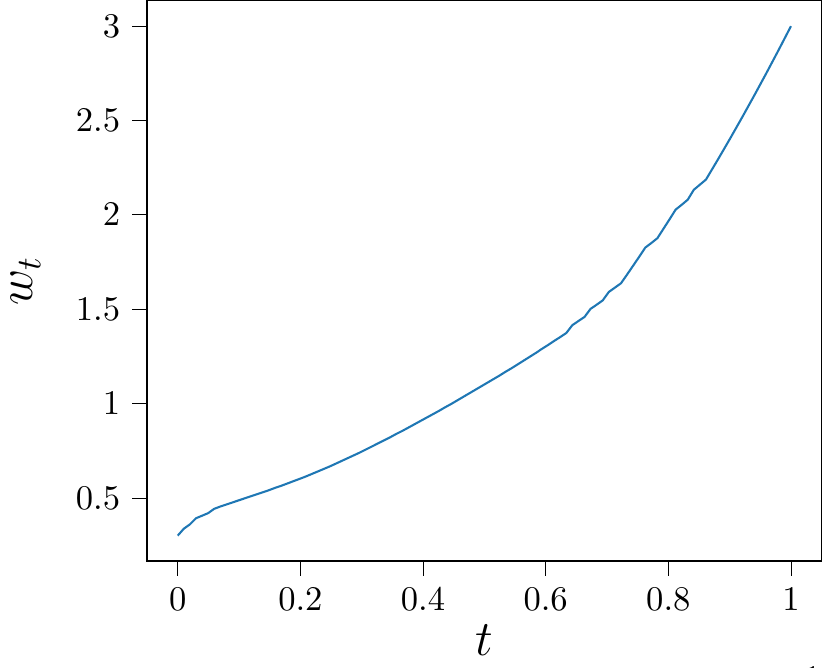}
	\end{subfigure}%
	\begin{subfigure}{.29\textwidth}
		\centering
		\includegraphics[width=0.9\linewidth]{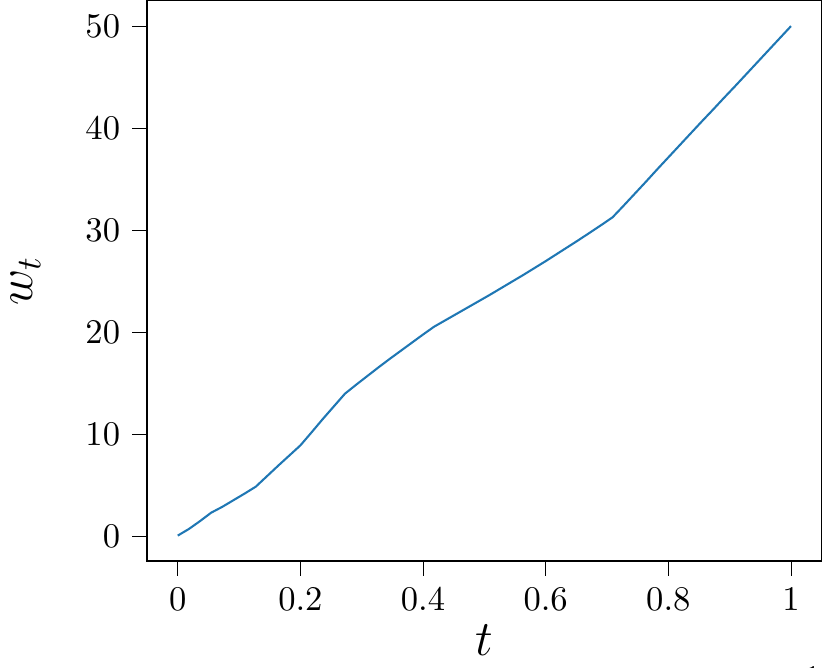}
	\end{subfigure}
	\begin{subfigure}{.29\textwidth}
		\centering
		\includegraphics[width=0.9\linewidth]{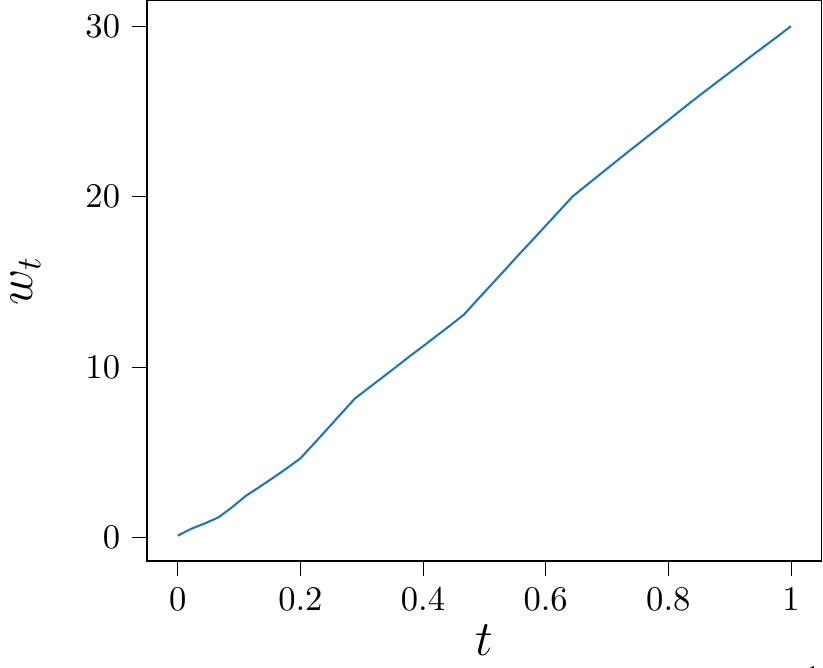}
	\end{subfigure}
	\caption{ Results of degradation scheduling from Algorithm \ref{alg:scheduling_simple}. \underline{Left}: Gaussian blur with kernel std $w_t$ on CelebA-HQ. \underline{Center}: inpainting with Gaussian mask with kernel width $w_t$ on CelebA-HQ. \underline{Right}:  inpainting with Gaussian mask on ImageNet.}
	\label{fig:supp_sched}
\end{figure}
When solving inverse problems, we have access to a noisy measurement $\vct{\tilde{y}} = \fwd{}{\vct{x}_0} + \vct{z}$ and we would like to find the corresponding clean image $\vct{x}_0$. In order to deploy our method, we need to define how the degradation changes with respect to severity $t$ following the properties specified in Definition \ref{def:sdp}. That is, we have to determine how to interpolate between the identity mapping $\fwd{0}{\vct{x}} = \vct{x}$ for $t=0$ and the most severe degradation $\fwd{1}{\cdot} = \fwd{}{\cdot}$ for $t=1$.  Theorem \ref{thm:basic_error} suggests that sharp changes in the degradation function with respect to $t$ should be avoided, however a more principled method of scheduling is needed.

 In the context of image generation, \citet{daras2022soft} proposes a scheduling framework that splits the path between the distribution of clean images $\mathcal{D}_0$ and the distribution of pure noise $\mathcal{D}_1$ into $T$ candidate distributions $\mathcal{D}_i, ~ i \in[1/T, 2/T, ..., \frac{T-1}{T}]$. Then, they find a path through the candidate distributions that minimizes the total path length, where the distance between $\mathcal{D}_i$ and $\mathcal{D}_j$ is measured by the Wasserstein-distance. However, for image reconstruction, instead of distance between image distributions, we are more interested in how much a given image degrades in terms of image quality metrics such as PSNR or LPIPS. Therefore, we replace the Wasserstein-distance by a notion of distance between two degradation severities $d(t_i, t_j) := \mathbb{E}_{\vct{x}_0\sim \mathcal{D}_0}[\mathcal{M}(\fwd{t_i}{\vct{x}_0}, \fwd{t_j}{\vct{x}_0})]$, where $\mathcal{M}$ is some distortion-based or perceptual image quality metric that acts on a corresponding pair of images. 

We propose a greedy algorithm to select a set of degradations from the set of candidates based on the above notion of dataset-dependent distance, such that the maximum distance is minimized. That is, our scheduler is not only a function of the degradation $\mathcal{A}_t$, but also the data. The intuitive reasoning to minimize the maximum distance is that our model has to be imbued with enough capacity to bridge the gap between any two consecutive distributions during the reverse process, and thus the most challenging transition dictates the required network capacity.  In particular, given a budget of $m$ intermediate distributions on $[0, 1]$, we would like to pick a set of $m$ interpolating severities $\mathcal{S}$ such that
\begin{equation}
	\mathcal{S} = arg \min_{\mathcal{T}} \max_{i} d(t_i, t_{i+1}),
\end{equation}
where $\mathcal{T} = \{t_1, t_2, ..., t_m | t_i \in [0,1], ~ t_i < t_{i+1}~ \forall i\in (1,2,...,m)\}$ is the set of possible interpolating severities with budget $m$. To this end, we start with $\mathcal{S} =\{0, 1\}$ and add new interpolating severities one-by-one, such that the new point splits the interval in $\mathcal{S}$ with the maximum distance. Thus, over iterations the maximum distance is non-increasing. We also have local optimality, as moving a single interpolating severity must increase the maximum distance by the construction of the algorithm. Finally, we use linear interpolation in between the selected interpolating severities. The technique is summarized in Algorithm \ref{alg:scheduling_simple}, and we refer the reader to the source code for implementation details. 

The results of our proposed greedy scheduling algorithm are shown in Figure \ref{fig:supp_sched}, where the distance is defined based on the LPIPS metric. In case of blurring, we see a sharp decrease in degradation severity close to $t=1$. This indicates, that LPIPS difference between heavily blurred images is small, therefore most of the diffusion takes place at lower blur levels. On the other hand, we find that inpainting mask size is scaled almost linearly by our algorithm on both datasets we investigated.

\begin{algorithm}
	\caption{Greedy Degradation Scheduling}\label{alg:scheduling_simple}
	\begin{algorithmic}
		\Require $\mathcal{M}$: pairwise image dissimilarity metric, $\mathcal{X}_0$: clean samples, $\mathcal{A}_t$: unscheduled degradation function, $N$: number of candidate points, $m$: number of interpolation points
		\State $ts \gets (0, \frac{1}{N-1}, \frac{2}{N-1}, ..., \frac{N-2}{N-1}, 1)$  \hfill\Comment{$N$ candidate severities uniformly distributed over $[0, 1]$}
		\State $\mathcal{S} \gets (1,  N)$  \hfill\Comment{Array of indices of output severities in $ts$}
		\State $d_{max} \gets Distance(ts[1], ts[N])$  \hfill\Comment{Maximum distance between two severities in the output array}
		\State $e_{start} \gets 1$  \hfill\Comment{Start index of edge with maximum distance}
		\State $e_{end} \gets N$ \hfill\Comment{End index of edge with maximum distance}
		\For{$i = 1 \text{ to }m$}
			\State $s \gets FindBestSplit(e_{start}, e_{end}, d_{max} )$
			\State $Append(\mathcal{S}, s )$
			\State $d_{max}, e_{start}, e_{end} \gets UpdateMax(\mathcal{S})$
		\EndFor
		\Ensure $\mathcal{S}$
		\item[]
		\Function{Distance}{$t_i, t_j$} \hfill\Comment{Find distance between degradation severities $t_i$ and $t_j$}
		\State $d \gets \frac{1}{|\mathcal{X}_0|}\sum_{x \in \mathcal{X}_0} \mathcal{M}(\mathcal{A}_{t_i}(x), \mathcal{A}_{t_j}(x))$
		\State \Return $d$
		\EndFunction
		\item[]
		\Function{FindBestSplit}{$e_{start}, e_{end}, d_{max}$ }\hfill\Comment{Split edge into two new edges with minimal maximum distance}
		\State $MaxDistance \gets  d_{max}$
		\For{$j = e_{start} + 1 \text{ to } e_{end} - 1$}
			\State $d_1 \gets Distance(ts[e_{start}],  ts[j])$
			\State $d_2 \gets Distance( ts[j], ts[e_{end}])$
			\If{$ \max(d_1, d_2) < MaxDistance$}
				\State $MaxDistance \gets \max(d_1, d_2)$
				\State $Split \gets j$
			\EndIf
		\EndFor
		\State \Return $Split$
		\EndFunction
		\item[]
		\Function{UpdateMax}{$\mathcal{S}$}
		\State $MaxDistance \gets 0$
		\For{$i = 1 \text{ to } |S| - 1$}
			\State $e_{start} \gets \mathcal{S}[i]$
			\State $e_{end} \gets \mathcal{S}[i+1]$
			\State $d \gets Distance(ts[e_{start}], ts[e_{end}] )$
			\If{$d > MaxDistance$}
				\State $MaxDistance \gets d $
				\State $NewStart \gets e_{start} $
				\State $NewEnd \gets e_{end}  $
			\EndIf
		\EndFor
		\State \Return $MaxDistance, NewStart, NewEnd$
		\EndFunction
	\end{algorithmic}
\end{algorithm}

\section{Note on the Output of \methodname{}} \label{apx:output_note}
 In the ideal case, $\sigma_0 = 0$ and $\mathcal{A}_{0} = \mathbf{I}$. However, in practice due to geometric noise scheduling (e.g. $\sigma_0 = 0.01$), there is small magnitude additive noise expected on the final iterate. Moreover, in order to keep the scheduling of the degradation smooth, and due to numerical stability in practice  $\mathcal{A}_{0}$ may slightly deviate from the identity mapping close to $t=0$ (for example very small amount of blur). Thus, even close to $t=0$, there may be a gap between the iterates $\yt$ and the posterior mean estimates $\hat{\vct{x}}_0=\Phi_\theta(\yt, t)$. Due to these reasons, we observe that in some experiments taking $\Phi_\theta(\yt, t)$ as the final output yields better reconstructions. In case of early stopping, taking $\hat{\vct{x}}_0$ as the output is instrumental, as an intermediate iterate $\yt$ represents a sample from the reverse SDP, thus it is expected to be noisy and degraded. However, as $\Phi_\theta(\yt, t)$ always predicts the clean image, it can be used at any time step $t$ to obtain an early-stopped prediction of $\xo$.

\section{Experimental Details}\label{apx:exp_details}
\textbf{Datasets -- }We evaluate our method on CelebA-HQ ($256 \times 256$) \citep{karras_progressive_2018} and ImageNet ($256 \times 256$) \citep{deng2009imagenet}. For CelebA-HQ training, we use $80\%$ of the dataset for training, and the rest for validation and testing. For ImageNet experiments, we sample $1$ image from each class from the official validation split to create disjoint validation and test sets of $1k$ images each. We only train our model on the official train split of ImageNet. We center-crop and resize ImageNet images to $256 \times 256$ resolution. For both datasets, we scale images to $[0, 1]$ range.

\textbf{Comparison methods -- } We compare our method against DDRM \citep{kawar2022denoising}, the most well-established diffusion-based linear inverse problem solver; DPS \citep{chung2022diffusion}, a very recent, state-of-the-art diffusion technique for noisy and possibly nonlinear inverse problems; PnP-ADMM \citep{chan2016plug}, a reliable traditional solver with learned denoiser; and ADMM-TV, a classical optimization technique. Furthermore, we perform comparison with InDI \cite{delbracio2023inversion} in Section \ref{apx:blending}. More details on comparison methods can be found in Section \ref{apx:comparison_methods}.

\textbf{Models --  } For \methodname{}, we train new models from scratch using the NCSN++\citep{song2020score} architecture with \texttt{67M} parameters for all tasks except for ImageNet inpainting, for which we scale the model to \texttt{126M} parameters. For competing methods that require a score model, we use pre-trained SDE-VP models\footnote{CelebA-HQ: \url{https://github.com/ermongroup/SDEdit} \\ ImageNet: \url{https://github.com/openai/guided-diffusion}} (\texttt{126M} parameters for CelebA-HQ, \texttt{553M} parameters for ImageNet). The architectural hyper-parameters for the various score-models can be seen in Table \ref{tab:arch_hparams}.

\textbf{Training details -- }We train all models with Adam optimizer, with learning rate $0.0001$ and batch size $32$ on $8 \times$ Titan RTX GPUs, with the exception of the large model used for ImageNet inpainting experiments which we trained on $8 \times$ A6000 GPUs. We only use exponential moving averaging for this large model. We train for approximately $10M$ examples seen by the network. For the weighting factor $w(t)$ in the loss, we set $w(t) = \frac{1}{\sigma_t^2}$ in all experiments.

\textbf{Degradations -- } We investigate two degradation processes of very different properties: Gaussian blur and inpainting, both with additive Gaussian noise. In all cases, noise with $\sigma_1 = 0.05$ is added to the measurements in the $[0, 1]$ range. We use standard geometric noise scheduling with $\sigma_{max} = 0.05$ and $\sigma_{min} = 0.01$ in the SDP. For Gaussian blur, we use a kernel size of $61$, with standard deviation of $w_{max} =3$ to create the measurements. We change the standard deviation of the kernel between $w_{max}$ (strongest) and $w_{min}=0.3$ (weakest) to parameterize the severity of Gaussian blur in the degradation process, and use the scheduling method described in Section \ref{apx:scheduling} to specify $\mathcal{A}_t$. In particular, we set
$$
\mathcal{A}_t^{blur}(\vct{x}) = \mtx{C}^{\Psi_t}\vct{x},
$$
where $\mtx{C}^{\Psi_t}$ is a matrix representing convolution with the Gaussian kernel $\Psi_t$. The degradation level is parameterized by the standard deviation of $\Psi_t$, and scheduled between $w_{max}=3.0$ at $t=1$ and $w_{min}=0.3$ at $t=0$.
 We keep an imperceptible amount of blur for $t=0$ to avoid numerical instability with very small kernel widths. For inpainting, we generate a smooth mask in the form $\mtx{M_t} = \left(1 - \frac{f(\vct{x}; w_t)}{\max_{\vct{x}} f(\vct{x}; w_t)}\right)^k$, where $f(\vct{x}; w_t)$ denotes the density of a zero-mean isotropic Gaussian with standard deviation $w_t$ that controls the size of the mask and $k=4$ for sharper transition. That is, the degradation process is defined as 
 $$
 \mathcal{A}_t^{inpaint}(\vct{x}) = \mtx{M}_t\vct{x}.
 $$
  We set $w_1 = 50$ for CelebA-HQ/FFHQ inpainting and $30$ for ImageNet inpainting, and set $ \mtx{M}_0 = \mtx{I}$ in all experiments. We determine the schedule of $w_t$ for $t \in (0,1)$ using Algorithm \ref{alg:scheduling_simple}.

\textbf{Evaluation method -- }To evaluate performance, we use PSNR and SSIM as distortion metrics and LPIPS and FID as perceptual quality metrics. For the final reported results, we scale and clip all outputs to the $[0, 1]$ range before computing the metrics. We use validation splits to tune the hyper-parameters for all methods, where we optimize for best LPIPS in the deblurring task and for best FID for inpainting. As the pre-trained score-models for competing methods have been trained on the full CelebA-HQ dataset, we test all methods for fair comparison on the first $1k$ images of the FFHQ \citep{karras2019style} dataset. The list of test images for ImageNet can be found in the source code.

\textbf{Sampling hyperparameters -- } The settings are summarized in Table \ref{tab:sampling}. We tune the reverse process hyper-parameters on validation data. For the interpretation of 'guidance scaling' we refer the reader to the explanation of guidance step size methods in Section \ref{apx:guidance}. In Table \ref{tab:sampling}, 'output' refers to whether the final reconstruction is the last model output (posterior mean estimate, $\hat{\vct{x}}_0=\Phi_\theta(\yt, t)$) or the final iterate $\yt$.

\begin{table}
	\centering
	\scriptsize
		\begin{tabular}{ lccccc }
			\toprule
			&\multicolumn{4}{c}{\textbf{\methodname{}(Ours)}} & \\
			\cmidrule{2-5}
			\textbf{Hparam} &Deblur/CelebA-HQ&Deblur/ImageNet&Inpainting/CelebA-HQ&Inpainting/ImageNet&\\
			\midrule
		    \code{model\_channels}       &$128$&$128$&$128$&$256$&  \\
			\code{channel\_mult} &$[1, 1, 2, 2, 2, 2, 2]$&$[1, 1, 2, 2, 2, 2, 2]$&$[1, 1, 2, 2, 2, 2, 2]$&$[1, 1, 2, 2, 4, 4]$&  \\
			\code{num\_res\_blocks}       &$2$&$2$&$2$&$2$&  \\
			\code{attn\_resolutions} &$[16]$&$[16]$&$[16]$&$[16]$&  \\
			\code{dropout}       &$0.1$&$0.1$&$0.1$&$0.0$&  \\
			\hline
			Total \# of parameters       &\code{67M}&\code{67M}&\code{67M}&\code{520M}&  \\
			\bottomrule
		\end{tabular}
		\centering
	\begin{tabular}{ lccccc }
		\toprule
		&\multicolumn{4}{c}{\textbf{DDRM/DPS}} & \\
		\cmidrule{2-5}
		\textbf{Hparam} &Deblur/CelebA-HQ&Deblur/ImageNet&Inpainting/CelebA-HQ&Inpainting/ImageNet&\\
		\midrule
		\code{model\_channels}       &$128$&$256$&$128$&$256$&  \\
		\code{channel\_mult} &$[1, 1, 2, 2, 4, 4]$&$[1, 1, 2, 2, 4, 4]$&$[1, 1, 2, 2, 4, 4]$&$[1, 1, 2, 2, 4, 4]$&  \\
		\code{num\_res\_blocks}       &$2$&$2$&$2$&$2$&  \\
		\code{attn\_resolutions} &$[16]$&$[32, 16, 8]$&$[16]$&$[32, 16, 8]$&  \\
		\code{dropout}       &$0.0$&$0.0$&$0.0$&$0.0$&  \\
		\hline
		Total \# of parameters       &\code{126M}&\code{553M}&\code{126M}&\code{553M}&  \\
		\bottomrule
	\end{tabular}
	\normalsize
	\caption{Architectural hyper-parameters for the score-models for \methodname{} (top) and other diffusion-based methods (bottom) in our experiments.\label{tab:arch_hparams}}
\end{table}

\begin{table}
	\centering
	\scriptsize
	\begin{tabular}{ lccccc }
		\toprule
		&\multicolumn{4}{c}{\textbf{PO Sampling hyper-parameters}} & \\
		\cmidrule{2-5}
		\textbf{Hparam} &Deblur/CelebA-HQ&Deblur/ImageNet&Inpainting/CelebA-HQ&Inpainting/ImageNet&\\
		\midrule
		$\Delta t$     &$0.02$&$0.02$&$0.005$&$0.01$&  \\
		$t_{stop}$ &$0.25$&$0.0$&$0.0$&$0.0$&  \\
		$\eta_t$       &$0.5$&$0.2$&$1.0$&$0.0$&  \\
		Guidance scaling &std&std&error&-&  \\
		Output       &$\hat{\vct{x}}_0$&$\hat{\vct{x}}_0$&$\yt$&$\yt$&  \\
		\bottomrule
	\end{tabular}
	\begin{tabular}{ lccccc }
	\toprule
	&\multicolumn{4}{c}{\textbf{DO Sampling hyper-parameters}} & \\
	\cmidrule{2-5}
	\textbf{Hparam} &Deblur/CelebA-HQ&Deblur/ImageNet&Inpainting/CelebA-HQ&Inpainting/ImageNet&\\
	\midrule
	$\Delta t$     &$0.02$&$0.02$&$0.005$&$0.01$&  \\
	$t_{stop}$ &$0.98$&$0.7$&$0.995$&$0.99$&  \\
	$\eta_t$       &$0.5$&$1.5$&$1.0$&$0.0$&  \\
	Guidance scaling &std&std&error&-&  \\
	Output       &$\hat{\vct{x}}_0$&$\hat{\vct{x}}_0$&$\hat{\vct{x}}_0$&$\hat{\vct{x}}_0$&  \\
	\bottomrule
\end{tabular}
\normalsize
\caption{Settings for perception optimized (PO) and distortion optimized (DO) sampling for all experiments on test data. \label{tab:sampling}}
\end{table}

\section{Comparison with Blending \label{apx:blending}}
Our proposed method interpolates between degraded and clean distributions via a SDP. A parallel line of work \citep{delbracio2023inversion, heitz2023iterative} considers an alternative formulation in which the intermediate distributions are convex combinations of degraded-clean image pairs, that is $ \vct{y_t} = t \vct{\tilde{y}} + (1-t) \vct{x_0}$. We compare the InDI \citep{delbracio2023inversion} formulation to \methodname{} on the FFHQ dataset (Table \ref{tab:indi_compare}). We observe comparable results on the deblurring task, however the blending parametrization is not suitable for inpainting as reflected by the large gap in FID. To see this, we point out that in \methodname{} $t$ directly parametrizes the severity of the degradation, that is our model learns a continuum of reconstruction problems with smoothly changing difficulty. On the other hand, blending missing pixels with the clean image does not offer a smooth transition in terms of reconstruction difficulty: for any $0 \leq t < 1$ the reconstruction of $\vct{x_0}$  from $ \vct{y_t}$ becomes trivial. Furthermore, as our model is trained on a wide range of noise levels due to the SDP formulation, we observe improved robustness to test-time perturbations in measurement noise compared to the blending formulation (Fig. \ref{fig:robustness}).

\begin{figure}
	\begin{minipage}{.49\linewidth}
		\centering
		\resizebox{8.0cm}{!}{
			\begin{tabular}{ lccccc }
				\toprule
				&\multicolumn{2}{c}{\textbf{Deblurring}} & &\multicolumn{2}{c}{\textbf{Inpainting}} \\
				\cmidrule{2-3}\cmidrule{5-6}
				\textbf{Method} &LPIPS$(\downarrow)$&FID$(\downarrow)$& &LPIPS$(\downarrow)$&FID$(\downarrow)$ \\
				\midrule
				Blending (InDI \citep{delbracio2023inversion})& \textbf{0.2604}&56.27& &\textbf{0.2424}& 54.08 \\
				Dirac-PO (ours) &0.2716&\textbf{53.36}& &0.2626&\textbf{39.43} \\
				\bottomrule \vspace{0.025cm}
			\end{tabular}
		}
		\captionof{table}{\label{tab:indi_compare} Comparison with blending schedule on the FFHQ test split. }
	\end{minipage}\hfill
	\begin{minipage}{.39\linewidth}
		\centering
		\includegraphics[width=0.9\linewidth]{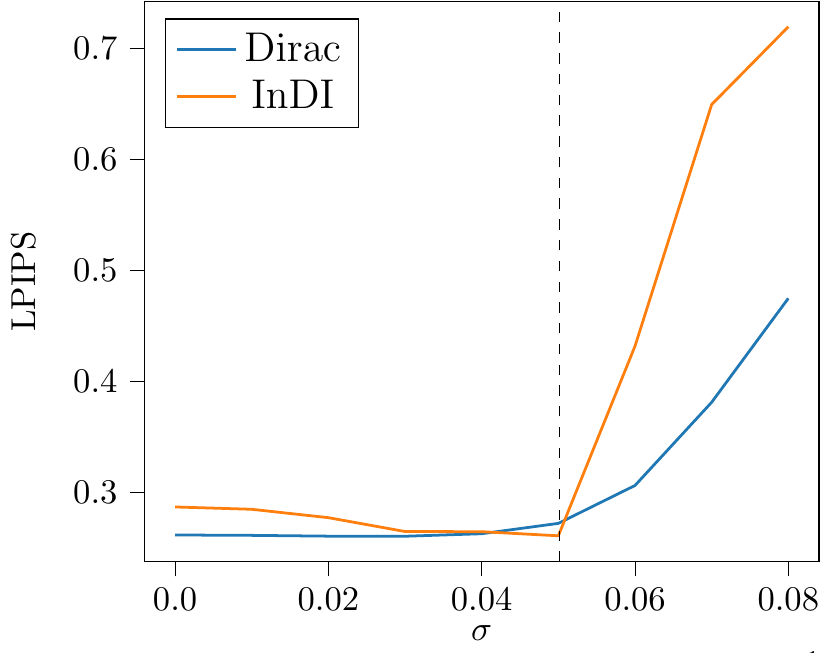}%
		\caption
		{%
			Robustness experiment: we simulate a mismatch between train and test noise levels (FFHQ test split, deblurring). Dirac is more robust to perturbations in measurement noise variance.
			\label{fig:robustness}%
		}%
	\end{minipage}
\end{figure}
\section{Robustness Ablations}\label{apx:robustness}
\textbf{Degradation severity --} We evaluate robustness of \methodname{} against test-time perturbations in the forward process for Gaussian blur. In particular, suppose that the standard deviation of the Gaussian blur kernel is perturbed with a multiplicative factor of $k$ (i.e., $w_{perp} = kw_{max}$). We pick $k \in [0.6,0.8,1.0,1.2,1.4]$ and plot the change in distortion (SSIM) and perception (LPIPS) metrics on the FFHQ test split (see Figure \ref{fig:blur_robustness_ffhq}) using our perception-optimized model. We observe that, as is the case for other supervised methods, reconstruction performance degrades (in terms of both distortion and perception metrics) when the degradation model is significantly changed. Nevertheless, we observe that the performance of \methodname{} is almost unchanged under blur kernel standard deviation reductions of up to $20\%$, which is a significant perturbation. We hypothesize that the robustness of \methodname{} to forward model shifts is due to fact that the model is trained on a range of degradation severities in the input. Furthermore, we observe that distortion metrics, such as SSIM, degrade less gracefully in the increased severity direction, while perception metrics, such as LPIPS, behave in the opposite manner. We expect our distortion optimized model to be more robust in terms of distortion metric degradation when the forward model is perturbed.

\textbf{Measurement noise--} We test the robustness of \methodname{} against perturbations of measurement noise variance compared to the training setup. We evaluate our perception-optimized model, trained under measurement noise with $\sigma=0.05$, on the FFHQ test split on the gaussian deblurring task with measurement noise standard deviations in $\sigma = [ 0.0, 0.02, 0.04, 0.05, 0.06, 0.08]$. Our model demonstrates great performance when the nosie level is decreased with improved performance in terms of LPIPS compared to the training setting (see Figure \ref{fig:robustness}).  For higher noise variances, the performance of \methodname{} degrades more gracefully than other similar techniques such as InDI \citep{delbracio2023inversion} (see more discussion in Appendix \ref{apx:blending}). 
\begin{figure}[t]
	\centering
	\includegraphics[width=0.95\linewidth]{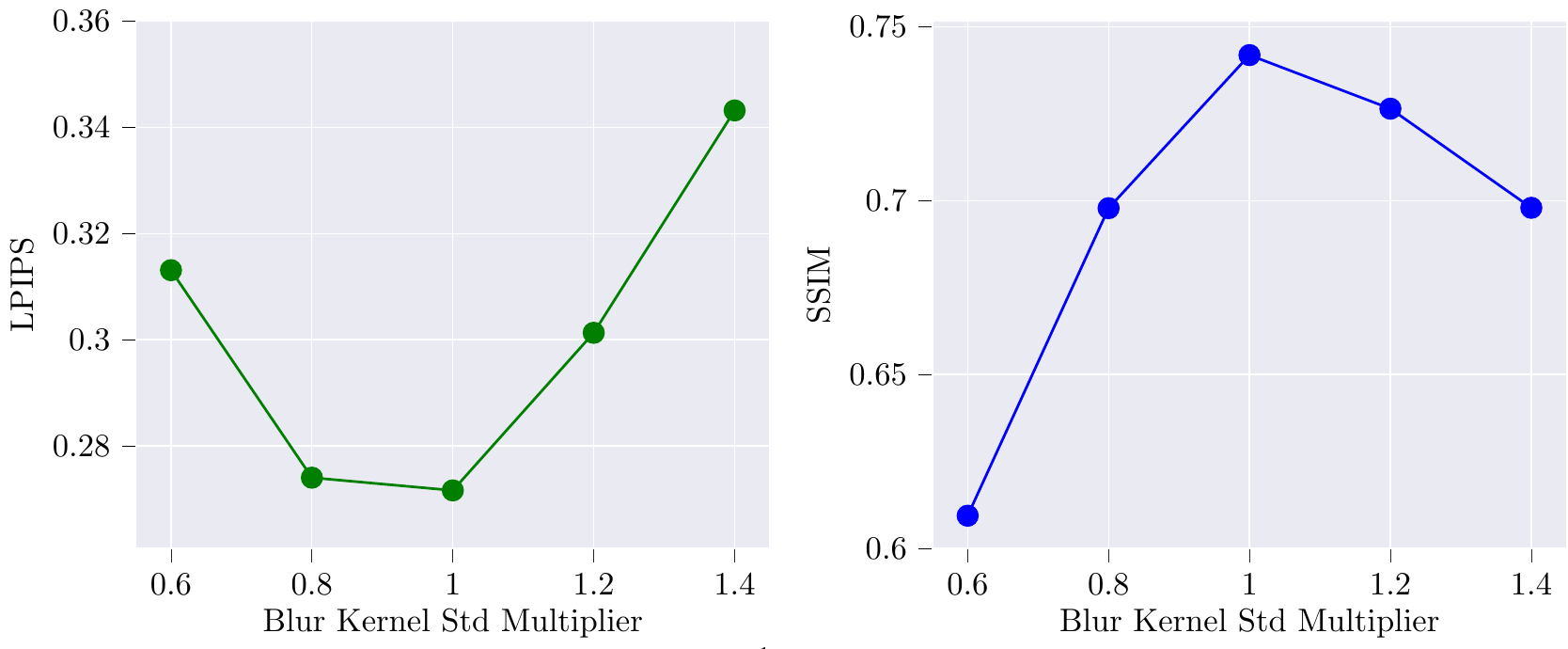}
	\caption{Effect of Gaussian blur kernel width perturbation on the FFHQ test set for the deblurring task. The change in the LPIPS metric (left) together with the SSIM metric (right) is shown. }
	\label{fig:blur_robustness_ffhq}
\end{figure}
\section{Incremental Reconstruction Loss Ablations\label{apx:irn_ablations}}
We propose the incremental reconstruction loss, that combines learning to denoise and reconstruct simultaneously in the form 
\begin{equation} \label{eq:supp_loss_irn}
	\mathcal{L}_{IR}(\Delta t, {\vct{\theta}}) = \\ \mathbb{E}_ {t, (\vct{x}_0, \vct{y}_t)}\left[w(t) \left\|  \fwd{\tau}{\Phi_{\vct{\theta}}(\vct{y}_t, t)}  -  \fwd{\tau}{\vct{x}_0} \right\|^2\right],
\end{equation}
where $\tau = \text{max}(t-\Delta t, 0), ~ t\sim U[0,1],$ $(\vct{x}_0, \vct{y}_t)\sim q_0(\vct{x}_0) q_t(\vct{y}_t | \vct{x}_0)$. This loss directly improves incremental reconstruction by encouraging $\fwd{t-\Delta t}{\Phi_{\vct{\theta}}(\vct{y}_t, t)} \approx \fwd{t-\Delta t}{\xo}$. We show in Proposition \ref{thm:irn_loss} that  $\mathcal{L}_{IR}(\Delta t, {\vct{\theta}})$ is an upper bound to the denoising score-matching objective $\mathcal{L}(\vct{\theta})$. Furthermore, we show that given enough model capacity,  minimizing  $\mathcal{L}_{IR}(\Delta t, {\vct{\theta}})$ also minimizes $\mathcal{L}(\vct{\theta})$. However, if the model capacity is limited compared to the difficulty of the task, we expect a trade-off between incremental reconstruction accuracy and score accuracy. This trade-off might not be favorable in tasks where incremental reconstruction is accurate enough due to the smoothness properties of the degradation (see Theorem \ref{thm:basic_error}). Here, we perform further ablation studies to investigate the effect of the \textit{look-ahead} parameter $\Delta t$ in the incremental reconstruction loss. 

\textbf{Deblurring -- }In case of deblurring, we did not find a significant difference in perceptual quality with different $\Delta t$ settings. Our results on the CelebA-HQ validation set can be seen in Figure \ref{fig:supp_inc_recon_loss} (left). We observe that using $\Delta t =0$ (that is optimizing $\mathcal{L}(\vct{\theta})$) yields slightly better reconstructions (difference in the third digit of LPIPS) than optimizing with $\Delta t = 1$, that is minimizing 
\begin{equation}
	\mathcal{L}_{IR}(\Delta t = 1, {\vct{\theta}}) := 	\mathcal{L}_{IR}^{\mathcal{X}_0}(\vct{\theta}) =\\ \mathbb{E}_ {t, (\vct{x}_0, \vct{y}_t)}\left[w(t) \left\|  \Phi_{\vct{\theta}}(\vct{y}_t, t)  -  \vct{x}_0 \right\|^2\right].
\end{equation}
This loss encourages one-shot reconstruction and denoising from any degradation severity, intuitively the most challenging task to learn. We hypothesize, that the blur degradation used in our experiments is smooth enough, and thus the incremental reconstruction as per Theorem \ref{thm:basic_error} is accurate. Therefore, we do not need to trade off score approximation accuracy for better incremental reconstruction.

\textbf{Inpainting -- } We observe very different characteristics in case of inpainting. In fact, using the vanilla score-matching loss $\mathcal{L}(\vct{\theta})$, which is equivalent to $\mathcal{L}_{IR}(\Delta t, {\vct{\theta}})$ with $\Delta t =0$, we are unable to learn a meaningful inpainting model. As we increase the look-ahead $\Delta t$, reconstructions consistently improve. We obtain the best results in terms of FID when minimizing $\mathcal{L}_{IR}^{\mathcal{X}_0}(\vct{\theta})$. Our results are summarized in Figure \ref{fig:supp_inc_recon_loss} (middle). We hypothesize that due to rapid changes in the inpainting operator, our incremental reconstruction estimator produces very high errors when trained on $\mathcal{L}(\vct{\theta})$ (see Theorem \ref{thm:basic_error}). Therefore, in this scenario improving incremental reconstruction at the expense of score accuracy is beneficial. Figure \ref{fig:supp_inc_recon_loss} (right) demonstrates how reconstructions visually change as we increase the look-ahead $\Delta t$. With $\Delta t =0$, the reverse process misses the clean image manifold completely. As we increase $\Delta t$, reconstruction quality visually improves, but the generated images often have features inconsistent with natural images in the training set. We obtain high quality, detailed reconstructions for $\Delta t = 1$ when minimizing $\mathcal{L}_{IR}^{\mathcal{X}_0}(\vct{\theta})$. 
\begin{figure}[t]
	\centering
	\begin{subfigure}{.24\textwidth}
		\centering
		\includegraphics[width=0.95\linewidth]{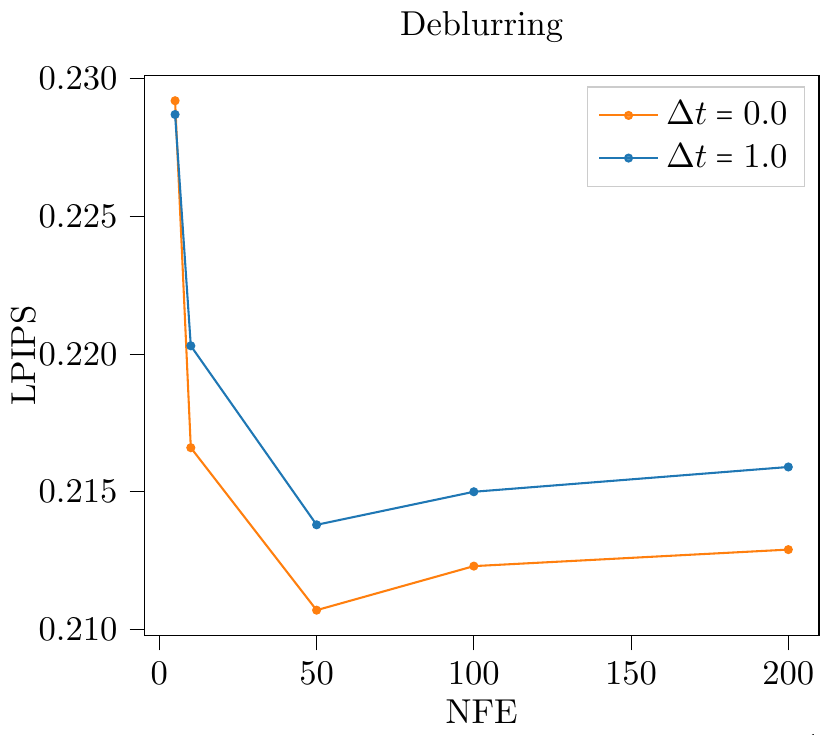}
	\end{subfigure}%
	\begin{subfigure}{.24\textwidth}
		\centering
		\includegraphics[width=0.95\linewidth]{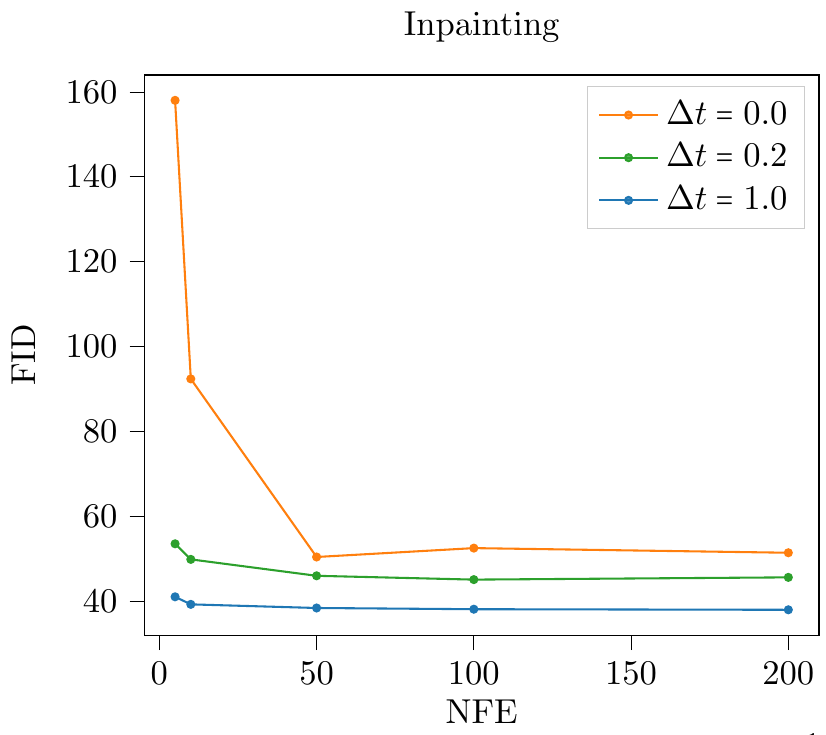}
	\end{subfigure}
	\begin{subfigure}{.5\textwidth}
	\centering
	\includegraphics[width=1.0\linewidth]{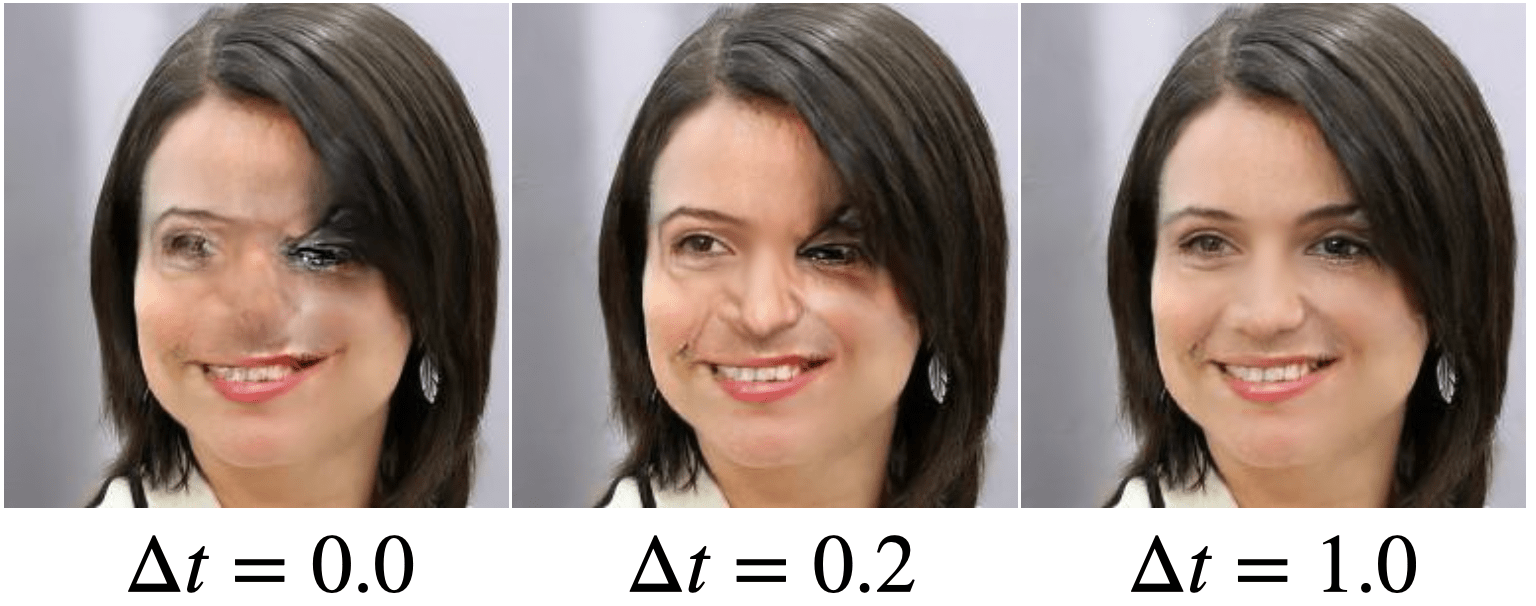}
\end{subfigure}
	\caption{ Effect of incremental reconstruction loss step size on the CelebA-HQ validation set for deblurring (left) and inpainting (middle). Visual comparison of inpainted samples is shown on the right.}
	\label{fig:supp_inc_recon_loss}
\end{figure}
\section{Further Incremental Reconstruction Approximations}\label{apx:inc_rec_approx}
In this work, we focused on estimating the incremental reconstruction
\begin{equation}\label{eq:supp_R}
		\mathcal{R}(t, \Delta t; \vct{x}_0) := \fwd{t-\Delta t}{\vct{x}_0} -  \fwd{t}{\vct{x}_0}
\end{equation}
in the form 
\begin{equation}\label{eq:supp_R_hat}
		\hat{\mathcal{R}}(t, \Delta t; \vct{y}_t) = \fwd{t-\Delta t}{\Phi_{\vct{\theta}}(\vct{y}_t, t)} -  \fwd{t}{\Phi_{\vct{\theta}}(\vct{y}_t, t)},
\end{equation} 
which we call the \textit{look-ahead method}. The challenge with this formulation is that we use $\yt$ with degradation severity $t$ to predict $\fwd{t-\Delta t}{\xo}$ with less severe degradation $t - \Delta t$. That is, as we discussed in the paper $\Phi_\theta(\yt, t)$ does not only need to denoise images with arbitrary degradation severity, but also has to be able to perform incremental reconstruction, which we address with the incremental reconstruction loss. However, other methods of approximating \eqref{eq:supp_R} are also possible, with different trade-offs. The key idea is to use different methods to estimate the gradient of $\mathcal{A}_t(\xo)$ with respect to the degradation severity, followed by first-order Taylor expansion to estimate $\mathcal{A}_{t-\Delta t}(\xo)$. 

\textbf{Small look-ahead (SLA) -- }We use the approximation
\begin{equation}
	\fwd{t-\Delta t}{\vct{x}_0} -  \fwd{t}{\vct{x}_0} \approx  \Delta t \cdot \frac{\fwd{t-\delta t}{\vct{x}_0} -  \fwd{t}{\vct{x}_0}}{\delta t},
\end{equation}
where $0 < \delta t < \Delta t$ to obtain
 \begin{equation}\label{eq:supp_R_hat_sla}
	\hat{\mathcal{R}}^{SLA} (t, \Delta t; \vct{y}_t)=  \Delta t \cdot \frac{\fwd{t-\delta t}{\Phi_{\vct{\theta}}(\vct{y}_t, t)} -  \fwd{t}{\Phi_{\vct{\theta}}(\vct{y}_t, t)}}{\delta t}.
\end{equation}
The potential benefit of this method is that $\fwd{t-\delta t}{\Phi_{\vct{\theta}}(\vct{y}_t, t)}$ may approximate $\fwd{t-\delta t}{\xo}$ much more accurately than $\fwd{t-\Delta t}{\Phi_{\vct{\theta}}(\vct{y}_t, t)}$ can approximate $\fwd{t-\Delta t}{\xo}$, since $t-\delta t$ is closer in severity to $t$ than $t - \Delta t$. However, depending on the sharpness of $\mathcal{A}_t$, the first-order Taylor approximation may accumulate large error.

\textbf{Look-back (LB) -- } We use the approximation
\begin{equation}
	\fwd{t-\Delta t}{\vct{x}_0} -  \fwd{t}{\vct{x}_0} \approx \fwd{t}{\vct{x}_0} -  \fwd{t + \Delta t}{\vct{x}_0},
\end{equation}
that is we predict the incremental reconstruction based on the most recent change in image degradation.  Plugging in our model yields
\begin{equation}
	\hat{\mathcal{R}}^{LB}(t, \Delta t; \vct{y}_t) = \fwd{t}{\Phi_{\vct{\theta}}(\vct{y}_t, t)} -  \fwd{t+\Delta t}{\Phi_{\vct{\theta}}(\vct{y}_t, t)}.
\end{equation} 
The clear advantage of this formulation over \eqref{eq:supp_R_hat} is that if the loss in \eqref{eq:loss_final} is minimized such that $\fwd{t}{\Phi_{\vct{\theta}}(\vct{y}_t, t)} = \fwd{t}{\xo}$, then we also have $$\fwd{t + \Delta t}{\Phi_{\vct{\theta}}(\vct{y}_t, t)} = \mathcal{G}_{t \rightarrow t + \Delta t}(\fwd{t}{\Phi_{\vct{\theta}}(\vct{y}_t, t)}) = \mathcal{G}_{t \rightarrow t + \Delta t}(\fwd{t}{\xo}) = \fwd{t + \Delta t}{\xo}. $$
However, this method may also accumulate large error if $\mathcal{A}_{t}$ changes rapidly close to $t$.

\textbf{Small look-back (SLB)-- }Combining the idea in SLA with LB yields the approximation 
\begin{equation}
	\fwd{t-\Delta t}{\vct{x}_0} -  \fwd{t}{\vct{x}_0} \approx  \Delta t \cdot \frac{\fwd{t}{\vct{x}_0} -  \fwd{t +\delta t}{\vct{x}_0}}{\delta t},
\end{equation}
where $0 < \delta t < \Delta t$. Using our model, the estimator of the incremental reconstruction takes the form  
 \begin{equation}\label{eq:supp_R_hat_slb}
	\hat{\mathcal{R}}^{SLB} (t, \Delta t; \vct{y}_t)=  \Delta t \cdot \frac{\fwd{t}{\Phi_{\vct{\theta}}(\vct{y}_t, t)} -  \fwd{t+ \delta t}{\Phi_{\vct{\theta}}(\vct{y}_t, t)}}{\delta t}.
\end{equation}
Compared with LB, we still have $\fwd{t+ \delta t}{\Phi_{\vct{\theta}}(\vct{y}_t, t)} = \fwd{t+ \delta t}{\xo}$ and the error due to first-order Taylor-approximation is reduced, however potentially higher than in case of SLA.

\textbf{Incremental Reconstruction Network -- }Finally, an additional model $\phi_{\vct{\theta}'}$ can be trained to directly approximate the incremental reconstruction, that is  $\phi_{\vct{\theta}'}(\yt, t) \approx 	\mathcal{R}(t, \Delta t; \vct{x}_0)$. All these approaches are interesting directions for future work.

\section{Comparison Methods}\label{apx:comparison_methods}
For all methods, hyperparameters are tuned based on first $100$ images of the folder \texttt{"00001"} for FFHQ and tested on the folder \texttt{"00000"}. For ImageNet experiments, we use the first samples of the first $100$ classes of ImageNet validation split to tune, last samples of each class as the test set.
\subsection{DPS}
We use the default value of 1000 NFEs for all tasks. We make no changes to the Gaussian blurring operator in the official source code. For inpainting, we copy our operator and apply it in the image input range $[0,1]$. The step size $\zeta'$ is tuned via grid search for each task separately based on LPIPS metric. The optimal values are as follows:
\begin{enumerate}
	\item FFHQ Deblurring: $\zeta' = 3.0$
	\item FFHQ Inpainting: $\zeta' = 2.0$
	\item ImageNet Deblurring: $\zeta' = 0.3$
	\item ImageNet Inpainting: $\zeta' = 3.0$
\end{enumerate}
As a side note, at the time of writing this paper, the official implementation of DPS\footnote{\url{https://github.com/DPS2022/diffusion-posterior-sampling}} adds the noise to the measurement after scaling it to the range $[-1,1]$. For the same noise standard deviation, the effect of the noise is halved as compared to applying in $[0,1]$ range. To compensate for this discrepancy, we set the noise std in the official code to $\sigma=0.1$ for all DPS experiments which is the same effective noise level as $\sigma=0.05$ for our experiments.
\subsection{DDRM}
We keep the default settings $\eta_B=1.0$, $\eta=0.85$ for all of the experiments and sample for $20$ NFEs with DDIM \citep{song2021denoising}. For the Gaussian deblurring task, the linear operator has been implemented via separable 1D convolutions as described in D.5 of DDRM \citep{kawar2022denoising}. We note that for blurring task, the operator is applied to the reflection padded input. For Gaussian inpainting task, we set the left and right singular vectors of the operator to be identity ($\textbf{U}=\textbf{V}=\textbf{I}$) and store the mask values as the singular values of the operator. For both tasks, operators are applied to the image in the $[-1,1]$ range.
\subsection{PnP-ADMM}
We take the implementation from the \texttt{scico} library. Specifically the code is modified from the sample notebook\footnote{\url{https://github.com/lanl/scico-data/blob/main/notebooks/superres_ppp_dncnn_admm.ipynb}}. We set the number of ADMM iterations to be \texttt{maxiter}=$12$ and tune the ADMM penalty parameter $\rho$ via grid search for each task based on LPIPS metric.  The values for each task are as follows: 
\begin{enumerate}
	\item FFHQ Deblurring: $\rho = 0.1$
	\item FFHQ Inpainting: $\rho = 0.4$
	\item ImageNet Deblurring: $\rho = 0.1$
	\item ImageNet Inpainting: $\rho = 0.4$
\end{enumerate}
The proximal mappings are done via pre-trained DnCNN denoiser with \texttt{17M} parameters.
\subsection{ADMM-TV}
We want to solve the following objective:
\begin{equation*}
	\argmin_{\vct{x}} \frac{1}{2} \| \vct{y} - \mathcal{A}_1(\vct{x})\|_2^2 + \lambda \|\textbf{D}\vct{x}\|_{2,1}
\end{equation*}
where $\vct{y}$ is the noisy degraded measurement, $\mathcal{A}_1(\cdot)$ refers to blurring/masking operator and $\textbf{D}$ is a finite difference operator. $\|\textbf{D}\vct{x}\|_{2,1}$ TV regularizes the prediction $\vct{x}$ and $\lambda$ controls the regularization strength. For a matrix $\textbf{A} \in \mathbb{R}^{m\times n}$, the matrix norm $\|.\|_{2,1}$ is defined as:
\begin{equation*}
	\|\textbf{A}\|_{2,1} = \sum_{i=1}^m \sqrt{\sum_{j=1}^n \textbf{A}_{ij}^2}
\end{equation*}
The implementation is taken from \texttt{scico} library where the code is based on the sample notebook\footnote{\url{https://github.com/lanl/scico-data/blob/main/notebooks/deconv_tv_padmm.ipynb}}. We note that for consistency, the blurring operator is applied to the reflection padded input. In addition to the penalty parameter $\rho$, we need to tune the regularization strength $\lambda$ in this problem. We tune the pairs $(\lambda,\rho)$ for each task via grid search based on LPIPS metric. Optimal values are as follows:
\begin{enumerate}
	\item FFHQ Deblurring: $(\lambda,\rho) = (0.007,0.8)$
	\item FFHQ Inpainting: $(\lambda,\rho) = (0.02,0.2)$
	\item ImageNet Deblurring: $(\lambda,\rho) = (0.007,0.5)$
	\item ImageNet Inpainting: $(\lambda,\rho) = (0.02,0.2)$
\end{enumerate}

\subsection{InDI}
In order to ablate the effect of degradation parametrization, we match the experimental setup as closely as possible to \methodname{} setting on CelebA-HQ. We train the same model as used for \methodname{} from scratch. As InDI does not leverage diffusion directly, we train on a weighted $\ell_2$ loss, where $w_t = \frac{1}{t^2 + \epsilon}$ instead of $1 / \sigma_t^2$-weighting in our method. We adjust the learning rate to account for the resulting difference in scale. We use our degradation scheduling method from \ref{alg:scheduling_simple} to schedule $t$. For inference, we set $\Delta t = 0.05$. 

\section{Further Reconstruction Samples}\label{apx:visual}
Here, we provide more samples from \methodname{} reconstructions on the test split of CelebA-HQ and ImageNet datasets. We visualize the uncertainty of samples via pixelwise standard deviation across $n=10$ generated samples. In experiments where the distortion peak is achieved via one-shot reconstruction, we omit the uncertainty map.
\begin{figure}[H]
	\centering
	\includegraphics[width=0.90\linewidth]{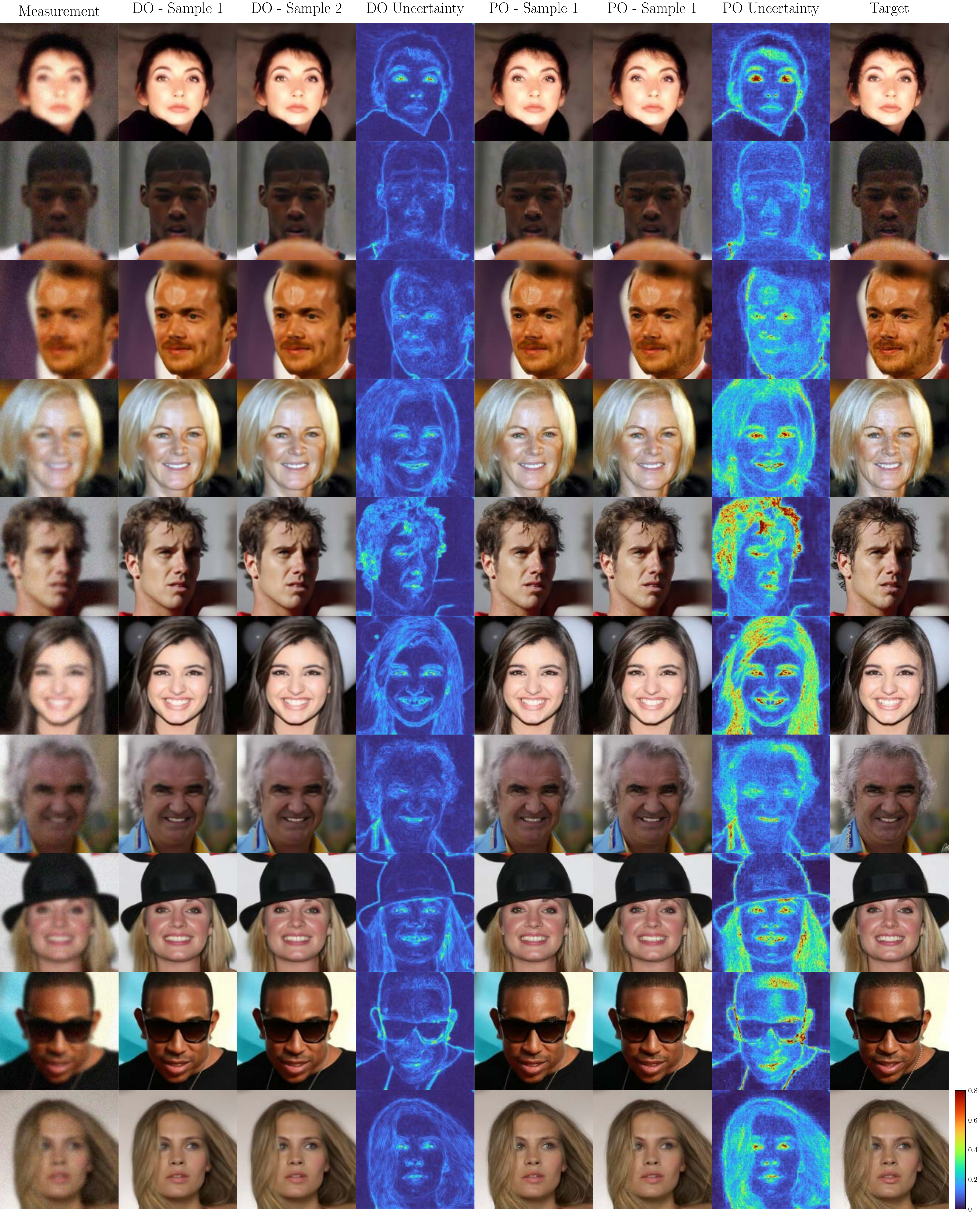}
	\caption{Distortion and Perception optimized deblurring results for the CelebA-HQ dataset (test split). Uncertainty is calculated over $n=10$ reconstructions from the same measurement.}
	\label{fig:celeba_blur_figure}
\end{figure}

\begin{figure}[t]
	\centering
	\includegraphics[width=0.90\linewidth]{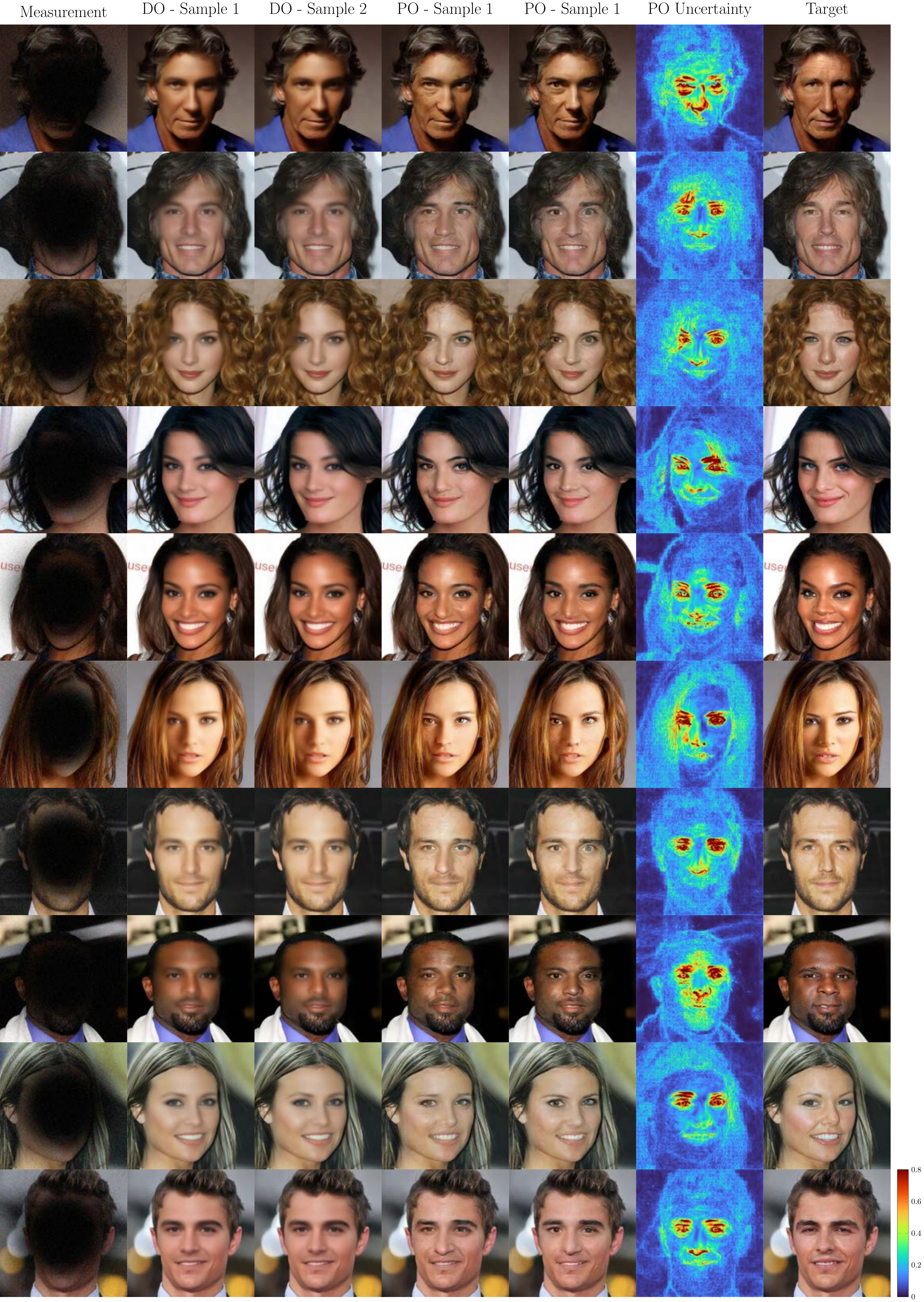}
	\caption{Distortion and Perception optimized inpainting results for the CelebA-HQ dataset (test split). Uncertainty is calculated over $n=10$ reconstructions from the same measurement. For distortion optimized runs, images are generated in one-shot, hence we don't provide uncertainty maps.}
	\label{fig:celeba_inpainting_figure}
\end{figure}

\begin{figure}[t]
	\centering
	\includegraphics[width=0.95\linewidth]{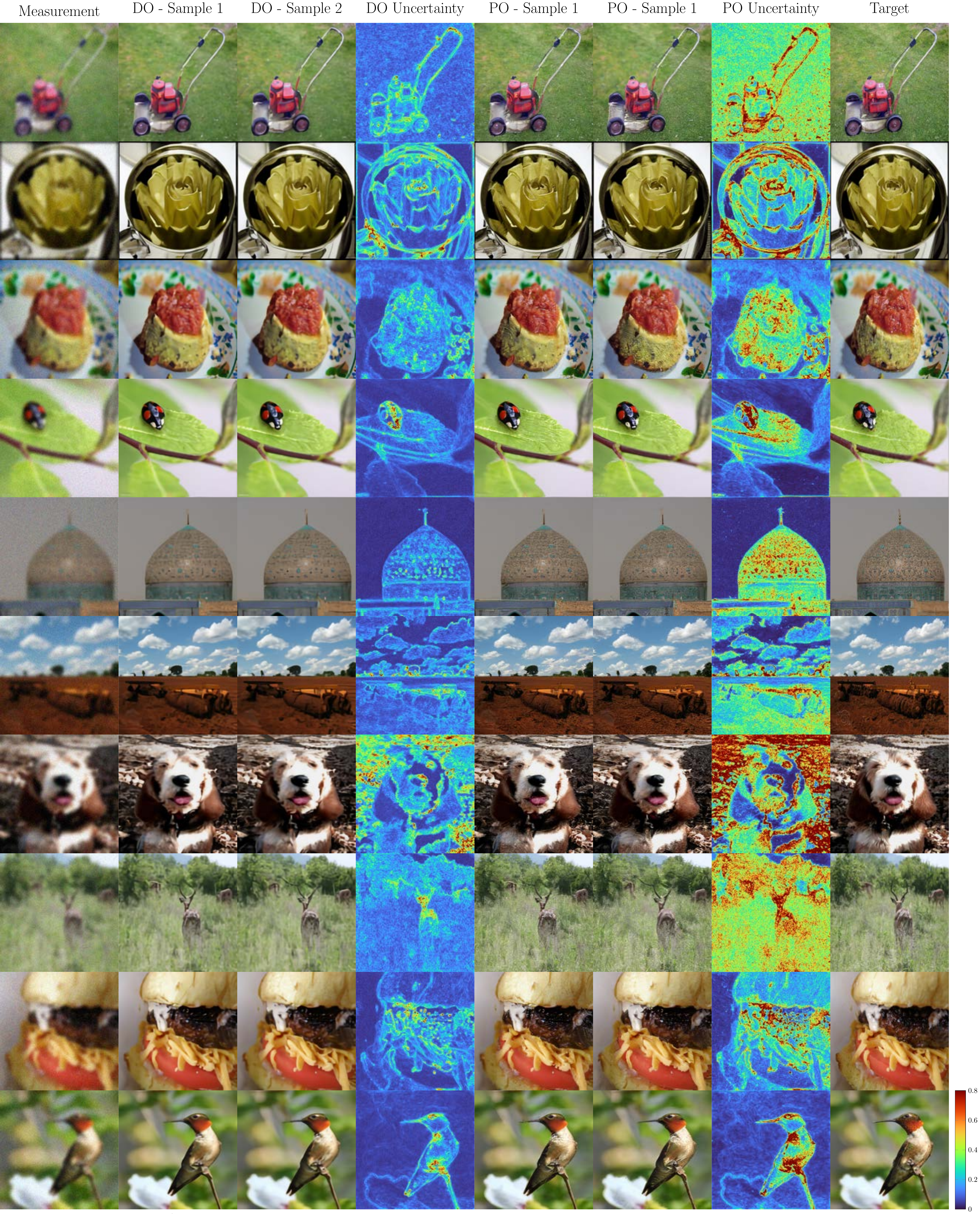}
	\caption{Distortion and Perception optimized deblurring results for the ImageNet dataset (test split). Uncertainty is calculated over $n=10$ reconstructions from the same measurement.}
	\label{fig:imagenet_blur_figure}
\end{figure}

\begin{figure}[t]
	\centering
	\includegraphics[width=0.85\linewidth]{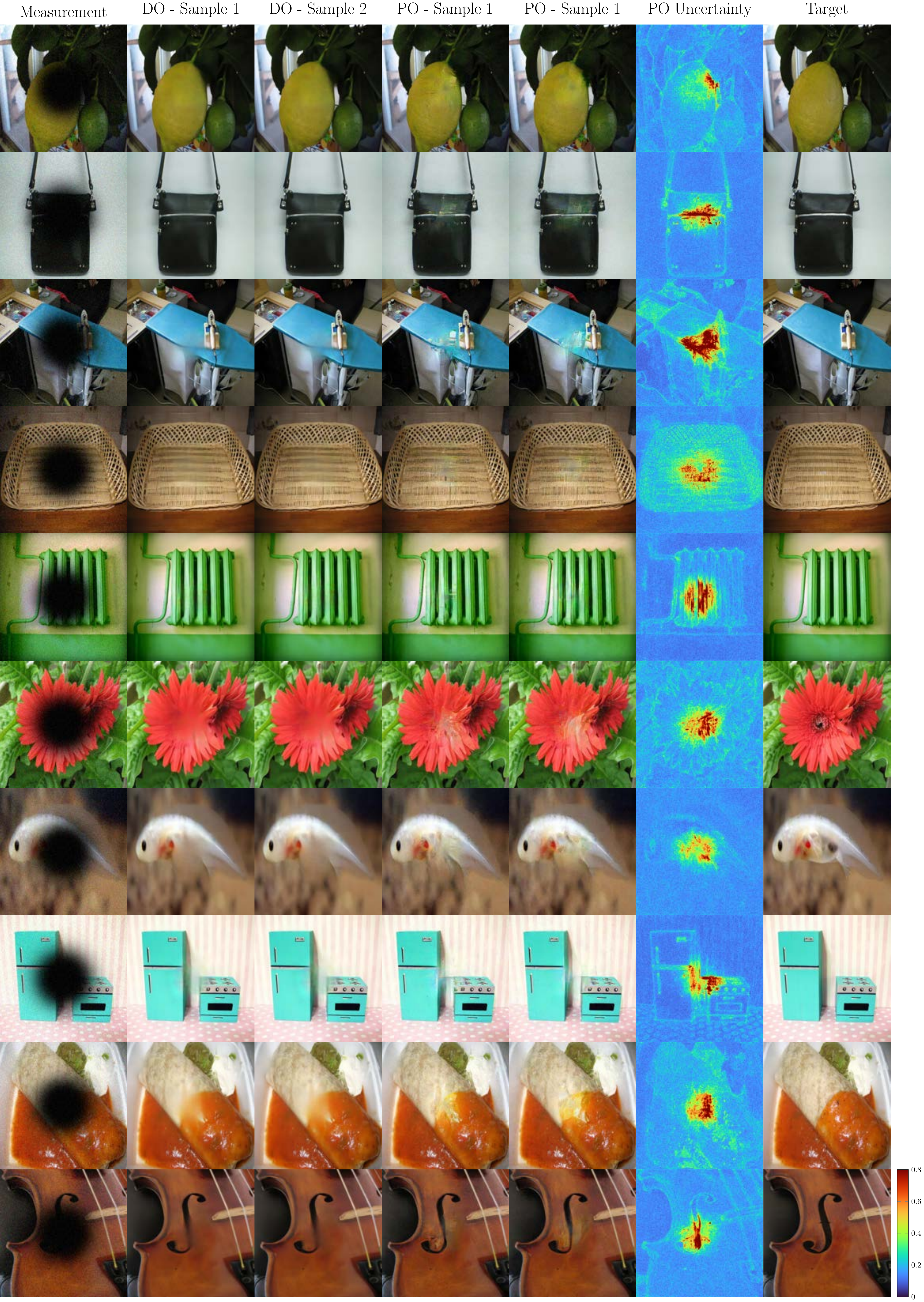}
	\caption{Distortion and Perception optimized inpainting results for the ImageNet dataset (test split). Uncertainty is calculated over $n=10$ reconstructions from the same measurement. For distortion optimized runs, images are generated in one-shot, hence we don't provide uncertainty maps.}
	\label{fig:imagenet_inpainting_figure}
\end{figure}

\end{document}